\providecommand{\tabularnewline}{\\}
\providecommand{\algorithmname}{Algorithm}
\theoremstyle{plain}
\newtheorem{thm}{Theorem}[section]
\newtheorem{lem}[thm]{Lemma}
\theoremstyle{definition}
\newtheorem{defn}{Definition}[section]
\theoremstyle{remark}
\renewcommand\expandafter\subsection\expandafter{%
    \expandafter\@fb@secFB\subsection
  }%
\algrenewcommand{\algorithmiccomment}[1]{$\triangleright$ #1}
\algnewcommand{\StructComment}[1]{$\triangleright$ \{#1\}}
\algnewcommand{\LineComment}[1]{\State \(\triangleright\) #1}
\renewcommand{\ALG@beginalgorithmic}{\sffamily}
\def\therule{\makebox[\algorithmicindent][l]{\hspace*{.5em}\vrule height .75\baselineskip depth .25\baselineskip}}%
\newtoks\therules% Contains rules
\def\appendto#1#2{\expandafter#1\expandafter{\the#1#2}}% Append to token list
\def\gobblefirst#1{% Remove (first) from token list
  #1\expandafter\expandafter\expandafter{\expandafter\@gobble\the#1}}%
\def\LState{\State\unskip\the\therules}% New line-state
\def\pushindent{\appendto\therules\therule}%
\def\popindent{\gobblefirst\therules}%
\def\printindent{\unskip\the\therules}%
\def\printandpush{\printindent\pushindent}%
\def\popandprint{\popindent\printindent}%
\algrenewcommand\algorithmicindent{1.0em}
\algnewcommand\algorithmicprint{\textbf{print }}
\algnewcommand{\Print}[1]{\algorithmicprint #1}
\algnewcommand\algorithmicand{\textbf{and}}
\algnewcommand{\AND}{\algorithmicand}
\algnewcommand\algorithmicswitch{\textbf{switch}}
\algnewcommand\algorithmiccase{\textbf{case}}
\algnewcommand\algorithmicassert{\texttt{assert}}
\algnewcommand\Assert[1]{\State \algorithmicassert(#1)}%
\algrenewcommand\Return{\State \algorithmicreturn{} }%
\newcommand{\HCC}{\mathcal{H}}
\newcommand{\KCC}{\mathcal{K}}
\newcommand{\MCC}{\mathcal{M}}
\newcommand{\NCC}{\mathcal{N}}
\newcommand{\VCC}{\mathcal{V}}
\newcommand{\HKCC}{\mathcal{H_K}}
\newcommand{\OOHKCC}{\overline{\mathcal{H_K}}}
\newcommand{\BBR}{\mathbb{R}}
\crefname{appsec}{Appendix}{Appendices}
\begin{document}

\title{Approximation Algorithms for Max-Morse Matching}

\author[1]{Abhishek Rathod\thanks{abhishek@jcrathod.in}} 
\author[1]{Talha Bin Masood\thanks{tbmasood@csa.iisc.ernet.in}} 
\author[1]{Vijay Natarajan\thanks{vijayn@csa.iisc.ernet.in}} 
\affil[1]{Department of Computer Science and Automation, Indian Institute of Science, Bangalore, India} 
\renewcommand\Authands{ and }

\maketitle 

\begin{abstract}
In this paper, we prove that the Max-Morse Matching Problem is approximable, thus resolving an open problem posed by Joswig and Pfetsch~\cite{JP06}. For $D$-dimensional simplicial complexes, we obtain a $\nicefrac{(D+1)}{(D^2+D+1)}$-factor approximation ratio using a simple edge reorientation algorithm that removes cycles. We also describe a $\nicefrac{2}{D}$-factor approximation algorithm for simplicial manifolds by processing the simplices in increasing order of dimension. This algorithm may also be applied to non-manifolds resulting in a $\nicefrac{1}{(D+1)}$-factor approximation ratio. One application of these algorithms is towards efficient homology computation of simplicial complexes. Experiments using a prototype implementation on several datasets indicate that the algorithm computes near optimal results.  
\end{abstract}

%\subjclass{F.2.2 Geometric Problems and Computations, G.2.2 Graph Algorithms}
%\keywords{Discrete Morse Theory, Computational Topology, Approximation Algorithms, Homology Computation, Graph Matching}

\section{Introduction \label{sec:Introduction}}

Discrete Morse theory is a combinatorial analogue of Morse theory
that is applicable to cell complexes~\cite{Fo98}. It has become a
popular tool in computational topology and visualization communities~\cite{Caz03,shiva2012}
and is actively studied in algebraic, geometric, and topological combinatorics~\cite{Koz07,Mil07}. 

The idea of using discrete Morse theory to speedup homology~\cite{HMMNWJD10},
persistent homology~\cite{MN13} and multidimensional persistence~\cite{Lan16}
computations hinges on the fact that discrete Morse theory helps reduce
the problem of computing homology groups on an input simplicial complex
to computing homology groups on a collapsed cell complex. Ideally,
if one were to compute a discrete gradient vector field with minimum
number of critical simplices (unmatched vertices in the Hasse graph)
or maximum number of regular simplices (matched Hasse graph vertices),
then the time required for computing homology over the collapsed cell
complex would be the smallest. However, finding a vector field with
maximum number of gradient pairs is an NP-hard problem as observed
by Lewiner~\cite{Le02} and Joswig et.al.~\cite{JP06} by showing
a reduction from \emph{the collapsibility problem} introduced by E\v gecio\v glu
and Gonzalez in \cite{EG96}. We study the problem of efficiently
computing an approximation to the maximum number of gradient pairs
in a discrete gradient vector field. 

Computing the homology groups has several applications, particularly,
in material sciences, imaging, pattern classification and computer
assisted proofs in dynamics~\cite{KMM10}. More recently, homology
and persistent homology have been appraised to be a more widely applicable
computational invariant of topological spaces, arising from practical
data sets of interest~\cite{Ca09}. An approximately optimal Morse
matching computed using the algorithms described in this paper may
be used towards efficient computation of homology. One of the primary
motivations for us to initiate the study of approximation algorithms
for discrete Morse theory was that a previous study~\cite{HMMNWJD10}
involving discrete Morse theory in homology computation reported noteworthy
speedup over existing methods. Their method used a modification of
the coreduction heuristic~\cite{MB09} to construct discrete Morse
functions. We start with a twin goal in mind~--~first to introduce
rigour into the study by developing algorithms with approximation
bounds and then to have a practical implementation that achieves nearly
optimal solutions.

\subsection{Max Morse Matching Problem}

The Max Morse Matching Problem (\textsf{MMMP}) can be described as
follows: Consider the Hasse graph $\mathcal{H_{K}}$ of a simplicial
complex $\mathcal{K}$ whose edges are all directed from a simplex
to its lower dimensional facets. Associate a matching induced reorientation
to $\mathcal{H_{K}}$ such that the resulting oriented graph $\overline{\mathcal{H_{K}}}$
is acyclic. The goal is to maximize the cardinality of matched (regular)
nodes. Equivalently, the goal is to maximize the number of gradient
pairs. The approximate version of Max Morse Matching Problem seeks
an algorithm that computes a Morse Matching whose cardinality is within
a factor $\alpha$ of the optimal solution for every instance of the
problem.

\subsection{Prior work}

Joswig et al.~\cite{JP06} established the NP-completeness of Morse
Matching Problem. They also posed the approximability of Max Morse
Matching as an open problem  pg.~6~Sec.~4~\cite{JP06}. Several followup
efforts seek optimality of Morse matchings either by restricting the
problem to $2$-manifolds or by applying heuristics~\cite{AFV12,BLW11,HMMNWJD10,He05,JP06,Le03b,Le03a}.
Recently, Burton et al.~\cite{BLPS13} developed an FPT algorithm
for designing optimal Morse functions.

\subsection{Summary of results}

We describe a $\ensuremath{\nicefrac{(D+1)}{(D^{2}+D+1)}}$-factor
approximation algorithm for Max Morse Matching Problem on $D$-dimensional
simplicial complexes. This algorithm uses maximum cardinality bipartite
matching on the Hasse graph $\mathcal{H_{K}}$ to orient it. We then
use a BFS-like traversal of the oriented Hasse graph $\overline{\mathcal{H_{K}}}$
to classify matching edges as either forward edges if they do not
introduce cycles or backward edges if they do. We then use a counting
argument to prove an approximation bound that holds for manifold as
well as non-manifold complexes.

For simplicial manifolds, we propose two approximation algorithms
that exploit the multipartite structure of the Hasse graph. The first
approximation algorithm provides a ratio of $\nicefrac{2}{(D+1)}$.
The ratio is improved to $\nicefrac{2}{D}$ via a refinement that
specifies the order in which the graph is processed. Both algorithms
process simplices of lowest dimension first and then move onto increasingly
higher dimensions. Every $d$-dimensional simplex is first given
the opportunity to match to a $(d-1)$-dimensional simplex. If unsuccessful,
it is then given the option of matching to a $(d+1)$-dimensional
simplex. Furthermore, both algorithms employ optimal algorithms for
designing gradient fields for $0$-dimensional and $D$-dimensional
simplices (in case of manifolds). The refinement processes subgraphs
with small vertex degree with higher priority and hence achieves the
better approximation ratio.

We provide evidence of practical utility of our algorithms through
an extensive series of computational experiments.

\section{Background \label{sec:Background}}

\subsection{Discrete Morse theory \label{sub:Discrete-Morse-Theory}}

Our focus in this paper is limited to simplicial complexes and hence
we restrict the discussion of Forman's Morse theory below to simplicial
complexes. Please refer to~\cite{Fo02} for a compelling expository
introduction. 

Let $\KCC$ be a simplicial complex and let $\sigma^{d},\tau^{d-1}$
be simplices\footnote{An d-dimensional simplex $\sigma^{d}$ may be denoted either as $\sigma$ or $\sigma^{d}$ depending on whether we wish to emphasize its dimension.}
of $\KCC$. The relation $\prec$ is defined as: $\tau\prec\sigma\Leftrightarrow\{\tau\subset\sigma\,\,\textnormal{and}\,\,\dim\tau=\dim\sigma-1\}$.
Alternatively, we say that $\tau$ is the \emph{facet} of $\sigma$
and $\sigma$ is a \emph{cofacet} of $\tau$. The boundary $bd(\sigma)$
and the coboundary $cbd(\sigma)$ of a simplex are defined as: $bd(\sigma)=\{\tau|\tau\prec\sigma\}$
and $cbd(\sigma)=\{\rho|\sigma\prec\rho\}$. A function  $f:\KCC\rightarrow\BBR$
is called a \emph{discrete Morse function} if it assigns higher values
to cofacets, with at most one exception at each simplex. Specifically,
a function $f:\KCC\rightarrow\BBR$ is a discrete Morse function if
for every $\sigma\in\KCC$, $\NCC_{1}(\sigma)=\#\{\rho\in cbd(\sigma)|f(\rho)\leq f(\sigma)\}\leq1$
and $\NCC_{2}(\sigma)=\#\{\tau\in bd(\sigma)|f(\tau)\geq f(\sigma)\}\leq1$.
If $\NCC_{1}(\sigma)=\NCC_{2}(\sigma)=0$ then the simplex $\sigma$
is \emph{critical}, else it is \emph{regular}. 

A pair of simplices $\left\langle \alpha^{m},\beta^{(m+1)}\right\rangle $
with $\alpha\prec\beta$ and $f(\alpha)\geq f(\beta)$ determines
a \emph{gradient pair}. Each simplex must occur in at most one gradient
pair of $\VCC$. A \emph{discrete gradient vector field} $\VCC$ corresponding
to a discrete Morse function $f$ is a collection of simplicial pairs
$\left\langle \alpha^{(p)},\beta^{(p+1)}\right\rangle $ such that
$\left\langle \alpha^{(p)},\beta^{(p+1)}\right\rangle \in\VCC$ if
and only if $f(\alpha)\ge f(\beta)$. 

A simplicial sequence $\{\sigma_{0}^{m}\!\!,\tau_{0}^{m+1}\!\!,\sigma_{1}^{m}\!\!,\tau_{1}^{m+1}\!\!,\dots\sigma_{q}^{m}\!\!,\tau_{q}^{m+1}\!\!,\sigma_{q+1}^{m}\}$
consisting of distinct simplices $(\sigma_{i}\prec\tau_{i})\in\mathcal{V}$
and $\sigma_{i+1}\prec\tau_{i}$ is called a \emph{gradient path}
of $f$.

\subsection{The Hasse graph of a simplicial complex\label{sub:The-Hasse-Graph}}

The \emph{Hasse graph} $\mathcal{H_{K}}$ of a simplicial complex
$\KCC$ is an undirected graph whose vertices are in one-to-one
correspondence with the simplices of the complex. To every simplex
$\sigma_{\KCC}^{d}\in\KCC$ associate a vertex $\sigma_{\HCC}^{d}\in\HKCC$.
The edges in the Hasse graph are determined by facet incidences. $\HKCC$
contains an edge between a vertex that represents\footnote{From here on, for the sake of brevity, while referring to the vertex in $\HKCC$ representing simplex $\sigma^{d}$, we drop the suffix $\mathcal{H}$ from $\sigma_{\HCC}^{d}$. i.e. Instead of referring to it as  vertex $\sigma_{\HCC}^{d}$ we refer to it as simplex $\sigma^{d}$.}
simplex $\sigma^{d}$ and a vertex that represent simplex $\tau^{d-1}$
if and only if $\tau\prec\sigma$. 

We refer to the set of vertices in $\HKCC$ representing $d$-dimensional
simplices as the $d$\emph{-level} of the Hasse graph. The $d$\emph{\emph{-}interface}
of $\HKCC$ is the subgraph consisting of vertices in the $d$-level
and the $(d-1)$-level of of $\HKCC$ together with all the edges
connecting these two levels.

The Hasse graph $\mathcal{H_{K}}$ of a $D$-dimensional complex
$\mathcal{K}$ has $(D+1)$ levels and $D$ interfaces. To understand
Morse theory in terms of Hasse graph, one needs to assign orientations
to it. 

\begin{defn}[Oriented Hasse Graph, Up-edges, Down-edges] If we assign orientations to all edges of Hasse graph $\ensuremath{\HKCC}$, we obtain an \emph{oriented Hasse graph} denoted by $\OOHKCC$. In graph $\OOHKCC$, for any two simplices $\sigma^{d},\tau^{d-1}$, an edge $\tau^{d-1}\rightarrow\sigma^{d}\in\OOHKCC$ going from lower dimensional simplex $\tau^{d-1}$ to a higher dimensional simplex $\sigma^{d}$ is called an \emph{up-edge}. The edge $\sigma^{d}\rightarrow\tau^{d-1}\in\OOHKCC$ is called a \emph{down-edge}.
\end{defn}

If we orient $\ensuremath{\HKCC}$ in such a way that all edges are
down-edges then this orientation corresponds to the trivial gradient
vector field on complex $\mathcal{K}$, for which all simplices are
critical. We call this the \emph{default orientation} on $\HKCC$. 

\textbf{Matching based reorientation.} Start with the default orientation
on $\HKCC$. Associate a matching $\MCC$ to $\HKCC$. If an edge
$\langle\tau^{d-1},\sigma^{d}\rangle\in\MCC$ then \emph{reverse}
the orientation of that edge to $\tau^{d-1}\rightarrow\sigma^{d}\in\OOHKCC$.
We require the matching induced reorientation to be such that the
graph $\overline{\HCC_{\KCC}}$ is a directed acyclic graph. Chari~\cite{Ch00}
first observed that every matching based orientation of $\HKCC$ that
leaves the graph $\overline{\HKCC}$ acyclic corresponds to a unique
gradient vector field on complex $\mathcal{K}$. For such a matching
based acyclic orientation of the graph, every up-edge in the oriented
Hasse graph corresponds to a gradient pair and every unmatched vertex
corresponds to a critical simplex of the gradient vector field. Not
every matching based orientation of $\HKCC$ will leave $\overline{\HKCC}$
acyclic. Figure~\ref{fig:HasseGraph} shows a simplicial complex and
a matching based reorientation of the Hasse graph. We can now define
the Max Morse Matching Problem more formally. 

\begin{figure}
\centering

\includegraphics[width=12cm]{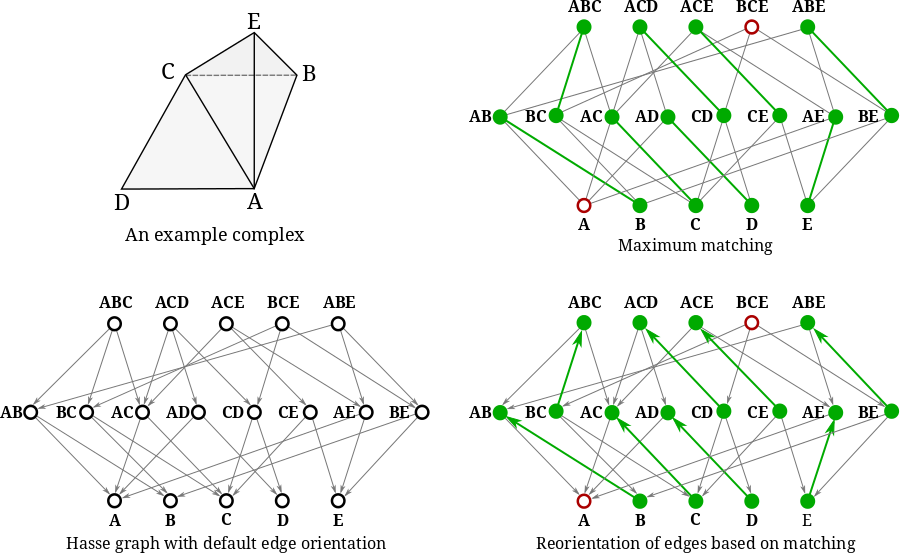}

\caption{Consider a simplicial complex with five triangles (ABC, ACD, ACE,
BCE, ABE). We obtain the oriented Hasse graph for a simplicial complex
(left) and its matching induced orientation (right). Two simplices
are critical (hollow) and others are regular (filled). }

\label{fig:HasseGraph}
\end{figure}

\begin{defn}[Max Morse Matching Problem] A discrete gradient vector field that maximizes the number of gradient pairs over the set of all discrete gradient vector fields on a simplicial complex $\KCC$ is known as a Maximum Morse Matching on $\KCC$. The Max Morse Matching Problem is to find such an  optimal Morse Matching. In terms of Hasse graph, the Max Morse Matching Problem may be defined as the the maximum cardinality of an acyclic matching.  
\end{defn}

We now discuss a few properties of cycles and paths in a matching
based orientation of $\HKCC$. Matching based orientations have the
interesting property that all cycles are restricted to a fixed interface
in the oriented Hasse graph. In other words, if a cycle were to span
multiple interfaces in the Hasse graph, then it will violate the condition
that the orientation is matching based. Similarly, all edges in a
given path belong to a unique interface of the Hasse graph. Also,
in a matching based orientation, source nodes and sink nodes in the
$d$-interface are not involved in any cycles in the $d$-interface.

\begin{defn}[Source and Sink Nodes] A simplex $\sigma^{d}$ is a \emph{source node} for the $d$-interface if it has only outgoing edges to $d-1$ simplices. If in addition, simplex $\sigma^{d}$ is matched to a $(d+1)$-simplex then it is as a \emph{regular source node} for the $d$-interface, else it is a \emph{critical source node}. Similarly, a simplex $\tau^{d-1}$ is a \emph{sink node} for the $d$-interface if it has only incoming edges from $d$ simplices. If $\tau^{d-1}$ is matched to a $(d-2)$-simplex then it is known as a \emph{regular sink node} else it is known as a \emph{critical sink node}.   
\end{defn}

\section{A $\ensuremath{\nicefrac{(D+1)}{(D^{2}+D+1)}}$-factor approximation
algorithm for simplicial complexes \label{sec:A--factor-Approximation}}

We now describe an approximation algorithm for the Max Morse Matching
Problem that is applicable to simplicial complexes. The idea is to
first compute a maximum cardinality matching and, in a subsequent
step, remove any cycles that maybe introduced due to the reorientation.
The key steps are outlined in Algorithm~\ref{alg:frontier}. We begin
with notes on notations and definitions. 

\textbf{Notation.} When we denote an up-edge as $\chi(\alpha,\beta)$,
we mean to say that it is an edge connecting simplex $\alpha^{d-1}$
to simplex $\beta^{d}$ and is labelled as $\chi$. We may write it
either as $\chi(\alpha,\beta)$ or $\chi$ depending on whether we
want to emphasize vertices incident on $\chi$. The corresponding
down-edge with reversed orientation is denoted as $\overline{\chi}$
or $\overline{\chi}(\beta,\alpha)$.

\begin{defn}[Leading up-edges of an up-edge] In an oriented Hasse graph $\OOHKCC$, if we have an up-edge $\chi_1(\alpha_1,\beta_1)$ followed by a down-edge $\overline{\chi_2}(\beta_1,\alpha_2)$ followed by up-edge $\chi_3(\alpha_2,\beta_2)$ we say that $\chi_3$ is a leading up-edge of $\chi_1$. 
\end{defn}

\begin{defn}[Facet-edges of a simplex] In an oriented Hasse graph $\OOHKCC$, for a simplex $\sigma^{d}$ (where $d\geq 1$), the set of oriented edges between $\sigma^{d}$ to $(d-1)$-simplices incident on $\sigma^{d}$ (along with respective orientations) are known as the facet-edges of $\sigma^{d}$.     
\end{defn} 

\begin{algorithm}[h]
\caption{The Frontier Edges Algorithm}

\begin{algorithmic}[1]
\Require{Simplicial complex $\mathcal{K}$}
\Ensure{Graph $\mathcal{H_V}$, an acyclic matching based orientation of Hasse graph $\HCC_\KCC$ of $\KCC$.}
\LState{Construct Hasse graph $\mathcal{H_{K}}$ of $\KCC$.}
\LState{Perform maximum-cardinality graph matching on $\mathcal{H_{K}}$.}
\LState{Let $\OOHKCC$ denote the matching induced reorientation of $\HKCC$ and $\mathcal{E}(\OOHKCC)$ its edge set.}
\LState{Initialize the edge set of $\HCC_\VCC$, $\mathcal{E}(\mathcal{H_{V}})\leftarrow\emptyset$.}
\While{$\exists\chi\in\mathcal{E}(\OOHKCC)$ such that $\chi$ is an up-edge}
	\LState{$\mathcal{C}_{\chi}\leftarrow$\textbf{BFSComponent}($\OOHKCC,\chi$)}
	\LState{$\mathcal{E}(\OOHKCC)\leftarrow\mathcal{E}(\OOHKCC)\setminus\mathcal{C}_{\chi}$}
	\LState{$\mathcal{E}(\mathcal{H_{V}})\leftarrow\mathcal{E}(\mathcal{H_{V}})\cup\mathcal{C}_{\chi}$}
\EndWhile  
\LState{$\mathcal{E}(\mathcal{H_{V}})\leftarrow\mathcal{E}(\mathcal{H_{V}})\cup\mathcal{E}(\OOHKCC)$}\label{lst:line:regsource}

%\Statex
\vspace{1.5mm}

\Procedure{BFSComponent}{$\OOHKCC,\chi$}
	\LState{$\mathcal{C}\leftarrow\emptyset$}
	\LState{Initialize the queue $\mathcal{Q}\leftarrow\emptyset$}
	\LState{$\textbf{enqueue}(\mathcal{Q},\chi)$}
	\While{$\mathcal{Q}$ is non-empty}
		\LState{$\chi_{0}(\alpha_{0},\beta_{0})\leftarrow \textbf{dequeue}(\mathcal{Q})$}
		\LState{$\mathcal{C}\leftarrow\mathcal{C}\cup\textbf{facetEdges}(\beta_{0})$} 
		\For{every leading up-edge $\chi_{i}(\alpha_{i},\beta_{i})$  of $\chi_{0}$}
			\If{the graph induced by edges in $(\mathcal{C} \cup \textbf{facetEdges}(\beta_{i}))$ has cycles }
				\LState{Reverse orientation of $\chi_{i}$ in graph $\OOHKCC$}
				\LState{$\mathcal{C}\leftarrow\mathcal{C}\cup\textbf{facetEdges}(\beta_{i})$}
			\Else
				\LState{\textbf{enqueue}($\mathcal{Q},\chi_{i}$)}
			\EndIf
		\EndFor
	\EndWhile
	\LState{\textbf{return} $\mathcal{C}$}
\EndProcedure
\end{algorithmic}

\label{alg:frontier}
\end{algorithm}

Given a Hasse graph $\mathcal{H_{K}}$ on complex $\mathcal{K}$,
Algorithm~\ref{alg:frontier} begins by computing maximum cardinality
graph matching on graph $\mathcal{H_{K}}$ and then uses this matching
to induce an orientation on $\mathcal{H_{K}}$. Let $\OOHKCC$ denote
the oriented Hasse graph based on graph matching and $\mathcal{H_{V}}$
denote the output graph. While there exists an up-edge $\chi$ in
$\OOHKCC$, we make $\chi$ a \emph{seed\emph{-}edge} and use it
as a starting point for a BFS-like traversal on graph $\OOHKCC$.
This traversal is done using procedure \textbf{BFSComponent}() which
returns a set of edges $\mathcal{C}_{\chi}$. The \emph{edge\emph{-}component
$\mathcal{C}_{\chi}$} of a seed edge $\chi$ is the set of edges
discovered in the BFS-like traversal of graph $\OOHKCC$, with $\chi$
as the start edge. Each time, we discover a new edge-component, we
delete it from $\OOHKCC$ and add it to $\mathcal{H_{V}}$. We exit
the while loop when all up-edges are exhausted. 

If a simplex $\sigma^{d}$ is either a critical node or a regular
source node, then its facet-edges are not reachable in the BFS traversal
through any of the up-edges in $\OOHKCC$. In a final step, we include
all remaining edges from $\OOHKCC$ to $\mathcal{H_{V}}$.

The procedure \textbf{BFSComponent}() computes the component edges
by processing edges from the queue one at a time. Let $\chi_{0}(\alpha_{0},\beta_{0})$
be the edge at the top of the queue. We add all the facet-edges of
simplex $\beta_{0}$ to the edge-component $\mathcal{C}$. We now
examine the leading up-edges of $\chi_{0}$. If $\chi_{i}(\alpha_{i},\beta_{i})$
is a leading up-edge of $\chi_{0}$ then we check if the addition
of facet-edges of simplex $\beta_{i}$ to $\mathcal{C}$ creates
cycles. If it does then we classify $\chi_{i}$ as a \emph{backward
edge}, reverse the orientation of $\chi_{i}$ and add the facet-edges
of $\beta_{i}$ to $\mathcal{C}$. If this addition does not introduce
cycles, then we classify $\chi_{i}$ as a \emph{forward edge} and
enqueue it in the queue of up-edges. Please refer to Figure~\ref{fig:forwardbackward}.
Enqueuing $\chi_{i}$ guarantees that at some stage when $\chi_{i}$
gets dequeued, we will end up adding facet-edges of simplex $\beta_{i}$
to $\mathcal{C}$. When the queue is exhausted, $\mathcal{C}$ contains
the entire edge-component of some seed-edge.

\selectlanguage{english}%
\begin{figure}
\centering

\includegraphics[width=8cm]{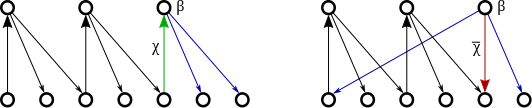}

\caption{Two cases in BFSComponent(). Left: A forward edge $\chi$ is identified.
The edge $\chi$ along with the down-edges incident on $\beta$ are
added to the edge\selectlanguage{american}%
-\selectlanguage{english}%
component. Right: A backward edge $\chi$ is identified. The edge
$\bar{\chi}$ along with the down-edges incident on $\beta$ are
added to the edge-component. }

\label{fig:forwardbackward}
\end{figure}

\selectlanguage{american}%
We first prove an acyclicity lemma on edge\selectlanguage{english}%
-\selectlanguage{american}%
components returned by procedure \textbf{BFSComponents}()\textbf{
}in Algorithm 1\textbf{.}

\begin{lem} \label{lem:nocycomp} The graph induced by edges in an edge-component is acyclic.
\end{lem}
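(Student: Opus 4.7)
The plan is to prove the lemma by induction on the iterations of the while loop of \textbf{BFSComponent}(), maintaining the invariant that the edge set $\mathcal{C}$ is acyclic throughout. The base case $\mathcal{C}=\emptyset$ is trivial, and I would first set up two structural observations that drive the argument. The first is that the BFS stays confined to a single $d$-interface of $\OOHKCC$, where $d=\dim\beta_{\text{seed}}$: leading up-edges always connect a $(d{-}1)$-simplex to a $d$-simplex, so every facet-edge added to $\mathcal{C}$ lies within this interface. Any directed cycle in $\mathcal{C}$ is therefore a bipartite cycle alternating down-edges and up-edges between the two levels.

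Combining this with the matching property, I would then argue that each $d$-simplex $\beta$ has at most one up-edge among its facet-edges (its matching edge, if any), so any cycle through $\beta$ must use $\beta$'s unique matching up-edge. Hence only \emph{forward} $\beta$'s, whose matching edge remains oriented as an up-edge in $\mathcal{C}$, can participate in a cycle.

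With these in hand, the inductive step splits into the two cases of the for-loop over leading up-edges $\chi_i(\alpha_i,\beta_i)$. In the \emph{backward} case, the algorithm has explicitly detected that $\mathcal{C}\cup\text{facetEdges}(\beta_i)$ would contain a cycle if $\chi_i$ were left as an up-edge; after reversing $\chi_i$, the simplex $\beta_i$ has no up-edge among its facet-edges in the updated $\mathcal{C}$, and by the previous observation no cycle can pass through $\beta_i$. Since $\text{facetEdges}(\beta_i)$ with the reversed $\chi_i$ amounts to attaching a star of down-edges rooted at a vertex with no incoming edge in $\mathcal{C}$, acyclicity is preserved. In the \emph{forward} case, the explicit check certifies acyclicity at enqueue time and nothing is added to $\mathcal{C}$ at that moment.

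The main obstacle, as I see it, is bridging the gap between an edge's enqueue time and its eventual dequeue: when a forward $\chi_i$ is finally dequeued and $\text{facetEdges}(\beta_i)$ is appended, $\mathcal{C}$ has grown through intermediate backward reversals and dequeues of other forward edges. To close this gap I would strengthen the inductive hypothesis: throughout the procedure, for every up-edge $\chi\in\mathcal{Q}$ with endpoint $\beta$, the graph $\mathcal{C}\cup\text{facetEdges}(\beta)$ is still acyclic. Maintenance of this stronger invariant uses the structural observations above: backward additions introduce no incoming up-edges into any $d$-simplex and so cannot close a cycle through a pending $\beta$; and for forward dequeues, a hypothetical new cycle through some pending $\beta$ would, by tracing matching up-edges backwards, force some edge of the cycle to have failed its own enqueue-time cycle check—contradicting the fact that it was ever enqueued. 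Applying this to $\chi_i$ at its dequeue time, $\mathcal{C}\cup\text{facetEdges}(\beta_i)$ is acyclic, and the induction goes through.
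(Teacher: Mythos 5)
Your structural observations are sound, and you have put your finger on exactly the right difficulty: the algorithm performs its cycle check when a leading up-edge is \emph{enqueued}, but the facet-edges of $\beta_i$ are only added to $\mathcal{C}$ when $\chi_i$ is later \emph{dequeued}, by which time $\mathcal{C}$ has grown. The paper's own proof does not engage with this at all --- it simply asserts that a forward edge ``does not create a cycle with up-edges that were included prior to it'' and then argues that backward reversals and previously visited $(d{-}1)$-simplices act as sinks. So your diagnosis of where the work lies is more honest than the published argument. The problem is that your repair does not close the gap. The maintenance step for forward dequeues rests on the claim that any new cycle through a pending $\beta$ would ``force some edge of the cycle to have failed its own enqueue-time cycle check.'' That implication is not justified, and it is precisely what fails: for an edge's enqueue-time check to detect the cycle, \emph{all} the other up-edges of that cycle must already have been dequeued (so that their facet-edges are in $\mathcal{C}$). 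It is entirely consistent with FIFO order that several matched pairs $\langle\alpha_1,\beta_1\rangle,\langle\alpha_2,\beta_2\rangle,\langle\alpha_3,\beta_3\rangle$ forming a potential cycle (e.g.\ the three triangles $uvw$, $vwz$, $uvz$ around a vertex $v$, each matched to its edge through $v$) are each discovered as leading up-edges of three \emph{other} triangles and sit in the queue simultaneously. Each one then passes its enqueue-time check because the cycle is still incomplete at that moment, and the cycle closes only when the last of the three is dequeued and its facet-edges are added unconditionally. Your strengthened invariant (``for every $\chi\in\mathcal{Q}$, $\mathcal{C}\cup\text{facetEdges}(\beta)$ is acyclic'') is exactly the right thing to try to maintain, but it is violated in this scenario the moment the second of the three pairs is dequeued, and no backward-reversal step intervenes to prevent it.

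Concretely, then, the missing idea is an argument (or an algorithmic modification, such as re-running the cycle test at dequeue time rather than relying on the enqueue-time test) that rules out a cycle being assembled from up-edges that were all certified individually against earlier, smaller states of $\mathcal{C}$. Tracing matching up-edges backwards around the cycle does not produce a contradiction, because the edge dequeued last need not have been enqueued last relative to the \emph{dequeue} times of its cyclic predecessors; the check it passed saw only a partial cycle. Until that step is supplied, the induction does not go through --- and you should be aware that the paper's one-paragraph proof has the same hole, just without acknowledging it.
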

\begin{proof} 
Consider the graph induced by edges in edge-component $\mathcal{C}$ belonging to a $d$-interface.
We know that an up-edge say $\chi_j$ is classified as a forward edge if and only if the inclusion of $\chi_j$ does not create a cycle with up-edges that were included prior to $\chi_j$ in edge-component $\mathcal{C}$. Hence, we can be sure that inclusion of set of all forward edges does not create cycles. Moreover, every time a backward edge, say $\chi_i(\alpha_i,\beta_i)$ is encountered, we include the inverse orientation of $\chi_i$ in $\mathcal{C}$ which creates a sink node at $\alpha_i$ and source node at $\beta_i$ for the $d$-interface of the Hasse graph. Also, the $(d-1)$-simplices that were \emph{visited} in a previous edge-component also act as sinks (since we restrict ourselves to edges induced by edge-component $\mathcal{C}$). Furthermore, every down-edge is incident on a $(d-1)$-simplex that is either a sink or a $(d-1)$-simplex incident on a forward edge. In either case, it is easy to see that all flow terminates at sinks making the graph induced by edges in a particular edge-component acyclic.
\end{proof}

\begin{lem}\label{lem:acyclic1} The output graph $\mathcal{H_{V}}$ is acyclic.
\end{lem}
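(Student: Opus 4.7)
The plan is to show that any hypothetical cycle of $\mathcal{H_{V}}$ must lie entirely inside one edge-component, contradicting Lemma~\ref{lem:nocycomp}. As a preliminary, I would verify that $\mathcal{H_{V}}$ is itself a matching-based orientation of $\HKCC$: its up-edges are precisely the forward up-edges, which form a subset of the matching produced in line~2 of Algorithm~\ref{alg:frontier} and hence still a matching. Combined with the structural fact (recalled earlier in the paper) that cycles of a matching-based orientation lie in a single $d$-interface, this lets me fix $d$ and assume for contradiction that
\[
C:\ \beta_{1}\to\alpha_{1}\to\beta_{2}\to\alpha_{2}\to\cdots\to\beta_{k}\to\alpha_{k}\to\beta_{1}
\]
is a cycle in the $d$-interface of $\mathcal{H_{V}}$, whose forward up-edges are $\nu_{j}=\alpha_{j}\to\beta_{j+1}$ (indices taken modulo $k$).

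Next I would localize each $\beta_{j}$ to an edge-component. Since $\nu_{j-1}$ is a surviving up-edge, $\beta_{j}$ is matched to $\alpha_{j-1}$, so $\beta_{j}$ is not a source of the $d$-interface; in particular the facet-edges of $\beta_{j}$ are not among the leftover edges added in the final line of Algorithm~\ref{alg:frontier}, so they lie in some edge-component, which I denote by $\mathcal{C}_{t(\beta_{j})}$. Observe also that $\alpha_{j}\prec\beta_{j}$ and $\alpha_{j}\neq\alpha_{j-1}$ (as distinct vertices on $C$), so $\nu_{j}$ is a leading up-edge of $\nu_{j-1}$.

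The core step is a cyclic monotonicity argument on the iteration indices $t(\beta_{j})$. Suppose that $t(\beta_{j})<t(\beta_{j+1})$ for some $j$. Because $\nu_{j-1}$ is forward, it is dequeued during iteration $t(\beta_{j})$, and at that moment $\nu_{j}$ has not yet been removed from $\OOHKCC$. The inner \textbf{for} loop of \textbf{BFSComponent} therefore either enqueues $\nu_{j}$ --- forcing the facet-edges of $\beta_{j+1}$ into $\mathcal{C}_{t(\beta_{j})}$ and hence $t(\beta_{j+1})=t(\beta_{j})$ --- or reverses $\nu_{j}$, making it a down-edge of $\mathcal{H_{V}}$; both outcomes contradict our assumptions. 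Thus $t(\beta_{j+1})\le t(\beta_{j})$ for every $j$, and reading this inequality cyclically forces $t(\beta_{1})=\cdots=t(\beta_{k})$. Consequently every edge of $C$ lies in a common edge-component, contradicting Lemma~\ref{lem:nocycomp}.

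The main obstacle is the third step: one must keep precise track of which snapshot of $\OOHKCC$ the inner \textbf{for} loop is inspecting and cleanly rule out both the ``enqueue'' and ``reverse'' branches. The remaining work amounts to bookkeeping on the partition of $\mathcal{H_{V}}$ into edge-components and leftover edges.
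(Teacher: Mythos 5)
Your proof is correct and follows essentially the same route as the paper's: both reduce to Lemma~\ref{lem:nocycomp} by showing that a hypothetical cycle would have to lie in a single edge-component, via the same key observation that an up-edge leaving a simplex reached by an earlier edge-component would already have been classified (forward or backward) in that earlier component, and both dispose separately of the leftover source-node facet-edges added in the final line. The only difference is presentational: you run a direct cyclic-monotonicity argument on the component indices $t(\beta_{j})$, whereas the paper phrases the same contradiction as an induction over the sequence of edge-components.
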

\begin{proof} We prove this claim via induction over sequential addition of edge-components.\\
Base Case: To begin with the output graph $\mathcal{H_{V}}$ is the empty graph. From Lemma~\ref{lem:nocycomp}, we know that the graph induced by edges in an edge-component is acyclic. So $\mathcal{H_{V}}$ remains acyclic following the addition of the first edge-component $\mathcal{C}_{1}$ to $\mathcal{H_{V}}$.\\
Inductive Hypothesis: Suppose that following the addition of edges belonging to $i^{th}$ edge-component $\mathcal{C}_{i}$, $\mathcal{H_{V}}$ remains acyclic.\\
Now, we need to prove that following the addition of edges belonging to $\mathcal{C}_{i+1}$, $\mathcal{H_{V}}$ remains acyclic. To begin with using Lemma~\ref{lem:nocycomp}, we note that the graph induced by $\mathcal{C}_{i+1}$ is acyclic. So, if there does exist a cycle in $\mathcal{H_{V}}$ following the addition of $\mathcal{C}_{i+1}$, then a forward up-edge of this cycle must belong to $\mathcal{C}_{i+1}$ and a forward up-edge must belong to an edge-component $\mathcal{C}_{j_k}$ where $j_k<(i+1)$.
In particular, this means that there exists a down-edge belonging to a component $\mathcal{C}_{j_k}$ that is incident on simplex $\alpha_1$ such that a forward edge $\chi_1(\alpha_1,\beta_1)\in\mathcal{C}_{i+1}$. But, if $\alpha_1$ was reachable while traversing $\mathcal{C}_{j_k}$ then $\chi_1(\alpha_1,\beta_1)$ would have been classified as a forward edge in $\mathcal{C}_{j_k}$ i.e. $\chi_1(\alpha_1,\beta_1)\in\mathcal{C}_{j_k}$  -- a contradiction. Hence, such cycles do not exist. Finally, in line~\ref{lst:line:regsource} of Algorithm~\ref{alg:frontier}, after having added all edge-components, we add all the facet-edges of $d$-simplices that are either unmatched or facet-edges of $d$-simplices that are matched to one of their cofacets. In such cases, they act as source nodes within $d$-interfaces and do not introduce cycles because all cycles are restricted to the $d$-interface. 
\end{proof} 

\begin{lem}\label{lem:orient1} The output graph $\mathcal{H_{V}}$ is a matching based acylic orientation of undirected Hasse graph of the complex $\mathcal{H_K}$. 
\end{lem}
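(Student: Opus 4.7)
The plan is to separate the claim into two parts, both of which are \emph{not} covered by Lemma~\ref{lem:acyclic1}: (i) $\mathcal{H_V}$ is an orientation of $\mathcal{H_K}$, meaning every undirected edge of $\mathcal{H_K}$ appears in $\mathcal{H_V}$ exactly once with a unique direction; and (ii) the set of up-edges in $\mathcal{H_V}$ corresponds to a matching of $\mathcal{H_K}$, i.e., each vertex is incident to at most one up-edge. Combined with acyclicity from Lemma~\ref{lem:acyclic1}, these two facts give precisely the desired conclusion.

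For (i), I would maintain a loop invariant for the main \texttt{while} loop of Algorithm~\ref{alg:frontier}: at every iteration, $\mathcal{E}(\mathcal{H_V})$ and $\mathcal{E}(\overline{\mathcal{H_K}})$ partition the edge set of $\mathcal{H_K}$, each edge appearing in exactly one of them and with exactly one orientation. Initially this holds because $\mathcal{E}(\mathcal{H_V})=\emptyset$ and $\overline{\mathcal{H_K}}$ is a full orientation of $\mathcal{H_K}$. The in-place reversal of a backward edge $\chi_i$ inside \textbf{BFSComponent} does not change the underlying undirected edge, so it preserves the invariant; and in the backward branch $\overline{\chi_i}$ is added to $\mathcal{C}$ via $\textbf{facetEdges}(\beta_i)$, so when $\mathcal{C}_\chi$ is transferred from $\overline{\mathcal{H_K}}$ to $\mathcal{H_V}$, that edge is moved exactly once. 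The final line of Algorithm~\ref{alg:frontier} transfers any remaining residual edges of $\overline{\mathcal{H_K}}$ (corresponding to simplices that are critical or regular sources and hence unreached by BFS) into $\mathcal{H_V}$, so upon termination the invariant forces $\mathcal{H_V}$ to be a complete orientation of $\mathcal{H_K}$.

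For (ii), the key observation is monotonicity of the up-edge set. The starting orientation $\overline{\mathcal{H_K}}$ is, by construction of Step~2, the matching-induced reorientation associated with a maximum-cardinality matching $\mathcal{M}$ on $\mathcal{H_K}$; its up-edges are exactly the edges of $\mathcal{M}$. The only orientation change anywhere in the algorithm is the line that reverses a backward up-edge $\chi_i$ into a down-edge; no step ever turns a down-edge into an up-edge. Hence the up-edges that eventually appear in $\mathcal{H_V}$ form a subset of $\mathcal{M}$, and any subset of a matching is itself a matching.

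The main obstacle I anticipate is the careful bookkeeping in step (i): one has to verify that every edge of $\mathcal{H_K}$ is touched precisely once despite the interleaving of BFS discovery, in-place reversals inside \textbf{BFSComponent}, edge-component deletion from $\overline{\mathcal{H_K}}$, and the final dump-in step. In particular, one must argue that distinct edge-components $\mathcal{C}_\chi$ and $\mathcal{C}_{\chi'}$ produced in different iterations are edge-disjoint (immediate from line~7, which removes $\mathcal{C}_\chi$ from $\overline{\mathcal{H_K}}$ before the next iteration begins) and that the facet-edge accounting inside \textbf{BFSComponent} covers every down-edge of a discovered simplex. The second claim (ii) is then essentially a one-line consequence of the orientation-change rule, so the proof reduces to this invariant-maintenance argument.
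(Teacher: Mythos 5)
Your proposal is correct and follows essentially the same route as the paper: both decompose the claim into (i) $\mathcal{H_V}$ being a complete orientation of $\mathcal{H_K}$ --- established by observing that facet-edges of simplices matched downward enter via \textbf{BFSComponent}() while the residual edges of critical/upward-matched simplices enter via the final transfer line --- and (ii) the matching property, which both arguments reduce to the observation that the up-edges of $\mathcal{H_V}$ are a subset of the maximum-cardinality matching $\mathcal{M}$, with acyclicity imported from Lemma~\ref{lem:acyclic1}. Your loop-invariant phrasing of part (i) is a minor stylistic variant of the paper's per-simplex case analysis, not a substantively different argument.
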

\begin{proof}  We first prove that $\mathcal{H_{V}}$ is an orientation of $\mathcal{H_K}$ i.e. for every undirected edge in $\mathcal{H_K}$ there is a corresponding directed edge in $\mathcal{H_{V}}$. To prove this we will show that for every simplex $\beta^{d}$, all undirected edges from $\beta$ to its facets in $\mathcal{H_K}$ has a corresponding oriented edge in $\mathcal{H_{V}}$.\\
Case 1: Suppose that $\beta$ is matched to one of its facets in max-cardinality matching induced oriented graph $\OOHKCC$. Then this up-edge incident on $\beta$ was classified either as a forward edge or as a backward edge. In either case, all its facet-edges are inserted in $\mathcal{H_{V}}$ in procedure \textbf{BFSComponent}(). \\
Case 2: Now suppose that $\beta$ is either unmatched or it is matched to one of its cofacets. Then clearly, none of its facet-edges can be reached through a graph traversal that starts with some up-edge in $\OOHKCC$. Therefore, these facet-edges are not a part of any of the edge-components and they are all down-edges. However, in line~\ref{lst:line:regsource} of Algorithm~\ref{alg:frontier}, all these \emph{remainder} edges are included in $\mathcal{H_{V}}$. \\
Since the above two cases hold true for every simplex $\sigma^{d}$ with $d\geq 1$, this proves that $\mathcal{H_{V}}$ is an orientation of graph $\mathcal{H_K}$. Also, given the fact that the up-edges that are included are subset of edges coming from cardinality bipartite matching, clearly the orientation of $\mathcal{H_{V}}$ is matching based. In Lemma~\ref{lem:acyclic1}, we already proved that graph $\mathcal{H_{V}}$ is acyclic. Hence proved.
\end{proof}

\begin{defn}[Classified Edges, Frontier Edges] An edge marked within the \textbf{BFSComponent}() as forward or backward is called a \emph{classified edge}. A leading up-edge that is not yet classified is called a \emph{frontier edge}.  
\end{defn}

We establish the approximation ratio using a counting argument that
works specifically for simplicial complexes. We refer to this argument
as the frontier edges argument. The main idea involves a method of
counting that we describe now. Suppose we are processing an edge-component
that belongs to the $d$-interface of the Hasse graph for some $d\leq D$.
Let the iterator variable $i$ count the number of up-edges in the
edge-component that have so far been classified as either forward
or backward. Suppose at the end of ${i}^{th}$ iteration, there are
$|\mathcal{F}_{i}|$ number of forward edges, $|\mathcal{B}_{i}|$
number of backward edges and $|\mathcal{Z}_{i}|$ number of frontier
edges, then our approximation ratio is calculated as $\nicefrac{|\mathcal{F}_{i}|}{(|\mathcal{F}_{i}|+|\mathcal{B}_{i}|+|\mathcal{Z}_{i}|)}$.
In other words, we assume the worst case scenario where all the frontier
edges are \emph{possibly backward}. In every iteration of the BFS,
we classify one of the frontier edges as a forward edge or a backward
edge and then update the ratio until we exhaust the entire edge-component.
In the ${(i+1)}^{th}$ iteration, if a frontier edge is classified
as a forward edge then the number of forward edges will be $|\mathcal{F}_{i+1}|=(|\mathcal{F}_{i}|+1)$
and the number of frontier edges will be $|\mathcal{Z}_{i+1}|=(|\mathcal{Z}_{i}|+d-1)$.
If a frontier edge is classified as a backward edge then the number
of backward edges will be $|\mathcal{B}_{i+1}|=(|\mathcal{B}_{i}|+1)$
and the number of frontier edges will be $|\mathcal{Z}_{i+1}|=(|\mathcal{Z}_{i}|-1)$. 

\selectlanguage{english}%
\begin{figure}
\centering

\includegraphics[width=12cm]{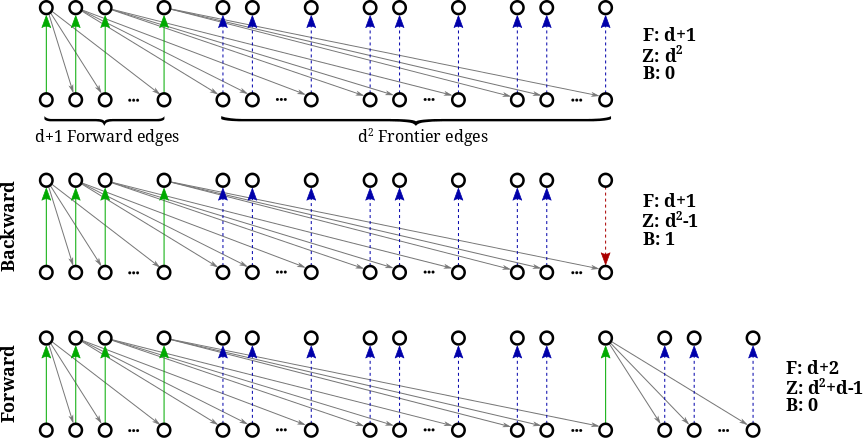}

\caption{The frontier edges argument. Top: The base case with $d+1$ forward
edges, $d^{2}$ frontier edges and no backward edges. The ratio of
forward edges and total number of up-edges is $\frac{d+1}{d^{2}+d+1}$.
Middle: When one of the frontier edges is classified as backward,
the ratio remains the same. Bottom: When one of the frontier edges
is classified as a forward edge, the ratio improves to $\frac{d+2}{d^{2}+2d+1}$.}

\label{fig:frontier}
\end{figure}

\selectlanguage{american}%
\begin{lem} The number of forward edges in an edge-component belonging to the $d$-interface of the Hasse graph is at least $\nicefrac{(d+1)}{(d^2+d+1)}$ fraction of the total number of up-edges in the edge-component.
\end{lem}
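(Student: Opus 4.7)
The plan is to bound the final-state quantities $|\mathcal{F}|$ and $|\mathcal{B}|$ (with $|\mathcal{Z}|=0$ at termination) directly via a BFS-tree counting argument together with a structural observation about short cycles in the $d$-interface. This is essentially the terminal version of the frontier-edges computation just sketched, but traded for simpler bookkeeping by postponing the ratio computation until the end.

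First, I would set up the tree bound. Viewing the edge-component as the BFS tree rooted at the seed, every forward edge $\chi$ enqueues at most $d$ new leading up-edges as its children (its top simplex has exactly $d$ facets other than the pivot), while the number of non-root nodes of the tree is $|\mathcal{F}|+|\mathcal{B}|-1$. Hence $|\mathcal{F}|+|\mathcal{B}|-1\le|\mathcal{F}|\cdot d$; introducing the non-negative deficit $\delta:=|\mathcal{F}|d+1-(|\mathcal{F}|+|\mathcal{B}|)$, this rewrites as $|\mathcal{F}|+|\mathcal{B}|=|\mathcal{F}|d+1-\delta$. Substituting into the desired bound and clearing denominators reduces the goal to $|\mathcal{F}|\ge(d+1)(1-\delta)$, which I would dispatch by a case split on $\delta$.

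When $\delta\ge 1$, the right-hand side is non-positive and the seed guarantees $|\mathcal{F}|\ge 1$; equivalently, $|\mathcal{F}|/(|\mathcal{F}|+|\mathcal{B}|)\ge|\mathcal{F}|/(|\mathcal{F}|d)=1/d\ge(d+1)/(d^{2}+d+1)$. When $\delta=0$, every forward edge has exactly $d$ new children in the BFS tree, so in particular the seed $\chi_{0}(\alpha_{0},\beta_{0})$ has $d$ leading up-edges, all of them appearing as its children. The crux is the structural claim that each such child must be classified forward: at the moment a candidate child $\chi_{i}(\alpha_{i},\beta_{i})$ is tested, $\mathcal{C}$ contains only the facet-edges of $\beta_{0}$ together with facet-edges of already-classified backward siblings (whose matched edges have been reversed to down-edges), so the only up-edges in $\mathcal{C}\cup\textbf{facetEdges}(\beta_{i})$ are $\chi_{0}$ and $\chi_{i}$ themselves. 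A cycle in the $d$-interface using only these two up-edges would force the distinct $d$-simplices $\beta_{0}$ and $\beta_{i}$ to share both $(d-1)$-facets $\alpha_{0}$ and $\alpha_{i}$, which is impossible since two distinct $d$-simplices share at most one $(d-1)$-facet (they differ in at least one vertex). Hence every child of the seed is forward and $|\mathcal{F}|\ge 1+d=d+1$, which realises the tight ratio $(d+1)/(d^{2}+d+1)$ and finishes this case.

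The main obstacle is the two-up-edge-cycle argument in the $\delta=0$ case. In particular, I would have to verify that reversed matching edges contributed by earlier-classified backward siblings cannot supply a third up-edge to a hypothetical cycle; this is immediate because reversing a matching edge orients it from the $d$-simplex to its matched $(d-1)$-facet, so such a reversal contributes only a down-edge to $\mathcal{C}$. The remaining work is bookkeeping---identifying the BFS tree structure, equating the children total with $|\mathcal{F}|+|\mathcal{B}|-1$ even when a leading up-edge is reachable via several forward ancestors (only the first encounter assigns it to the tree), and converting the deficit invariant into the claimed ratio.
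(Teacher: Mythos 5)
Your proof is correct, and it reorganises the argument in a genuinely different way from the paper. The paper runs an induction over BFS iterations, maintaining the invariant $|\mathcal{F}_i|/(|\mathcal{F}_i|+|\mathcal{B}_i|+|\mathcal{Z}_i|)\geq (d+1)/(d^2+d+1)$ by treating all unclassified frontier edges pessimistically as backward: the base case (two BFS levels, using the fact that a cycle needs at least three up-edges) establishes the worst-case configuration of $d+1$ forward and $d^2$ frontier edges, and each subsequent classification either adds $1$ to the numerator and $d$ to the denominator (forward, using $1/d>(d+1)/(d^2+d+1)$) or leaves the ratio unchanged (backward). You instead defer all accounting to termination: the single tree inequality $|\mathcal{F}|+|\mathcal{B}|-1\le d\,|\mathcal{F}|$ reduces the claim to $|\mathcal{F}|\ge(d+1)(1-\delta)$, which is vacuous unless the deficit $\delta$ vanishes, and in that one tight case the simplicial structure (two distinct $d$-simplices cannot share two $(d-1)$-facets, so no cycle with only two up-edges exists) forces all $d$ children of the seed to be forward. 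The two proofs rest on exactly the same two facts --- at most $d$ new leading up-edges per forward edge, and the minimum-cycle-length property of simplicial Hasse graphs --- but your version isolates precisely where the simplicial hypothesis is needed and makes the extremal configuration ($|\mathcal{F}|=d+1$, $|\mathcal{B}|=d^2$) transparent, at the cost of having to argue carefully that repeated encounters of a leading up-edge do not break the tree count; the paper's running-ratio induction absorbs that bookkeeping automatically but obscures why the bound is tight.
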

\begin{proof} We will use induction to prove our claim. \\ 
\textbf{Base Case:} The seed edge $\chi_0$ of the edge-component is naturally a forward edge. We note that any cycle in the Hasse graph of a simplicial complex has minimum length $6$ and involves at least $3$ up-edges. Since this does not hold for general regular cell complexes, simplicial input is crucial for the proof to work. Cycles do not appear until after two iterations. These two iterations constitute the base case. Therefore, $|\mathcal{F}_1|=1$, $|\mathcal{B}_1|=0$ and $|\mathcal{Z}_1|=0$. Also, the leading up-edges of $\chi_0$ are also forward edges. If $\chi_0$ has no leading up-edges then the edge-component is exhausted and $\nicefrac{|\mathcal{F}_1|}{(|\mathcal{F}_1|+|\mathcal{B}_1|)}=1$. If $\chi_0$ has $K$ leading up-edges, each such edge has, in turn, at most $j_k$ leading up-edges then the total number of forward edges will be $|\mathcal{F}_2|=1+K$, $|\mathcal{B}_2|=0$ and $|\mathcal{Z}_2|= \sum\limits_{k=1}^{K} j_k$. It is easy to check that the worst case for ratio $\nicefrac{|\mathcal{F}_2|}{(|\mathcal{F}_2|+|\mathcal{B}_2|+|\mathcal{Z}_2|)}$ occurs when $K=d$ and $j_k=d$ for each $k$. This gives us the worst case ratio for the quantity $\nicefrac{|\mathcal{F}_2|}{(|\mathcal{F}_2|+|\mathcal{B}_2|+|\mathcal{Z}_2|)}$ to be $\nicefrac{(d+1)}{(d^2+d+1)}$. Please refer to Figure~\ref{fig:frontier} \\ 
\textbf{Induction Step:} Our induction hypothesis says that after $i$ iterations of BFS, the ratio $\nicefrac{|\mathcal{F}_i|}{(|\mathcal{F}_i|+|\mathcal{B}_i|+|\mathcal{Z}_i|)}\geq \nicefrac{(d+1)}{(d^2+d+1)}$. For the ${(i+1)}^{th}$ iteration, suppose one of the frontier edges is classified as a forward edge. Then $|\mathcal{F}_{i+1}|=(|\mathcal{F}_i|+1)$ and $|\mathcal{Z}_{i+1}|\leq(|\mathcal{Z}_i|+d-1)$. Note that $(|\mathcal{Z}_i|+d-1)$ is the worst case estimate for $|\mathcal{Z}_{i+1}|$ assuming that the newly included forward edge has $d$ leading up-edges. Therefore, the numerator of the ratio $\nicefrac{|\mathcal{F}_i|}{(|\mathcal{F}_i|+|\mathcal{B}_i|+|\mathcal{Z}_i|)}$ increases by 1 whereas the denominator increases by $d$. Also we have $\nicefrac{1}{d}>\nicefrac{(d+1)}{(d^2+d+1)}$. Using the elementary fact that if $\frac{A}{B}\geq\frac{E}{F}$ and $\frac{C}{D}\geq\frac{E}{F}$ then $\frac{(A+C)}{(B+D)}>\frac{E}{F}$ for non-negative values of $A, B, C, D, E$ and $F$, we get: 
\begin{eqnarray*} \frac{|\mathcal{F}_{i+1}|}{(|\mathcal{F}_{i+1}|+|\mathcal{B}_{i+1}|+|\mathcal{Z}_{i+1}|)} & \geq & \frac{|\mathcal{F}_{i}|+1}{(|\mathcal{F}_{i}|+|\mathcal{B}_{i}|+|\mathcal{Z}_{i}|)+d}\\  & \geq & \frac{(d+1)+1}{(d^{2}+d+1)+d}\\  & > & \frac{(d+1)}{(d^{2}+d+1)} \end{eqnarray*}
On the other hand if a frontier edge is classified as a backward edge then $|\mathcal{B}_{i+1}|=(|\mathcal{B}_i|+1)$ and $|\mathcal{Z}_{i+1}|=(|\mathcal{Z}_i|-1)$. So, the numerator and the denominator of the ratio $\nicefrac{|\mathcal{F}_{i+1}|}{(|\mathcal{F}_{i+1}|+|\mathcal{B}_{i+1}|+|\mathcal{Z}_{i+1}|)}$ remain unchanged which gives us $\nicefrac{|\mathcal{F}_{i+1}|}{(|\mathcal{F}_{i+1}|+|\mathcal{B}_{i+1}|+|\mathcal{Z}_{i+1}|)}=\nicefrac{|\mathcal{F}_i|}{(|\mathcal{F}_i|+|\mathcal{B}_i|+|\mathcal{Z}_i|)}$. In both cases, the bound holds after $(i+1)$ iterations.  
\end{proof} 

Since every edge-component that belongs to a $d$-interface achieves
a ratio of at least $\nicefrac{(d+1)}{(d^{2}+d+1)}$ edges, if we
sum over all the edge-components we get the ratio $\nicefrac{(d+1)}{(d^{2}+d+1)}$
for that $d$-interface. In other words, we preserve at least $\nicefrac{(d+1)}{(d^{2}+d+1)}$
of the total number of matchings at every $d$-interface. The ratio
$\nicefrac{(d+1)}{(d^{2}+d+1)}$ becomes worse with increasing $d$.
So the worst case ratio is $\nicefrac{(D+1)}{(D^{2}+D+1)}$ where
$D$ is the dimension of the complex. Therefore, we get the following
result on the approximation ratio.

\begin{thm} Algorithm 1 computes a $\nicefrac{(D+1)}{(D^2+D+1)}$-factor approximation for Max Morse Matching Problem on simplicial complexes of dimension $D$. 
\end{thm}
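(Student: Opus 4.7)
The plan is to combine the per-interface ratio from the preceding lemma with the fact that the maximum cardinality bipartite matching on $\HKCC$ furnishes an upper bound on the optimum of \textsf{MMMP}. Let $M^{*}$ denote an optimal Morse matching on $\KCC$ and let $M$ denote the maximum cardinality matching on $\HKCC$ computed in the first step of Algorithm~\ref{alg:frontier}. Since any acyclic matching on $\HKCC$ is in particular a matching on $\HKCC$, we have $|M^{*}| \le |M|$, so it suffices to lower bound the size of the output matching in terms of $|M|$.

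Next I would partition $M$ according to the interface in which its edges lie. Because matching-based orientations confine all cycles to a single interface (as noted earlier), the bookkeeping of forward, backward and frontier edges in \textbf{BFSComponent}() can be analyzed interface-by-interface. For each $d \in \{1,\dots,D\}$, let $M_{d} \subseteq M$ be the up-edges lying in the $d$-interface; then $|M| = \sum_{d=1}^{D} |M_{d}|$, and the set of up-edges produced by the algorithm in the $d$-interface is exactly the disjoint union of forward edges over all edge-components processed there. Applying the preceding lemma to each edge-component in the $d$-interface and summing, the total number of forward up-edges in the $d$-interface is at least $\tfrac{d+1}{d^{2}+d+1}\,|M_{d}|$.

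To conclude, I would observe that $f(d) = \tfrac{d+1}{d^{2}+d+1}$ is monotonically decreasing for $d \ge 1$, so $f(d) \ge f(D)$ for all $d \le D$. Therefore the total number of up-edges (equivalently, matched pairs) in $\HCC_{\VCC}$ satisfies
\begin{equation*}
|\HCC_{\VCC}^{\uparrow}| \;\ge\; \sum_{d=1}^{D} \frac{d+1}{d^{2}+d+1}\,|M_{d}| \;\ge\; \frac{D+1}{D^{2}+D+1}\,|M| \;\ge\; \frac{D+1}{D^{2}+D+1}\,|M^{*}|.
\end{equation*}
By Lemma~\ref{lem:orient1}, $\HCC_{\VCC}$ is a matching-based acyclic orientation of $\HKCC$, hence its matched pairs form a valid Morse matching on $\KCC$, and the inequality above is exactly the claimed approximation guarantee.

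The main obstacle in formalizing this proof is justifying that the per-component bound supplied by the preceding lemma really does aggregate across edge-components without leakage, i.e.\ that edges retained in one component are not subsequently reclassified or destroyed when later components are processed. This is precisely the content of Lemma~\ref{lem:acyclic1} and Lemma~\ref{lem:orient1}: the output is a well-defined matching-based acyclic orientation whose up-edges are exactly the forward edges across all components, so no double-counting or cancellation occurs and the linear aggregation above is valid.
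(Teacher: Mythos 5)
Your proof is correct and takes essentially the same route as the paper: bound the optimum by the maximum-cardinality matching, aggregate the per-component lemma over all edge-components within each $d$-interface, and use the monotonicity of $(d+1)/(d^2+d+1)$ to reduce to the worst interface $d=D$. The only cosmetic difference is that you count matched pairs where the paper counts regular simplices (a factor of two appearing on both sides of the inequality, which cancels).
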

\begin{proof} Let $|\mathcal{M}|$ denote the cardinality of maximum matching. Note that $2|\mathcal{M}|$ is an upper bound on Max Morse Matching i.e optimal number of regular simplices $\leq 2|\mathcal{M}|$. Since we preserve at least $\nicefrac{(D+1)}{(D^{2}+D+1)}$ of these matchings, the number of regular simplices we obtain is  at least $2\frac{(D+1)}{(D^{2}+D+1)}|\mathcal{M}|\geq \frac{(D+1)}{(D^{2}+D+1)}OPT$. Therefore, Algorithm 1 provides a $\frac{D+1}{(D^2+D+1)}$-factor approximation for the Max Morse Matching Problem on simplicial complexes. 
\end{proof}

\subsection{A $\nicefrac{5}{11}$-factor Approximation for $2$-dimensional
simplicial complexes using Frontier Edges Algorithm}

In this section, we observe that we can further tighten our analysis
of Algorithm~\ref{alg:frontier} by restricting the problem to 2-dimensional
simplicial complexes. We exploit the geometry of $2$-complexes as
proved in Lemma~\ref{lem:oneonly} in order to establish an improved
ratio in the base case. 

\begin{lem}\label{lem:oneonly} If $\alpha$ is a forward edge and $\beta_1$ is a leading forward edge of edge $\alpha$ and if $\gamma_1$ and $\gamma_2$ are leading up-edges of $\beta_1$ then only one of the two edges $\gamma_1$ and $\gamma_2$ can possibly be a  backward edge that creates a cycle with edge $\alpha$.
\end{lem}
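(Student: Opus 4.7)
The plan is to exploit the geometric rigidity of 2-dimensional simplicial complexes: any two distinct triangles share at most one edge, because two triangles sharing two edges would have the same three vertices and hence coincide. This constraint will tightly restrict which cycles can form through $\alpha$.

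First, I would enumerate all directed cycles containing $\alpha$ that could appear in the oriented Hasse graph when a leading up-edge of $\beta_1$ is added. Writing $\alpha = p_0 \to t_0$, $\beta_1 = p_1 \to t_1$, and $\gamma_j = q_j \to s_j$ for $j \in \{1,2\}$, any matching-based cycle in the 2-interface alternates matching up-edges with non-matching down-edges. With only the up-edges $\alpha, \beta_1, \gamma_j$ available in the local configuration, such a cycle has length at most six. A 4-cycle through $\alpha$ and $\gamma_j$ alone would require $q_j \subset t_0$, making $t_0$ and $t_1$ share both $p_1$ and $q_j$; a 4-cycle through $\alpha$ and $\beta_1$ alone would require $p_0 \subset t_1$, again producing two shared edges. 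Both are ruled out by the rigidity fact, so the only cycle through $\alpha$ is the 6-cycle $p_0 \to t_0 \to p_1 \to t_1 \to q_j \to s_j \to p_0$, which exists exactly when $p_0 \subset s_j$.

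Next, I would translate $p_0 \subset s_j$ into a vertex-counting statement. Label vertices so that $p_0 = \{u,v\}$ and $p_1 = \{u,w\}$; two distinct edges of the triangle $t_0$ must share a vertex, which I call $u$. Since $t_1 \neq t_0$ shares edge $p_1$ with $t_0$, one has $t_1 = \{u,w,x\}$ with $x \neq v$, and the two leading up-edges of $\beta_1$ correspond to the remaining two edges of $t_1$, namely $\{u,x\}$ and $\{w,x\}$. For $s_j$ to be a triangle containing both $p_0$ and $q_j$, the union $p_0 \cup q_j$ must consist of exactly three vertices. The union $\{u,v\} \cup \{u,x\} = \{u,v,x\}$ has three vertices, so the corresponding $s_j$ is a legitimate triangle, but $\{u,v\} \cup \{w,x\} = \{u,v,w,x\}$ has four vertices and cannot be the vertex set of any triangle.

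Hence only the leading up-edge corresponding to the edge $\{u,x\}$ of $t_1$ can possibly create a cycle with $\alpha$, which proves the lemma. The main obstacle is the cycle enumeration step: one must confirm that no cycle through $\alpha$ other than the canonical 6-cycle is possible in this local configuration. Once that is pinned down, the vertex-counting argument closes the lemma immediately.
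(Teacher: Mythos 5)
Your proof is correct and follows essentially the same route as the paper's: both arguments reduce to the observation that a triangle containing both $p_0$ and the ``far'' edge $\{w,x\}$ of $t_1$ would need four distinct vertices, which is impossible for a $2$-simplex. Your preliminary elimination of the shorter $4$-cycles is a small addition the paper leaves implicit, but the core vertex-counting step is identical to the paper's concrete-label argument (with $p_0=AB$, $t_0=ABC$, $p_1=BC$, $t_1=BCD$).
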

\begin{proof} Without loss of generality, in this proof, we will use concrete labeling of simplices. We make an elementary geometric observation to prove this claim. Suppose $\alpha$ is a forward edge between a $1$-simplex say AB matched to a $2$-simplex ABC. So $\alpha$ can alternatively be denoted as edge AB-ABC. Now suppose $1$-simplex BC is matched to another $2$-simplex BCD constituting forward edge $\beta_1$, then of the two $1$-simplices BD and CD, BD can possibly match a $2$-simplex say BDA which effectively makes edge BD-BDA (say $\gamma_1$) a backward edge. However it is impossible to have a forward edge incident on $1$-simplex CD (say $\gamma_2$) that is also simultaneously incident on $1$-simplex AB since any $2$-simplex has at most three vertices. Hence proved.
\end{proof}

\begin{lem} The number of forward edges is at least $\nicefrac{5}{11}$ fraction of the total number of up-edges in the edge-component.
\end{lem}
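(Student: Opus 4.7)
The plan is to mirror the proof of the preceding $(D+1)/(D^2+D+1)$ lemma specialised to $d = 2$, replacing only its base case by a tighter count that invokes Lemma~\ref{lem:oneonly}; the inductive step is unchanged. Concretely, I would extend the base case by one additional BFS layer.

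As in the previous lemma, the seed $\chi_0$ is forward and both of its $\leq 2$ leading up-edges $\beta_1, \beta_2$ are forced forward, because no $6$-cycle can close with only two up-edges in $\mathcal{C}$. I would then classify, for each $\beta_k$, its $\leq 2$ leading up-edges $\gamma_{k,1}, \gamma_{k,2}$. Lemma~\ref{lem:oneonly} applied with $\alpha = \chi_0$ and $\beta_1 = \beta_k$ says at most one of $\gamma_{k,1}, \gamma_{k,2}$ closes a cycle through $\chi_0$; a short three-vertex case analysis, in the style of the proof of Lemma~\ref{lem:oneonly}, shows that the only other $3$-triangle cycle available at this stage---one of the form $\{\beta_1, \beta_2, \gamma_{k,i}\}$ not involving $\chi_0$---forces $T_{\gamma_{k,i}}$ to coincide with the unique triangle that would also close a cycle through $\chi_0$, so it is already counted. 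Hence at most one edge in each pair is backward, and in the worst case $|\mathcal{F}| = 1 + 2 + 2 = 5$, $|\mathcal{B}| = 2$, $|\mathcal{Z}| \leq 4$ (since each forward grandchild contributes at most $d = 2$ new leading up-edges), giving the base-case ratio $5/(5+2+4) = 5/11$.

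For the inductive step I would reuse the preceding lemma's argument verbatim. A forward classification updates $(|\mathcal{F}|, |\mathcal{B}|, |\mathcal{Z}|)$ to $(|\mathcal{F}|+1, |\mathcal{B}|, |\mathcal{Z}|+1)$, and the same ``if $A/B \geq E/F$ and $C/D \geq E/F$ then $(A+C)/(B+D) \geq E/F$'' inequality used before---now applied with the marginal ratio $C/D = 1/2$ and the target $E/F = 5/11$, noting $1/2 > 5/11$---yields $(|\mathcal{F}|+1)/(|\mathcal{F}|+|\mathcal{B}|+|\mathcal{Z}|+2) \geq 5/11$ whenever the old ratio is at least $5/11$. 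A backward classification leaves both numerator and denominator unchanged, so the ratio is preserved. The $5/11$ bound therefore propagates to termination, at which point $|\mathcal{Z}| = 0$ and $|\mathcal{F}|/(|\mathcal{F}|+|\mathcal{B}|) \geq 5/11$, which is the claim.

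The main obstacle is verifying that Lemma~\ref{lem:oneonly}---phrased only about cycles through $\alpha$---really suffices in the enlarged base case. When $\beta_2$'s leading up-edges are processed, $\mathcal{C}$ already contains the facet-edges of $T_{\beta_1}$, so in principle $\gamma_{2,j}$ could be backward through a cycle that does not involve $\chi_0$ at all. Resolving this is an explicit enumeration: labelling $T_{\chi_0} = ABC$, $T_{\beta_1} = BCX$, $T_{\beta_2} = CAY$, any candidate $T_{\gamma_{2,j}}$ pairwise sharing two distinct edges with $T_{\beta_1}$ and $T_{\beta_2}$ would have to contain at least four distinct vertices---contradicting the three-vertex bound on a $2$-simplex---unless $X = Y$ and $T_{\gamma_{2,j}} = ABX$, in which case it also closes a cycle with $\chi_0$ and is therefore already captured by Lemma~\ref{lem:oneonly}.
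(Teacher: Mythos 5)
Your proof is correct and follows essentially the same route as the paper's: a three-level base case using Lemma~\ref{lem:oneonly} to cap the backward edges at one per child, arriving at the worst-case configuration $|\mathcal{F}|=5$, $|\mathcal{B}|=2$, $|\mathcal{Z}|=4$, followed by the same mediant-inequality induction with marginal ratio $\nicefrac{1}{2}$. Your additional check that no cycle avoiding the seed edge can force a second backward grandchild is a point the paper's proof passes over silently, and it is a worthwhile clarification rather than a divergence in approach.
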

\begin{proof} Once again, we will use induction to prove our claim. \\
\textbf{Base Case:} In case of 2-manifolds, we can count up to three levels of BFS for base case, which in turn gives us an improvement in ratio. The seed edge $\alpha$ of the edge-component is evidently a forward edge. We note that any cycle in the Hasse graph of a simplicial complex has minimum length $3$. Therefore, $|\mathcal{F}_1|=1$, $|\mathcal{B}_1|=0$ and $|\mathcal{Z}_1|=0$. Also, the leading up-edges of $\alpha$ (if any) are also forward edges. If $\alpha$ has no leading up-edges then the edge-component is exhausted and $\nicefrac{|\mathcal{F}_1|}{(|\mathcal{F}_1|+|\mathcal{B}_1|)}=1$. If $\alpha$ has one leading up-edge $\beta_1$, then $|\mathcal{F}_2|=2$, $|\mathcal{B}_2|=0$ and $|\mathcal{Z}_2|=2$. Therefore, $\nicefrac{|\mathcal{F}_2|}{(|\mathcal{F}_2|+|\mathcal{B}_2|+|\mathcal{Z}_2|)}=\nicefrac{1}{2}$. If $\alpha$ has two leading up-edges $\beta_1$ and $\beta_2$, then $|\mathcal{F}_2|=3$, $|\mathcal{B}_2|=0$ and $|\mathcal{Z}_2|=4$. Therefore, $\nicefrac{|\mathcal{F}_2|}{(|\mathcal{F}_2|+|\mathcal{B}_2|+|\mathcal{Z}_2|)}=\nicefrac{3}{7}$. By Lemma~\ref{lem:oneonly}, both leading up-edges of $\beta_1$, $\gamma_1$ and $\gamma_2$, can not be backward. So suppose one of them (say $\gamma_1$) is backward and $\gamma_2$ is forward then $|\mathcal{F}_3|=4$, $|\mathcal{B}_3|=1$ and $|\mathcal{Z}_3|=4$ and therefore $\nicefrac{|\mathcal{F}_3|}{(|\mathcal{F}_3|+|\mathcal{B}_3|+|\mathcal{Z}_3|)}=\nicefrac{4}{9}$. Similarly, we must also consider the leading up-edges of $\beta_2$ of which at most one of them can be backward. The worst case occurs for the configuration when exactly one leading up-edge each of $\beta_1$ and $\beta_2$ are backward. This configuration gives $|\mathcal{F}_3|=5$, $|\mathcal{B}_3|=2$ and $|\mathcal{Z}_3|=4$ and hence $\nicefrac{|\mathcal{F}_3|}{(|\mathcal{F}_3|+|\mathcal{B}_3|+|\mathcal{Z}_3|)}=\nicefrac{5}{11}$. \\
\textbf{Induction Step:} Our induction hypothesis is that following $i$ iterations of BFS, the ratio $\nicefrac{|\mathcal{F}_i|}{(|\mathcal{F}_i|+|\mathcal{B}_i|+|\mathcal{Z}_i|)}\geq \nicefrac{5}{11}$. For the ${(i+1)}^{th}$ iteration, suppose one of the frontier edges is classified as a forward edge. Then $|\mathcal{F}_{i+1}|=(|\mathcal{F}_i|+1)$ and $|\mathcal{Z}_{i+1}|=(|\mathcal{Z}_i|+1)$. Therefore, the numerator of the ratio $\nicefrac{|\mathcal{F}_i|}{(|\mathcal{F}_i|+|\mathcal{B}_i|+|\mathcal{Z}_i|)}$ increases by 1 whereas the denominator increases by $2$. However, since $\nicefrac{1}{2}>\nicefrac{5}{11}$, we have $\nicefrac{|\mathcal{F}_{i+1}|}{(|\mathcal{F}_{i+1}|+|\mathcal{B}_{i+1}|+|\mathcal{Z}_{i+1}|)}=\nicefrac{(5+1)}{(11+1)}>\nicefrac{5}{11}$. On the other hand if a frontier edge is classified as a backward edge then $|\mathcal{B}_{i+1}|=(|\mathcal{B}_i|+1)$ and $|\mathcal{Z}_{i+1}|=(|\mathcal{Z}_i|-1)$. So numerator and denominator of ratio $\nicefrac{|\mathcal{F}_{i+1}|}{(|\mathcal{F}_{i+1}|+|\mathcal{B}_{i+1}|+|\mathcal{Z}_{i+1}|)}$ remain unchanged which gives us $\nicefrac{|\mathcal{F}_{i+1}|}{(|\mathcal{F}_{i+1}|+|\mathcal{B}_{i+1}|+|\mathcal{Z}_{i+1}|)}=\nicefrac{|\mathcal{F}_i|}{(|\mathcal{F}_i|+|\mathcal{B}_i|+|\mathcal{Z}_i|)}$. When all up-edges of the edge-component are exhausted, we don't have anymore frontier edges and the ratio for the edge-component after processing $|\mathcal{F}|$ forward edges and $|\mathcal{B}|$ backward edges will be  $\nicefrac{|\mathcal{F}|}{(|\mathcal{F}|+|\mathcal{B}|)}$ and by our inductive argument the ratio will be at least $\nicefrac{5}{11}$.
\end{proof}

Once again since every edge-component achieves a ratio of at least
$\nicefrac{5}{11}$ edges, if we sum over all the edge-components
we get the following theorem as an immediate outcome of the lemma
above.

\begin{thm} Algorithm~\ref{alg:frontier} is a $\nicefrac{5}{11}$-factor approximation algorithm for Max Morse Matching Problem when restricted to 2-dimensional simplicial complexes.
\end{thm}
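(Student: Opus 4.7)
The plan is to follow the same template as the proof of the $\nicefrac{(D+1)}{(D^{2}+D+1)}$ theorem in the previous section, but plugging in the stronger per-edge-component ratio of $\nicefrac{5}{11}$ that has just been established for $2$-dimensional simplicial complexes. The scaffolding we need is already in place: Lemma~\ref{lem:orient1} tells us that $\mathcal{H_{V}}$ is a matching-based acyclic orientation of $\mathcal{H_{K}}$ and therefore encodes a legitimate discrete gradient vector field, while the preceding lemma gives the combinatorial lower bound on forward edges per edge-component.

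First I would observe that, by the while-loop structure of Algorithm~\ref{alg:frontier}, each up-edge in the matching-induced orientation $\overline{\mathcal{H_{K}}}$ belongs to exactly one edge-component $\mathcal{C}_{\chi}$. Consequently the set of forward up-edges of $\mathcal{H_{V}}$ is the disjoint union, over edge-components, of the forward edges discovered inside each component. Applying the preceding lemma to every edge-component and summing over both the $1$-interface and the $2$-interface of $\mathcal{H_{K}}$, the total number of forward up-edges is at least $\nicefrac{5}{11}$ of $|\mathcal{M}|$, where $\mathcal{M}$ is the maximum cardinality matching on $\mathcal{H_{K}}$ computed in line~2 of the algorithm. (For the $1$-interface the general bound $\nicefrac{(d+1)}{(d^{2}+d+1)} = \nicefrac{2}{3}$ already exceeds $\nicefrac{5}{11}$, so the $2$-interface is the binding case.)

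Next I would invoke the same upper bound used in the earlier theorem: since every gradient pair in any discrete Morse matching is a matched edge in $\mathcal{H_{K}}$, we have $OPT \leq 2|\mathcal{M}|$, where $OPT$ counts regular simplices. Combining with the previous step, the algorithm outputs at least $2 \cdot \nicefrac{5}{11} \cdot |\mathcal{M}| \geq \nicefrac{5}{11} \cdot OPT$ regular simplices, which is exactly the claimed approximation ratio.

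The real work has already been done in Lemma~\ref{lem:oneonly} and the preceding counting lemma: the geometric peculiarity of $2$-simplices (only three $1$-faces, so two leading up-edges of a common predecessor cannot both close cycles simultaneously) is what allows the BFS base case to be extended to three levels of analysis and deliver $\nicefrac{5}{11}$ instead of $\nicefrac{3}{7}$. Given those lemmas, the theorem itself is essentially a one-line corollary, and the only care required is to confirm that the per-component ratio aggregates cleanly across components and across interfaces, which follows from disjointness of the edge-components together with the elementary inequality on ratios used in the general-$D$ proof.
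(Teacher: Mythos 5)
Your proposal is correct and follows essentially the same route as the paper: the paper also treats this theorem as an immediate corollary of the per-edge-component $\nicefrac{5}{11}$ lemma, summing over edge-components and reusing the $OPT \leq 2|\mathcal{M}|$ bound from the general-$D$ theorem. Your additional observation that the $1$-interface already achieves $\nicefrac{2}{3} > \nicefrac{5}{11}$ is a correct and slightly more explicit accounting of why the $2$-interface is the binding case.
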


\section{Approximation algorithms for simplicial manifolds\label{sec:Approximation-Simplicial-Manifolds}}

\subsection{A $\nicefrac{2}{(D+1)}$-factor approximation algorithm for simplicial
manifolds\label{sub:manifold-approx}}

\begin{algorithm}[t]
\caption{The Interface Algorithm}

\begin{algorithmic}[1]
\Require{Simplicial complex $\mathcal{K}$}
\Ensure{Graph $\mathcal{H_V}$, an acyclic matching based orientation of Hasse graph $\HCC_\KCC$ of $\KCC$.}
\LineComment{ \textbf{Notation:} [$\mathcal{C}_{d}^{d-1}$ denotes the critical $(d-1)$-simplices for $d$-interface. $\mathcal{R}_{d}$ is the set of all regular simplices for $d$-interface and $\mathcal{M}_{d}$ is the set of gradient pairs for $d$-interface. $\mathcal{E}(\mathcal{H_{V}})$ denotes the edge set of $\mathcal{H_{V}}$.]}

\LState{Construct Hasse graph $\mathcal{H_{K}}$ of $\KCC$.}
\LState{$\mathcal{E}(\mathcal{H_{V}})$ is initialized to default down-edge orientation on all edges.}
\ForAll{$d=1$ to $D$}
	\LState{$\mathcal{G}^{d}\leftarrow\textbf{extractdInterface}(\mathcal{H_{K}},d)$}
	\LState{ \algorithmicif\ {$d=1$}\ \algorithmicthen\ {Apply $\textbf{1ComplexOpt}(\mathcal{G}^{1})$}} 
	\LState{ \algorithmicelse\ \algorithmicif\ {$d=D$}\ \algorithmicthen\ {Apply $\textbf{manifoldOpt}(\mathcal{G}^{D})$}} 
	\LState{ \algorithmicelse\ {Apply $\textbf{intermediateApx}(\mathcal{G}^{d},d)$}} 
	\LState{\algorithmicend\ \algorithmicif}
\EndFor

%\Statex
\vspace{1.5mm}

\Procedure{deleteAndReorient}{$\mathcal{C}_{d}^{d-1},\mathcal{R}_{d},\mathcal{M}_{d}$}
\LState{Reorient edges of $\mathcal{H_{V}}$ based on matchings in edge set $\mathcal{M}_{d}$ for the $d$-interface)}
\LState{Delete nodes $\left\{ \mathcal{C}_{d}^{d-1},\mathcal{R}_{d}\right\} $ from $\mathcal{H_{K}}$}
\EndProcedure

%\Statex
\vspace{1.5mm}

\Procedure{1ComplexOpt}{$\mathcal{W}$}
\LState{Apply the optimal algorithm on $\mathcal{W}$. (See \textbf{DFSoptimal}() in Appendix~\ref{sec:Optimal-Algorithms-for}).}
\LState{\textbf{deleteAndReorient}($\mathcal{C}_{1}^{0},\mathcal{R}_{1},\mathcal{M}_{1}$)}
\EndProcedure

%\Statex
\vspace{1.5mm}

\Procedure{manifoldOpt}{$\mathcal{W}$}
\LState{Apply the optimal algorithm on the dual graph. (See \textbf{DFSoptimal}() in Appendix~\ref{sec:Optimal-Algorithms-for}).}
\LState{Reorient edges of $\mathcal{H_{V}}$ based on matchings in edge set $\mathcal{M}_{D}$ for the $D$-interface.}
\EndProcedure

%\Statex
\vspace{1.5mm}

\Procedure{intermediateApx}{$\mathcal{W},d$}
\LState{Apply Algorithm 1 described in Section~\ref{sec:A--factor-Approximation} on graph $\mathcal{W}$.} \label{lst:line:canmatch}
\LState{For every unmatched simplex $\tau^{d-1}$ such that all its cofacets $\sigma_{1}^{d}\dots\sigma_{K}^{d}$ are also unmatched, choose one of the simplices $\sigma_{i}^{d},i\in[1,K]$ and introduce the matching $\left\langle \tau,\sigma_{i}\right\rangle$.} \label{lst:line:cantriplet}
\LState{\textbf{deleteAndReorient}($\mathcal{C}_{d}^{d-1},\mathcal{R}_{d},\mathcal{M}_{d}$)}
\EndProcedure
\end{algorithmic}

\label{alg:interface}
\end{algorithm}

We will restrict our attention to manifolds without boundary. The
key idea in Algorithm~\ref{alg:interface} is that the matching is
constructed within one $d$-interface at a time, starting with the
lowest interface and ending with the highest one. For manifolds, this
is advantageous because it allows us to count matched/critical simplices
differently. In particular, every $d$-simplex (where $1\leq d\leq D-1$),
is given two chances to get matched. Please refer to Figure~\ref{fig:interface}.
We first try to match a $d$-simplex say $\sigma^{d}$, while constructing
the Morse matching for the $d$-interface. If $\sigma^{d}$ remains
critical for the $d$-interface then we try to match it for the $(d+1)$-interface.
The trick of giving a second chance to critical simplices works fine
for all dimensions except for $D$-dimensional critical simplices.
Fortunately, for manifolds, we can easily design a vector field with
only one critical simplex for dimension $D$. Since non-manifold-complexes
may have unbounded number of critical $D$-simplices the analysis
becomes non-trivial. For Algorithm~\ref{alg:interface}, one may still
derive approximation bounds for non-manifold complexes by using a
line of reasoning analogous to one used in Section~\ref{sub:apx-non-man-comp}.

\selectlanguage{english}%
\begin{figure}[t]
\centering\includegraphics[width=1\textwidth]{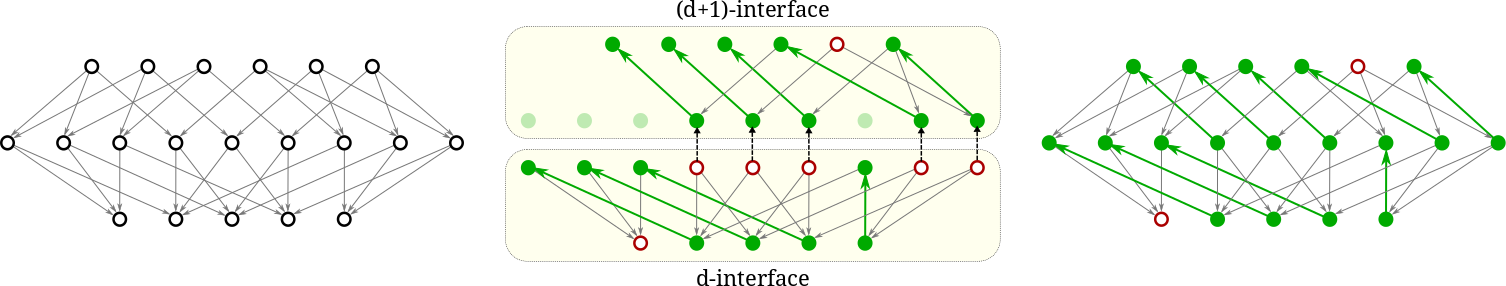}

\caption{An example illustrating Algorithm 2. Left: The input Hasse graph.
Middle: The $d$\selectlanguage{american}%
-\selectlanguage{english}%
simplices matched during execution of the algorithm on the $d$\foreignlanguage{american}{-interface}
are deleted from the ($d+1$\foreignlanguage{american}{)}\selectlanguage{american}%
-\selectlanguage{english}%
interface. Right: The final matching is obtained by combining matchings
from all interfaces.}

\label{fig:interface}
\end{figure}

\selectlanguage{american}%
Algorithm~\ref{alg:interface} , exploits special structures at the
lowest and highest interface. For instance, for any $D$-dimensional
manifold, there are well known algorithms in literature~\cite{BLPS13,Le03b,JP06}
for designing optimal gradient vector field for the $1$-interface
and the $D$-interface. See Appendix A of ~\cite{BLPS13}. As noted
in~\cite{BLPS13}, we can associate a special graph structure to the
$D$-interface. 

\begin{defn}[Dual Graph] The dual graph of a simplicial $D$-dimensional manifold $\mathcal{K}$, denoted by $\Gamma(\mathcal{K})$ is the graph whose vertices represent the $D$-simplices of $\mathcal{K}$ and whose edges join two $D$-simplices with a common $(D-1)$-facet.
\end{defn}

For the sake of completeness, we describe the optimal algorithms for
$1$-interface and the $D$-interface in Appendix~\ref{sec:Optimal-Algorithms-for}. 

Like in Algorithm~\ref{alg:frontier}, we first obtain the Hasse graph
of complex $\mathcal{K}$. We extract the $d$-interface of the Hasse
graph. For the $d$-interface, we design a Morse matching and reorient
the output graph $\mathcal{H_{V}}$ based on it. We then delete all
the regular simplices of the $d$-interface and the critical $(d-1)$-simplices.
This updated Hasse graph is available for the next iteration when
the $(d+1)$-interface is extracted and so on. For the $1$-interface
and the $D$-interface, the optimal algorithms are applied to design
the Morse matchings. For $d$-interfaces where $1<d<D$, procedure
\textbf{intermediateApx}() of Algorithm~\ref{alg:interface} is applied
to design the Morse matchings. 

We now describe procedure \textbf{intermediateApx}()\textbf{ }for
designing gradient vector field on the $d$-interface \textbf{$\mathcal{G}^{d}$.}
Algorithm 1 is essentially a maximum-matching followed by BFS-style
cycle removal and hence can be performed on any bipartite graph. In
particular, we apply it on graph $\mathcal{G}^{d}$ for $1<d<D$.
After cycle removal (from Algorithm 1) we may have a situation where
we have an unmatched simplex $\tau$ such that all its cofacets are
also unmatched. In that case, we match $\tau$ with one of its cofacets.
We perform this operation for all unmatched $(d-1)$-simplices whose
cofacets are also unmatched. This completes Morse matching for the
$d$-interface. In procedure \textbf{deleteAndReorient}(), if $\sigma^{d}$
is incident on simplex $\tau^{d-1}$ and if $\tau$ is regular at
the $(d-1)$-interface then we are justified in deleting it while
processing the $d$-interface, since $\tau$ is a regular sink node
for $d$-interface. The deletion of critical nodes does not affect
the behavior of Algorithm~\ref{alg:interface} per se. We delete them
here because the procedure \textbf{intermediateApx}() is used as a
subroutine in Algorithm~\ref{alg:minfacet} where this deletion is
crucial.

\begin{lem} \label{lem:interfaceacyclic} The orientation of $\mathcal{G}^{d}$ as computed by Algorithm 2 is acyclic.
\end{lem}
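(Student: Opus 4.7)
My plan is to split the argument into cases according to which procedure Algorithm~\ref{alg:interface} invokes for the $d$-interface. For $d=1$ (via \textbf{1ComplexOpt}) and $d=D$ (via \textbf{manifoldOpt}), the orientation is produced by the optimal algorithms of Appendix~\ref{sec:Optimal-Algorithms-for}, which by construction yield acyclic matching-based orientations; I would dispatch these cases by direct citation. The substantive case is $1<d<D$, where \textbf{intermediateApx}() is used.

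For \textbf{intermediateApx}() I would decompose the argument into its two stages. Stage 1 simply runs Algorithm~\ref{alg:frontier} on the bipartite graph $\mathcal{G}^d$, and Lemma~\ref{lem:acyclic1} immediately certifies that the output of this stage is acyclic. Stage 2 sequentially augments the matching by introducing one pair $\langle \tau, \sigma_i\rangle$ for each unmatched $(d-1)$-simplex $\tau$ whose $d$-cofacets are all unmatched at the time of processing. My plan is to prove by induction on the number of such augmentations that acyclicity is preserved throughout stage 2.

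For the inductive step, consider adding one augmentation $\langle \tau, \sigma_i\rangle$. The hypotheses guarantee that $\tau$ is a sink in the current $\mathcal{G}^d$ (every incident edge is an incoming down-edge from a cofacet) and that $\sigma_i$ is a source (every incident edge leaves $\sigma_i$ as a down-edge to a facet). Reversing $\sigma_i\to\tau$ to $\tau\to\sigma_i$ introduces a cycle only if some other directed path from $\sigma_i$ to $\tau$ exists. Since $\mathcal{G}^d$ is bipartite between levels $d-1$ and $d$, any directed path alternates down-edges (outgoing from $d$-simplices) with up-edges (outgoing from $(d-1)$-simplices), so any path into $\tau$ ends with a down-edge $\sigma\to\tau$ from some cofacet $\sigma$. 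If the path has length at least two, the preceding edge must be an up-edge $\tau''\to\sigma$, forcing $\sigma$ to be matched to $\tau''$. This contradicts the standing assumption that every cofacet of $\tau$ is unmatched, so no such longer path exists; the only remaining candidate is the direct edge $\sigma_i\to\tau$, which is the edge being reversed. Hence no cycle is introduced.

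The main obstacle is precisely this alternating-path analysis in stage 2: one must exploit the bipartite layering of $\mathcal{G}^d$ together with the ``all cofacets unmatched'' precondition to rule out every cycle-creating path. A subtle point I would verify is that the precondition is evaluated at the moment each new pair is added, so a previous augmentation (which matches some $d$-simplex) cannot silently violate the hypothesis for a later $\tau$; provided the algorithm re-checks unmatched status before each insertion, the induction goes through cleanly.
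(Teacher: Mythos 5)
Your proposal is correct and follows essentially the same route as the paper's proof: stage~1 is discharged by citing the acyclicity of Algorithm~1's output (Lemma~\ref{lem:acyclic1}), and stage~2 is handled by observing that when $\langle\tau,\sigma_i\rangle$ is introduced all cofacets of $\tau$ are unmatched, hence are sources with no incoming up-edges, so no directed path can return from $\sigma_i$ to $\tau$. The paper phrases this via the induced partial order on the critical cofacets $\sigma_1,\dots,\sigma_K$ rather than your explicit alternating-path analysis, but the underlying observation is identical; your extra care about re-evaluating the ``all cofacets unmatched'' condition at each augmentation is a reasonable tightening of a point the paper leaves implicit.
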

\begin{proof} Algorithm 1 provides an acyclic matching-based orientation of $d$-interface $\mathcal{G}^{d}$. So step 1 of \textbf{intermediateApx}() does not introduce any cycles. Now consider an unmatched simplex $\tau^{d-1}$ such that all its cofacets $\sigma_{1}^{d}\dots\sigma_{K}^{d}$ are also unmatched. For a directed acyclic graph there is an ordering relation $\alpha>\beta$ if there is a directed path from vertex $\alpha$ to vertex $\beta$. Clearly there is no ordering relation among $\sigma_{1}^{d}\dots\sigma_{K}^{d}$ since they are all critical. Introduction of the matching $\left\langle \tau,\sigma_{i}\right\rangle $ introduces the ordering relations of the type $\sigma_{j}>\sigma_{i}$ for all $j\in[1,K]$ and $j\neq i$. Therefore matching introduced by step 2 does not introduce any cycles and hence the orientation of $\mathcal{G}^{d}$ as computed by Algorithm~\ref{alg:interface} is acyclic.
\end{proof}

\begin{lem}\label{lem:interfaceoutputacylic} The orientation of the output graph $\mathcal{H_V}$ is acyclic.
\end{lem}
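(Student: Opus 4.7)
The plan is to leverage the property, stated in the paper's Section~2, that in any matching-based orientation of the Hasse graph, every directed cycle is confined to a single $d$-interface. Once that is in hand, the argument reduces to verifying acyclicity of each individual interface of $\mathcal{H_V}$ separately.

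First I would verify that the final orientation on $\mathcal{H_V}$ is indeed matching-based on $\mathcal{H_K}$. This follows because once a simplex is matched inside some $d$-interface (and contributes reversed up-edges to $\mathcal{H_V}$) it is immediately removed from $\mathcal{H_K}$ by \textbf{deleteAndReorient}, so it cannot be matched again at the $(d+1)$-interface. Therefore the union of the per-interface matchings $\mathcal{M}_1,\dots,\mathcal{M}_D$ is a single matching $\mathcal{M}$ on $\mathcal{H_K}$, and $\mathcal{H_V}$ is exactly the matching-induced reorientation associated with $\mathcal{M}$. By the interface-confinement property, any hypothetical directed cycle in $\mathcal{H_V}$ must live entirely in the $d$-interface of $\mathcal{H_V}$ for some $d$.

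Next I would rule out cycles in each interface. For $d=1$ and $d=D$, the procedures \textbf{1ComplexOpt} and \textbf{manifoldOpt} apply the optimal algorithm (\textbf{DFSoptimal} in Appendix A), which by construction returns an acyclic matching-based orientation of the relevant interface. For $1<d<D$, procedure \textbf{intermediateApx} is invoked on the reduced graph $\mathcal{G}^d$, and Lemma~\ref{lem:interfaceacyclic} already guarantees that the orientation on $\mathcal{G}^d$ is acyclic. What remains is to lift this to the full $d$-interface of $\mathcal{H_V}$, which additionally contains those $(d-1)$-simplices that were declared regular at interface $d-1$ (and hence were deleted from $\mathcal{H_K}$ before $\mathcal{G}^d$ was extracted). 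Each such $(d-1)$-simplex $\tau$ was matched upward at interface $d-1$, so its matching up-edge lies in the $(d-1)$-interface and none of its facet-edges in the $d$-interface are reversed; every edge of the $d$-interface incident on $\tau$ is therefore a down-edge directed into $\tau$, making $\tau$ a sink. Adjoining sinks to an acyclic directed graph cannot introduce a directed cycle, so the full $d$-interface of $\mathcal{H_V}$ is acyclic.

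The main obstacle I foresee is precisely this last bookkeeping step: making sure that the deletions performed by \textbf{deleteAndReorient} during the processing of interfaces $1,\dots,d-1$ do not hide cycles that would reappear once the deleted regular $(d-1)$-simplices are reinstated in the interface view of $\mathcal{H_V}$. The sink-argument above handles this cleanly, but one must also check that deleted critical $(d-1)$-simplices cause no trouble; since they are unmatched, every edge of $\mathcal{H_V}$ between them and a $d$-simplex is still a down-edge, so they too act as sinks in the $d$-interface. Combining the three cases, every $d$-interface of $\mathcal{H_V}$ is acyclic, and by the interface-confinement property the entire graph $\mathcal{H_V}$ is acyclic.
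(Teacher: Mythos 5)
Your proof is correct and follows essentially the same route as the paper: reduce to per-interface acyclicity via the fact that cycles in a matching-based orientation are confined to a single $d$-interface, then invoke the optimal algorithms for $d=1,D$ and Lemma~\ref{lem:interfaceacyclic} for the intermediate interfaces. Your additional bookkeeping — checking that the orientation is genuinely matching-based and that the $(d-1)$-simplices deleted before $\mathcal{G}^d$ was extracted re-enter the full $d$-interface of $\mathcal{H_V}$ only as sinks — is a careful (and welcome) elaboration of a step the paper's proof passes over silently.
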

\begin{proof} From Lemma~\ref{lem:interfaceacyclic}, we conclude that the orientation for every $d$-interface $\ensuremath{\mathcal{G}^{d}}$ where $1<d<D$ is acyclic. Further, optimal acyclic matchings are computed for $1$-interface and $D$-interface respectively. Combining these two facts and along with the observation that every directed path is restricted to a unique $d$-interface, we conclude that the orientation of output graph\emph{ $\mathcal{H_{V}}$} is acyclic. 
\end{proof}

Now we introduce an idea that will help us prove approximation bounds
for Algorithm~\ref{alg:interface}. For the $d$-interface $\mathcal{G}_{d}$,
let $\tau^{d-1}$ be a critical simplex and let the set of cofacets
of $\tau$ that are regular be $\left\{ \beta_{1},\beta_{2},\dots,\beta_{K}\right\} $.
From line~\ref{lst:line:cantriplet} of procedure \textbf{intermediateApx}(),
we know that this set is non-empty. Let $\beta_{i}$ where $i\in[1\dots K]$
be a cofacet of $\tau$ with minimum index after performing a topological
sort on the $d$-interface\footnote{We do not actually perform topological sort on the $d$-interface, but need it for making an argument.}.
Now let $\alpha_{i}$ be such that $\alpha_{i}\prec\beta_{i}$ and
$\left\langle \alpha_{i},\beta_{i}\right\rangle $ is a gradient pair.
Then we can associate a \emph{canonical triplet} $\left\langle \left\langle \alpha_{i},\beta_{i}\right\rangle ,\tau\right\rangle $
to critical simplex $\tau^{d-1}$. Note that such a unique canonical
triplet is associated to every critical $(d-1)$-simplex. 

\begin{lem}\label{lem:interfaceratio} Algorithm 2 computes a $\nicefrac{2}{(d+2)}$-factor approximation to the Max Morse Matching restricted to the $d$-interface, $1<d<D$, of the Hasse graph of the $D$-dimensional manifold. 
\end{lem}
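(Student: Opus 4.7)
The plan is to exploit the canonical triplet structure introduced just before the lemma statement, using a charging argument that ties each critical $(d-1)$-simplex to a gradient pair in our output. The key facts I would establish in order are (i) every critical $(d-1)$-simplex has a well-defined canonical triplet, (ii) each gradient pair hosts at most $d$ such triplets, and (iii) a per-pair accounting of OPT's regulars gives the ratio.

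First I would verify that the canonical triplet is well-defined. This requires arguing that after \textbf{intermediateApx} finishes on the $d$-interface, every critical $(d-1)$-simplex $\tau$ has a nonempty set of regular cofacets. This is immediate from step~\ref{lst:line:cantriplet} of \textbf{intermediateApx}: any unmatched $\tau$ all of whose cofacets are also unmatched would have been matched in that step. Hence surviving critical $\tau$'s admit a regular cofacet, the minimum-index $\beta_i$ (in a topological sort of the oriented $d$-interface, which is acyclic by Lemma~\ref{lem:interfaceacyclic}) exists, and so the triplet $\langle\langle\alpha_i,\beta_i\rangle,\tau\rangle$ is uniquely determined.

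Second, the main counting step. Fix a gradient pair $\langle\alpha,\beta\rangle\in\mathcal{M}_d$. The $d$-simplex $\beta$ has exactly $d+1$ facets; one is $\alpha$ (already matched to $\beta$), leaving at most $d$ other $(d-1)$-facets. Only these $d$ facets, if critical, can have $\beta$ as their canonical cofacet. Therefore each gradient pair hosts at most $d$ canonical triplets, and summing over all pairs gives $|\mathcal{C}_d^{d-1}|\leq d\,|\mathcal{M}_d|$. I would package this as a short combinatorial sublemma.

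Third, to extract the approximation ratio I would associate to each gradient pair $\langle\alpha,\beta\rangle$ the ``group'' consisting of $\alpha$, $\beta$, and the (at most $d$) critical $(d-1)$-simplices whose canonical cofacet is $\beta$. These groups form a partition of all $(d-1)$-simplices together with the matched $d$-simplices, and each group has size at most $d+2$. Our algorithm contributes exactly $2$ regular simplices to each group, while in the optimal Morse matching restricted to the $d$-interface at most $d+2$ simplices in a group can be made regular. Summing over the $|\mathcal{M}_d|$ groups, the number of regular simplices produced by OPT is bounded by $(d+2)|\mathcal{M}_d|$, whereas Algorithm~\ref{alg:interface} produces $2|\mathcal{M}_d|$ regulars, giving the claimed ratio $\tfrac{2}{d+2}$. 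The main obstacle is the bookkeeping in this last step: OPT's matchings can cross group boundaries, and one must verify using the acyclicity of the orientation and the min-index (topological sort) choice that each OPT regular is charged exactly once, so that the per-group bound $d+2$ aggregates correctly rather than being inflated to $2(d+1)$.
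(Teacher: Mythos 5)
Your canonical-triplet setup and the first two counting steps are exactly the paper's argument: line~\ref{lst:line:cantriplet} of \textbf{intermediateApx}() guarantees every surviving critical $(d-1)$-simplex has a regular cofacet, and since $\beta_i$ has $d+1$ facets of which $\alpha_i$ is regular, each gradient pair occurs in at most $d$ canonical triplets; grouping each pair with the critical simplices charged to it gives groups with at least $2$ regular and at most $d$ critical simplices.

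Where you diverge is the final step, and there the divergence opens a real gap. You assert that the optimal count is bounded by $(d+2)|\mathcal{M}_d|$ by charging at most $d+2$ OPT-regulars to each group. This does not follow: your groups cover all surviving $(d-1)$-simplices and the \emph{matched} $d$-simplices, but not the $d$-simplices the algorithm leaves unmatched, and OPT is free to match those. Each OPT pair places its $(d-1)$-simplex in some group but its $d$-simplex possibly outside every group, so a per-group charge can only account for half of OPT's regulars; the best this yields is $\text{OPT}\leq 2(d+1)|\mathcal{M}_d|$, i.e.\ a $\nicefrac{1}{(d+1)}$ ratio, weaker than claimed. The paper never attempts a per-interface comparison with OPT. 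Its proof establishes only the per-group inequality $\#\text{matched}/\#\text{total}\geq\nicefrac{2}{(d+2)}$ and defers the comparison with OPT to the global aggregation over all interfaces, where every simplex falls into exactly one group (an unmatched $d$-simplex is accounted for at the $(d+1)$-interface, where it gets its second chance) and the trivial bound $\text{OPT}\leq$ total number of simplices closes the argument. Consequently the ``main obstacle'' you identify --- OPT matchings crossing group boundaries --- is a red herring, since no tracking of where OPT's regulars land is ever needed; but the per-group OPT bound you propose in its place is precisely the step that fails.
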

\begin{proof} Let $\left\langle \alpha_{i}^{d-1},\beta_{i}^{d}\right\rangle$ be a gradient pair. $\beta_{i}^{d}$ has $(d+1)$ facets of which at least one (namely $\alpha_i$) is regular. Therefore, the gradient pair $\left\langle \alpha_{i},\beta_{i}\right\rangle$ appears in at most $d$ canonical triplets. We group $\left\langle \alpha_{i},\beta_{i}\right\rangle$ with all the critical $(d-1)$-simplices that contain $\left\langle \alpha_{i}^{d-1},\beta_{i}^{d}\right\rangle$ in their canonical triplets. Each critical $(d-1)$-simplex appears in a unique group. Each group contains at least two regular simplices and at most $d$-critical simplices. So for every group we have the following ratio
$$\frac{\#\textnormal{matched simplices}}{\#\textnormal{total simplices}} \geq \frac{2}{(d+2)}$$
Hence, we obtain the approximation ratio of $\nicefrac{2}{(d+2)}$ for the $d$-interface. 
\end{proof}

The minimum of the ratio $\nicefrac{2}{(d+2)}$ over all $d$, $1<d<D$
is $\nicefrac{2}{(D+1)}$. The $1$-interface contributes to a single
critical simplex when the optimal algorithm is employed (See Appendix~\ref{sec:Optimal-Algorithms-for}). 

Finally, we consider the $D$-interface in the lemma below.

\begin{lem}\label{lem:d2} At least one $(D-1)$-simplex is matched at the conclusion of construction of Morse matching at the $(D-1)$-interface.
\end{lem}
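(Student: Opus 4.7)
The plan is to show that the \textbf{intermediateApx} procedure invoked on the $(D-1)$-interface $\mathcal{G}^{D-1}$ necessarily produces at least one matched pair, and that any such pair automatically matches a $(D-1)$-simplex with a $(D-2)$-simplex. The argument has three parts: establish that at least one $(D-2)$-simplex survives the processing of all lower interfaces, use the closed-manifold structure to exhibit an edge of $\mathcal{G}^{D-1}$ incident to that simplex, and then invoke the matching-preservation properties of Algorithm~\ref{alg:frontier} together with the step~2 ``forced matching'' clause of \textbf{intermediateApx}.

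For the survival step I would observe that $(D-2)$-simplices can be deleted only at the $(D-2)$-interface, and each deletion is tied to a match pairing the $(D-2)$-simplex with a distinct previously available $(D-3)$-simplex (this is true both for matches produced by Algorithm~\ref{alg:frontier} and for the forced matches introduced by step~2 of \textbf{intermediateApx}). Hence, writing $m_d$ for the number of matches produced at the $d$-interface, the number of deleted $(D-2)$-simplices is at most $f_{D-3}-m_{D-3}$. Combining this with the closed-manifold face inequalities derived from links -- every $(D-3)$-simplex in a closed $D$-manifold has a link which is a triangulated $2$-sphere, so the double-counting identity $(D-1)f_{D-2}\geq 4\,f_{D-3}$ holds for $D\geq 3$, and analogous inequalities hold at lower dimensions -- yields $f_{D-2}>f_{D-3}-m_{D-3}$, so at least one $(D-2)$-simplex $\tau$ survives. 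The base cases $D=2$ and $D=3$ reduce to the elementary counts $f_0\geq 3$ and $f_1\geq 2f_0$, which are immediate.

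Because the link of $\tau$ in a closed $D$-manifold is a triangulated $1$-sphere of length at least three, $\tau$ has at least three $(D-1)$-cofacets, each still present in $\mathcal{H_K}$ at the time $\mathcal{G}^{D-1}$ is extracted (no $(D-1)$-simplex is touched before the $(D-1)$-interface). Thus $\mathcal{G}^{D-1}$ contains at least one edge, so the maximum-cardinality bipartite matching used by Algorithm~\ref{alg:frontier} is non-empty. By the seed-edge property exploited throughout Section~\ref{sec:A--factor-Approximation} -- the seed up-edge of every BFS edge-component is always classified as a forward edge -- at least one up-edge survives the cycle-removal step, matching a $(D-1)$-simplex with a $(D-2)$-simplex. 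In the extremal case where Algorithm~\ref{alg:frontier} happens to leave $\tau$ unmatched while every cofacet of $\tau$ is simultaneously unmatched, step~2 of \textbf{intermediateApx} explicitly forces the pair $\langle\tau,\sigma_i\rangle$, once again matching a $(D-1)$-simplex.

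The main obstacle is the survival step. In general $f_{D-2}$ need not exceed $f_{D-3}$ -- the ratio $\tfrac{D-d+2}{d+1}$ coming from the link inequality at dimension $d$ can drop below $1$ when $d$ is close to $D$ -- so the ``direct'' face-count inequality is not quite strong enough for arbitrarily large $D$. The proof has to combine the link inequalities with an auxiliary induction showing $m_{D-3}\geq 1$ so that the strict slack $f_{D-3}-m_{D-3}<f_{D-2}$ is available. The cleanest bookkeeping is an inductive statement ``at least one match at every $d$-interface for $1\leq d\leq D-1$,'' with the base case provided by the optimality of 1ComplexOpt and the inductive step by exactly the three-part argument above applied at each level.
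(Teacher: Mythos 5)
Your overall skeleton --- exhibit a $(D-2)$-simplex that is still unmatched when $\mathcal{G}^{D-1}$ is extracted, note that it retains $(D-1)$-cofacets, and let either a surviving seed up-edge of Algorithm~\ref{alg:frontier} or the forced matching in step~2 of \textbf{intermediateApx}() produce the required pair --- matches the paper's, and the second and third parts of your argument are sound. The genuine gap is exactly where you flagged it: the survival step. Your bookkeeping requires $f_{D-2}>f_{D-3}-m_{D-3}$, but the link inequality $(D-1)\,f_{D-2}\geq 4\,f_{D-3}$ allows $f_{D-3}$ to exceed $f_{D-2}$ by an arbitrarily large margin once $D\geq 5$: already for the boundary of a $7$-simplex (a closed $6$-manifold) one has $f_{D-2}=56$ and $f_{D-3}=70$, so you would need $m_{D-3}\geq 15$, while the auxiliary induction you propose only delivers $m_{D-3}\geq 1$. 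Nothing in the algorithm bounds from below how many matches the $(D-3)$-interface produces relative to this deficit, so the global face-count route does not close, and the ``cleanest bookkeeping'' you describe in the last paragraph inherits the same problem at every level.

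The paper sidesteps global counting with a local argument, and you should substitute it for your survival step. Fix any $(D-1)$-simplex $\alpha$; its boundary $\partial\alpha$ is a $(D-2)$-sphere, and the acyclic matching constructed through the $(D-2)$-interface restricts to an acyclic matching on $\partial\alpha$, because every facet of a simplex of $\partial\alpha$ again lies in $\partial\alpha$ (simplices of $\partial\alpha$ matched to cofacets outside $\partial\alpha$ simply become critical in the restriction). Since $\beta_{D-2}(S^{D-2})=1$, the Morse inequalities force at least one of the $D$ top-dimensional faces of $\partial\alpha$ to be critical for the restricted field, i.e.\ not matched to any $(D-3)$-face; and since no $(D-1)$-simplex has been matched at that stage, this $(D-2)$-simplex is still unmatched when the $(D-1)$-interface is processed. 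With that replacement the remainder of your write-up goes through.
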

\begin{proof} Consider an arbitrary $(D-1)$-simplex $\alpha_1$. Let $\partial \alpha_1$ denote the simplical boundary of an arbitrary $(D-1)$-simplex $\alpha_1$. Clearly, $\partial \alpha_1$ is a $(D-2)$-manifold without boundary. Therefore, using Morse inequalities there exists at least one $(D-2)$-simplex in $\partial \alpha$ that remains unmatched upon construction of Morse matching at the $(D-2)$-interface. When we look at boundaries of all such $(D-1)$-simplices $\alpha_i$, we may find several unmatched $(D-2)$-simplices upon conclusion of Morse matching at $(D-2)$-interface. Since one or more $(D-2)$-simplices remain unmatched at the start of the construction of Morse matching at the $(D-1)$-interface, at least one $(D-2)$-simplex can be matched to a $(D-1)$-simplex (without introducing cycles).    
\end{proof}

\begin{lem} After constructing Morse matching at the $D$-interface, the following ratio holds true: $$\frac{\#\textnormal{matched simplices}}{\#\textnormal{total simplices}} \geq \frac{4}{D+3}$$ \label{lem:manratio}
\end{lem}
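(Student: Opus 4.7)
The plan is to compute the ratio by exact counting, exploiting the closed-manifold structure. First, the no-boundary hypothesis forces every $(D-1)$-simplex to be shared by exactly two $D$-cofacets, while every $D$-simplex has $(D+1)$ $(D-1)$-facets; double counting then gives $2\,n_{D-1}=(D+1)\,n_D$, i.e.\ $n_{D-1}=\frac{(D+1)\,n_D}{2}$, so that the $D$-interface contains
\[
n_D + n_{D-1} \;=\; \frac{(D+3)\,n_D}{2}
\]
simplices in total.

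Next, I would use the fact that \textbf{manifoldOpt} is invoked on the dual graph $\Gamma(\mathcal{K})$, which is connected because $\mathcal{K}$ is a connected closed $D$-manifold. A spanning tree of $\Gamma(\mathcal{K})$ has $n_D-1$ edges, yielding a matching that pairs every $D$-simplex except one with a distinct $(D-1)$-facet. This contributes exactly $n_D-1$ gradient pairs, i.e.\ $2(n_D-1)$ matched simplices, at the $D$-interface itself. On top of this, some $(D-1)$-simplices were already matched \emph{downward} to $(D-2)$-simplices while processing the $(D-1)$-interface; write $r_{D-1}$ for the number of such pre-matches. Combining, the total count of matched simplices that lie in the $D$-interface is at least $2(n_D-1)+r_{D-1}$, and the claimed ratio follows once we show $r_{D-1}\ge 2$, since then
\[
\frac{2(n_D-1)+r_{D-1}}{(D+3)\,n_D/2} \;\ge\; \frac{2n_D}{(D+3)\,n_D/2} \;=\; \frac{4}{D+3}.
\]

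The main obstacle I anticipate is sharpening Lemma~\ref{lem:d2} from $r_{D-1}\ge 1$ to $r_{D-1}\ge 2$, because the naive per-group amortization from Lemma~\ref{lem:interfaceratio} yields only $\tfrac{2}{D+2}<\tfrac{4}{D+3}$. I would try to extract this strengthening from the proof of Lemma~\ref{lem:d2} itself: the boundary $\partial\alpha$ of any $(D-1)$-simplex $\alpha$ is a $(D-2)$-sphere with Betti numbers $b_0=b_{D-2}=1$, so the Morse inequalities force at least one critical $(D-2)$-simplex in $\partial\alpha$ to survive the $(D-2)$-interface, and by picking two different $(D-1)$-simplices $\alpha_1,\alpha_2$ whose boundaries contribute distinct critical $(D-2)$-simplices, each such simplex is then matched upward during the $(D-1)$-interface by line~\ref{lst:line:cantriplet} of Algorithm~\ref{alg:interface}, yielding two distinct $(D-1)$-matches.

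If this strengthening turns out to be fragile, the fallback I would pursue is a canonical-triplet grouping analogous to Lemma~\ref{lem:interfaceratio}, modified so that the one critical $D$-simplex and its $(D-1)$-facets form a designated ``anchor'' group whose ratio is computed separately and amortized against the gradient-pair groups. The crucial leverage in either route is that the critical $D$-count is pinned at exactly $1$ by the dual-graph spanning-tree optimum, which is what drives the improvement from $\tfrac{2}{D+2}$ to $\tfrac{4}{D+3}$.
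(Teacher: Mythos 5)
Your skeleton is the same as the paper's: the incidence count $2n_{D-1}=(D+1)n_D$, the spanning tree of the dual graph $\Gamma(\mathcal{K})$ yielding $n_D-1$ gradient pairs and exactly one critical $D$-simplex, and an appeal to Lemma~\ref{lem:d2} for the extra slack needed to pass from $\frac{2}{D+2}$-type bounds to $\frac{4}{D+3}$. Where you diverge is the bookkeeping. The paper does \emph{not} credit the $k$ previously matched $(D-1)$-simplices to the $D$-interface numerator; instead it removes them from the denominator, writing the ratio as $2(N-1)$ over $\left(\frac{D+1}{2}+1\right)(N-1)+1-k$, so that $k\ge 1$ (which is all Lemma~\ref{lem:d2} provides) is exactly enough to absorb the $+1$ contributed by the lone critical $D$-simplex. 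You keep the full $n_{D-1}=\frac{(D+1)n_D}{2}$ in the denominator and add $r_{D-1}$ to the numerator, which is why your calculation demands $r_{D-1}\ge 2$. (Two side effects of your accounting: the simplices counted in $r_{D-1}$ are then counted twice, here and at the $(D-1)$-interface, which corrupts the summation over interfaces used to deduce the global theorem; and your denominator is the exact edge count $\frac{(D+1)N}{2}$ where the paper charges at most $\frac{D+1}{2}$ facets to each of the $N-1$ regular $D$-simplices, which is precisely where the two ledgers part ways.)

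The genuine gap is the claimed strengthening $r_{D-1}\ge 2$, and it does not follow from the mechanism you invoke. Line~\ref{lst:line:cantriplet} matches an unmatched $(D-2)$-simplex $\tau$ upward only when \emph{all} of its $(D-1)$-cofacets are unmatched. So even if you produce two distinct critical $(D-2)$-simplices $\gamma_1\neq\gamma_2$ surviving the $(D-2)$-interface, it can happen that $\gamma_1$ ends up matched to a cofacet $\beta$ that is also a cofacet of $\gamma_2$; then $\gamma_2$ has a matched cofacet, line~\ref{lst:line:cantriplet} never fires for it, and the only downward-matched $(D-1)$-simplex is $\beta$, i.e.\ $k=1$. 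Nothing in Algorithm~\ref{alg:interface} excludes this. There is also a prior issue: Lemma~\ref{lem:d2} yields one surviving critical $(D-2)$-simplex in each $\partial\alpha_i$, but these could coincide for the $\alpha_i$ you happen to pick, so even obtaining two \emph{distinct} $\gamma$'s requires an argument (e.g.\ choosing $\alpha_1,\alpha_2$ sharing no $(D-2)$-face, which exists for $D\ge 3$). The repair that stays closest to the paper is to drop $r_{D-1}$ from the numerator, subtract the $k$ pre-matched $(D-1)$-simplices from the denominator instead, and then invoke only $k\ge 1$ as the paper does.
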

\begin{proof} 
 Let $k$ denote the number of $(D-1)$-simplices that are matched at the conclusion of construction of Morse matching at the $(D-1)$-interface. Using Lemma~\ref{lem:d2}, we have $k\geq1$.  \\
Now consider the dual graph structure of the $D$-interface. The vertex degree of the dual graph is bounded by $D+1$.  So the total number of edges in the dual graph is smaller than $\nicefrac{(D+1)}{2}$. Applying the optimal algorithm in Appendix~\ref{sec:Optimal-Algorithms-for}, ensures that we have only one critical simplex in the dual graph. If $N$ is the number of vertices in the dual graph, the following ratio holds true:      
$$\frac{\#\textnormal{matched simplices}}{\#\textnormal{total simplices}} \geq \frac{2(N-1)}{(\frac{D+1}{2}+1)(N-1)+1-k} \geq \frac{2(N-1)}{(\frac{D+1}{2}+1)(N-1)} \geq\frac{4}{D+3}$$
\end{proof}

Note that $\nicefrac{4}{(D+3)}>\nicefrac{2}{(D+1)}$ for all $D\geq3$.
So the worst ratio over all $d$-interfaces, where $1\leq d\leq D$,
is $\frac{2}{(D+1)}$. Since the optimal number of regular simplices
is bounded by the total number of simplices, we get the following
theorem.

\begin{thm} For $D\geq3$, Algorithm 2 provides a $\nicefrac{2}{(D+1)}$-factor approximation for the Max Morse Matching problem for manifolds without boundary.
\end{thm}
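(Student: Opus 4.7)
The plan is to combine the per-interface ratios already proven in Lemma~\ref{lem:interfaceratio} and Lemma~\ref{lem:manratio} into a single bound across the entire complex, using the fact that the simplex set of $\mathcal{K}$ is partitioned by dimension and each gradient pair lives in exactly one $d$-interface. Since Lemma~\ref{lem:interfaceoutputacylic} guarantees that the combined orientation of $\mathcal{H_V}$ is acyclic, the output is a valid Morse matching and it remains purely to lower-bound the number of regular simplices produced.

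First I would itemize the contribution of each interface. For the $1$-interface, the optimal algorithm (Appendix~\ref{sec:Optimal-Algorithms-for}) leaves only a single critical $0$-simplex, so the ratio of matched to total simplices considered at this stage is essentially $1$ and is dominated by the other ratios. For every intermediate $d$-interface with $1<d<D$, Lemma~\ref{lem:interfaceratio} gives ratio $\nicefrac{2}{(d+2)}$, whose minimum over $d\in\{2,\dots,D-1\}$ is attained at $d=D-1$, yielding $\nicefrac{2}{(D+1)}$. For the $D$-interface, Lemma~\ref{lem:manratio} gives $\nicefrac{4}{(D+3)}$, and the elementary inequality $\nicefrac{4}{(D+3)}>\nicefrac{2}{(D+1)}$ for $D\geq 3$ shows this beats the intermediate bound.

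Next I would bundle these local ratios into a global bound. Each simplex of $\mathcal{K}$ is accounted for in exactly one interface (either as a facet or a cofacet at the point when that interface is processed), and the ``matched simplices over total simplices'' ratio in each interface is at least $\nicefrac{2}{(D+1)}$. Summing numerators and denominators over all interfaces, and using the elementary fact that if $A_i/B_i \ge r$ for each $i$ then $(\sum A_i)/(\sum B_i) \ge r$, we conclude that the total number of regular simplices produced by Algorithm~\ref{alg:interface} is at least $\nicefrac{2}{(D+1)}$ times the total number of simplices in $\mathcal{K}$. Since the optimum $OPT$ for the Max Morse Matching problem is trivially bounded above by the total number of simplices, this gives the stated $\nicefrac{2}{(D+1)}$-factor approximation.

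The only real subtlety — not a deep one, but worth stating carefully — is the bookkeeping in the ``canonical triplet'' grouping used by Lemma~\ref{lem:interfaceratio} meshing correctly with the bookkeeping in Lemma~\ref{lem:manratio}, which charges $(D-1)$-simplices differently depending on whether they were matched during the $(D-1)$-interface or the $D$-interface. I would explicitly note that the $k$ of Lemma~\ref{lem:d2} counts $(D-1)$-simplices already matched at the $(D-1)$-interface and therefore already credited to that interface's ratio, so there is no double counting when we add the $D$-interface contribution. With that observation in place, the theorem follows by taking the minimum of the per-interface ratios and invoking $OPT \leq |\mathcal{K}|$.
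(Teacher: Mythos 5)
Your proposal follows essentially the same route as the paper: take the minimum of the per-interface ratios from Lemma~\ref{lem:interfaceratio} (worst case $\nicefrac{2}{(D+1)}$ at $d=D-1$) and Lemma~\ref{lem:manratio} (with $\nicefrac{4}{(D+3)}>\nicefrac{2}{(D+1)}$ for $D\geq3$), combine them over the dimension-wise partition of the simplices, and conclude via $OPT\leq$ total number of simplices. Your extra remarks on the mediant inequality and on avoiding double counting of $(D-1)$-simplices are correct elaborations of bookkeeping the paper leaves implicit, not a different argument.
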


We will like to make two remarks here regarding the approximation
factor. Firstly, the ratio is not affected by line~\ref{lst:line:canmatch}
(first step) of procedure \textbf{intermediateApx}(). It depends entirely
on line~\ref{lst:line:cantriplet} (second step) of \textbf{intermediateApx}().
We include a matching based preprocessing step prior to applying the
second step because in practice, doing so, gives significantly better
results. Secondly, the approximation ratio is over the \emph{total
number of simplices}. In that sense, Algorithm 2 and its analysis
helps further our understanding of combinatorial construction of manifolds.
In other words, irrespective of the complex size, the homology or
the presence of non-collapsible elements, we can always collapse
at least $\nicefrac{2}{(D+1)}$ number of simplices in that manifold!

\subsection{A $\nicefrac{2}{D}$-factor approximation algorithm for simplicial
manifolds}

\selectlanguage{english}%
\begin{figure}
\centering

\includegraphics[width=12cm]{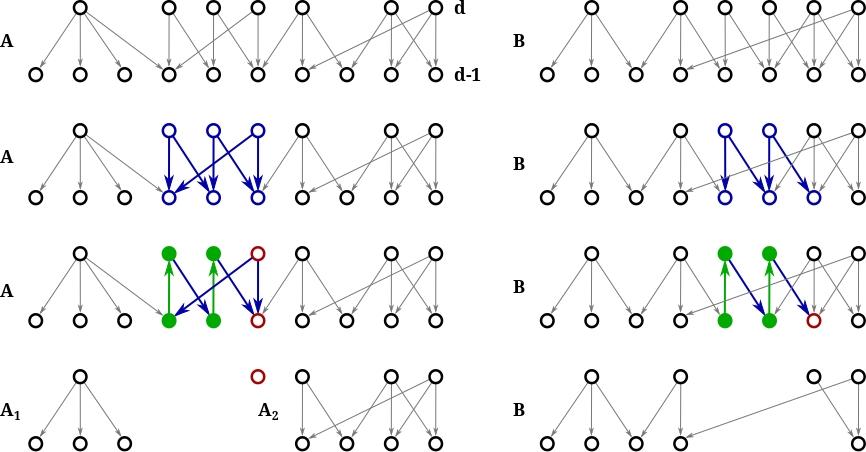}

\caption{Algorithm 3 processes the min\selectlanguage{american}%
-\selectlanguage{english}%
facet component in the $d$-interface (bold edges). Regular simplices
are denoted by filled vertices. Critical simplices and unprocessed
simplices are denoted by hollow vertices. Left: Deletion of $(d-1)$-simplices
of a min-facet component disconnects the graph. Right: Deletion of
$(d-1)$-simplices of the new min-facet component keeps the graph
connected. This process continues until none of the $d$-simplices
have any facets left.}

\label{fig:mincomp}
\end{figure}

\selectlanguage{american}%
Once again we restrict our attention to simplicial manifolds without
boundary. We build on Algorithm 2 by exploiting a finer substructure
within each interface to obtain a further improvement in ratio for
simplicial manifolds. We begin with some definitions.

\begin{defn}[facet degree, min-facet simplex of the $d$-interface] The number of facets incident on a simplex is defined as its \emph{facet degree}. For the $d$-interface, consider the subset of $d$-simplices $\mathcal{S}$ with at least one facet. We say that a $d$-simplex is a \emph{min-facet simplex} if over all simplices in $\mathcal{S}$, it has the minimum number of facets.  
\end{defn}

\begin{defn}[min-facet component of the $d$-interface] A \emph{min-facet component} is a subgraph of the $d$-interface that is a maximal connected graph induced by a set of min-facet simplices of the $d$-interface. 
\end{defn}

Like in Algorithm~\ref{alg:interface}, we process the Hasse graph
one $d$-interface at a time starting with the $1$-interface and
terminating with the $D$-interface. Also, like in Algorithm~\ref{alg:interface},
we use optimal algorithms to process the $1$-interface and the $D$-interface
of the Hasse graph. Only the intermediate interfaces are processed
differently. The procedure for handling intermediate interfaces is
outlined in Algorithm~\ref{alg:minfacet}.

\begin{algorithm}
\caption{The Min-Facet Component Algorithm}

\begin{algorithmic}[1]

\Procedure{interApxMinFacet}{$\mathcal{G}^d$}

\While{$\textbf{sizeOfMinFacet}(\mathcal{G}^d)>0$}
	\LState{$\mathcal{F_C}\leftarrow\textbf{extractMinFacetComponent}(\mathcal{G}^d)$}
	\LState{Apply \textbf{intermediateApx}($\mathcal{F_C},d$) from Algorithm~\ref{alg:interface}}
\EndWhile

\EndProcedure
\end{algorithmic}

\label{alg:minfacet} 
\end{algorithm}

By design, procedure \textbf{intermediateApx}() from Algorithm~\ref{alg:interface}
need not process the entire $d$-interface $\mathcal{G}^{d}$ at
one go. It may take any subgraph of the $d$-interface as its input.
The key idea is to iteratively compute Morse matching by executing
\textbf{intermediateApx}() on a min-facet component and after designing
a vector field on this component, we subsequently delete it from the
$d$-interface $\mathcal{G}^{d}$. As a consequence, $\mathcal{G}^{d}$
grows increasingly sparse and when the entire $d$-interface has
no edges left the while loop terminates. Figure~\ref{fig:mincomp}
illustrates sample executions of the Algorithm~\ref{alg:minfacet}. 

\begin{lem} \label{lem:grpath} If the $d$-interface of the Hasse graph is connected then there exists a gradient path connecting any two simplices $\alpha^{d-1}$ and $\beta^{d-1}$.
\end{lem}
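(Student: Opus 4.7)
My plan is to identify a gradient path with a directed alternating walk in the oriented Hasse graph $\OOHKCC$ computed by Algorithm~\ref{alg:minfacet}: each up-edge $\sigma_i\to\tau_i$ of such a walk encodes a gradient pair $\langle\sigma_i,\tau_i\rangle\in\VCC$, while each down-edge $\tau_i\to\sigma_{i+1}$ encodes a facet incidence $\sigma_{i+1}\prec\tau_i$ that is not used in the matching. Under this correspondence, proving the lemma reduces to exhibiting a directed alternating walk of this form inside the $d$-interface of $\OOHKCC$ from $\alpha$ to $\beta$.

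I would first record a completeness property of the constructed matching: by the forced-pairing step on line~\ref{lst:line:cantriplet} of \textbf{intermediateApx}(), a $(d-1)$-simplex can be left critical only when every one of its cofacets is already matched to a different facet; otherwise it is itself matched to one of its cofacets. Consequently, at every regular $(d-1)$-simplex of the $d$-interface there is an outgoing up-edge to its matched cofacet, and from each matched $d$-simplex there are outgoing down-edges to its remaining $d$ facets. This is the structural scaffolding that keeps the alternating walk from stalling.

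With this in hand, I would proceed by induction on the undirected bipartite-graph distance from the current endpoint to $\beta$ in the connected $d$-interface. Starting at $\alpha$, take the matched up-edge to $\tau_0$ and then choose a down-edge to a facet $\sigma_1$ whose undirected distance to $\beta$ is strictly smaller than that of $\alpha$; such a choice exists precisely because the $d$-interface is connected and $\tau_0$ offers $d$ alternative facets to leave through. Iterating this step produces an alternating walk whose distance to $\beta$ decreases each round, so it must reach $\beta$ in finitely many steps.

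The main obstacle is verifying that this walk is a genuine gradient path, namely that the simplices it visits are all distinct. For this I would invoke Lemma~\ref{lem:interfaceoutputacylic}, which guarantees that the orientation of $\OOHKCC$ produced by the algorithm is acyclic; any directed walk in an acyclic digraph is automatically simple, so no simplex can be revisited. Combined with the distance-decreasing inductive step this yields termination together with simplicity, delivering the required gradient path from $\alpha$ to $\beta$.
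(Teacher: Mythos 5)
Your approach has two genuine gaps, both stemming from the decision to force the walk's up-steps to be matching edges of the vector field computed by Algorithm~\ref{alg:minfacet}. First, the walk cannot always start: the forced-pairing step on line~\ref{lst:line:cantriplet} only guarantees that a $(d-1)$-simplex left critical has some cofacet matched \emph{to a different facet}; it does not give $\alpha$ itself an outgoing up-edge. If $\alpha$ is critical in the $d$-interface, or is a regular sink matched downward to a $(d-2)$-simplex, there is no matched cofacet $\tau_0$ to move into, and your construction stalls at step zero. Second, the distance-decreasing induction is unjustified: since the up-step out of each $(d-1)$-simplex is dictated by the matching, the cofacet $\tau_0$ you are forced into need not lie on any shortest path to $\beta$, so none of its remaining facets need be strictly closer to $\beta$ than $\alpha$ is. Connectedness guarantees that \emph{some} cofacet of $\alpha$ leads toward $\beta$, not that the matched one does. (Your acyclicity observation is fine as far as it goes, but it cannot rescue termination.)

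The intended reading of the lemma is much weaker, and the paper's proof is correspondingly a one-liner: the $d$-interface is a connected bipartite graph, so an ordinary undirected simple path from $\alpha^{d-1}$ to $\beta^{d-1}$ exists, and any such path automatically alternates $(d-1)$-simplex, $d$-simplex, $(d-1)$-simplex, \dots, which is all that ``gradient path'' is taken to mean here. Indeed the only consumer of this lemma, Lemma~\ref{lem:dminus}, uses nothing beyond the facet incidences $\alpha_{i-1}\prec\beta_i\succ\alpha_i$ along the sequence; membership of the pairs in $\VCC$ is never invoked. Reading the definition of gradient path strictly, as you did, would actually make the statement false (a critical $\alpha$ heads no gradient pair), which is a sign that the purely graph-theoretic interpretation is the one intended.
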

\begin{proof} Since the $d$-interface is connected there exists a path in the $d$-interface graph connecting any two vertices $\alpha^{d-1}$ and $\beta^{d-1}$. However, it is easy to see that any path connecting $\alpha^{d-1}$ and $\beta^{d-1}$ in the $d$-interface graph is a gradient path in the discrete Morse theory sense.   
\end{proof}

\selectlanguage{english}%
\begin{figure}
\centering

\includegraphics[scale=0.5]{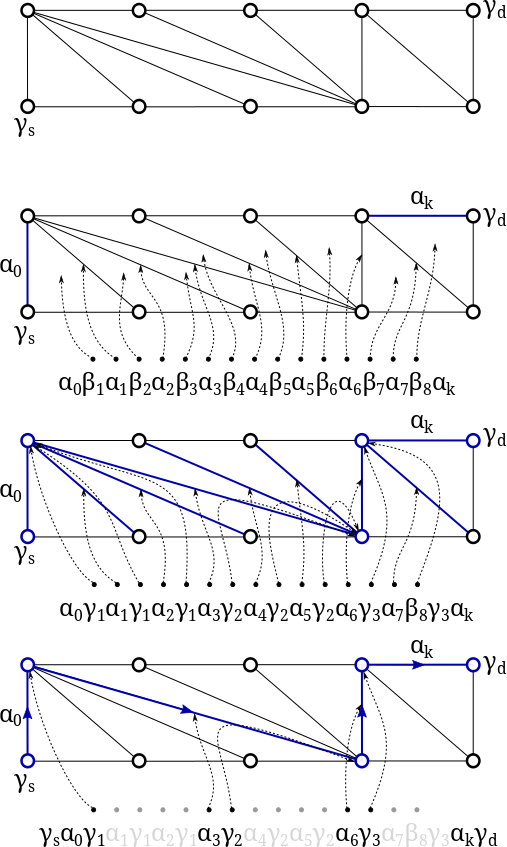}

\caption{In this figure, we wish to establish the connectivity of $\gamma_{s}$
and $\gamma_{d}$ in the $1$-interface. Let $\alpha_{0}$ and $\alpha_{k}$
be $1$-simplices containing $\gamma_{s}$ and $\gamma_{d}$ respectively.
It is known that the $2$-interface is connected. So, we can find
the gradient sequence $\alpha_{0}\beta_{1}\alpha_{1}\dots\beta_{8}\alpha_{k}$.
If we let $\gamma_{i}=\alpha_{i-1}\cap\alpha_{i}$, then we can extract
a new sequence $\alpha_{0}\gamma_{1}\alpha_{1}\dots\gamma_{8}\alpha_{k}$.
Finally, as explained in Lemma~\ref{lem:dminus}, this sequence can
be used to obtain subsequence $\gamma_{s}\alpha_{0}\gamma_{1}\alpha_{3}\gamma_{2}\alpha_{6}\gamma_{3}\alpha_{k}\gamma_{d}$
which establishes connectivity between $\gamma_{s}$ and $\gamma_{d}$.}

\label{fig:dminusminus}
\end{figure}
\begin{figure}
\centering

\includegraphics[width=12cm]{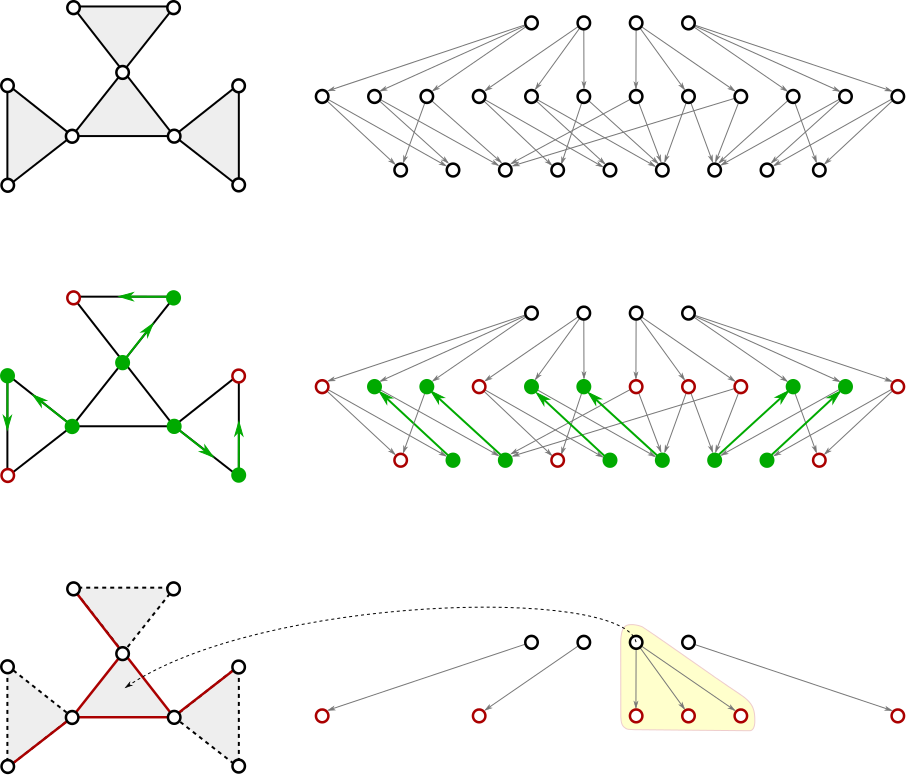}

\caption{Unlike in the case of manifolds without boundary, in this example
we have a complex whose $2$-interface is disconnected to begin with.
After designing vector field for the $1$-interface, suppose we delete
all the matched 1-simplices from the Hasse graph. Then there exists
a connected component in the $2$-interface for which all $2$-simplices
of that connected component has three facets (In this case, this connected
component comprises of a single simplex with all three solid edges). }

\label{fig:dplusminus}
\end{figure}

\selectlanguage{american}%
\begin{lem} \label{lem:dminus} If the $d$-interface of the Hasse graph is connected then the $(d-1)$-interface is connected. 
\end{lem}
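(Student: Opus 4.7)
The plan is to reduce connectivity of the $(d-1)$-interface to the assumed connectivity of the $d$-interface via an explicit path-lifting argument, exploiting a purely simplicial geometric fact.

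First I would handle the case of two arbitrary $(d-1)$-simplices $\alpha_0$ and $\alpha_k$. Since the $d$-interface is a connected bipartite graph on $d$- and $(d-1)$-simplices, there is an alternating walk $\alpha_0,\beta_1,\alpha_1,\beta_2,\ldots,\beta_k,\alpha_k$, where each $\beta_i$ is a $d$-simplex with $\alpha_{i-1}\prec\beta_i$ and $\alpha_i\prec\beta_i$. The crucial simplicial observation is that any two distinct $(d-1)$-faces of a $d$-simplex share a common $(d-2)$-face: a $d$-simplex has $d+1$ vertices, its $(d-1)$-faces arise by deleting a single vertex, and two such faces (deleting two distinct vertices) intersect in the $(d-2)$-face obtained by deleting both. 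Hence $\gamma_i := \alpha_{i-1}\cap\alpha_i$ is a $(d-2)$-simplex with $\gamma_i\prec\alpha_{i-1}$ and $\gamma_i\prec\alpha_i$, yielding a lifted walk $\alpha_0,\gamma_1,\alpha_1,\gamma_2,\ldots,\gamma_k,\alpha_k$ entirely in the $(d-1)$-interface (with any trivial steps where $\alpha_{i-1}=\alpha_i$ simply omitted). This is precisely the construction sketched in Figure~\ref{fig:dminusminus}.

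To finish, I need to connect every $(d-2)$-simplex into this picture. Given $(d-2)$-simplices $\gamma_s$ and $\gamma_d$, pick $(d-1)$-simplices $\alpha_0\succ\gamma_s$ and $\alpha_k\succ\gamma_d$; these exist because in a simplicial manifold without boundary (the working setting of Section~\ref{sub:manifold-approx}) every $(d-2)$-simplex is a face of some $(d-1)$-simplex. Prepending $\gamma_s,\alpha_0$ and appending $\alpha_k,\gamma_d$ to the walk above yields a connecting path in the $(d-1)$-interface, proving its connectivity.

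I do not anticipate a substantive obstacle; the argument is essentially a one-line simplicial lift. The only subtlety worth flagging is that the key step, namely $\alpha_{i-1}\cap\alpha_i$ being a $(d-2)$-face of the shared cofacet $\beta_i$, is genuinely specific to simplices and would fail for arbitrary regular CW complexes, so the proof must make that simplicial combinatorics explicit. A minor bookkeeping point is to discard any consecutive repetitions $\alpha_{i-1}=\alpha_i$ in the original alternating walk before defining the $\gamma_i$, which costs nothing since removing them still leaves a valid walk.
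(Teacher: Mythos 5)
Your proposal is correct and follows essentially the same route as the paper: take a path between two $(d-1)$-simplices in the connected $d$-interface, replace each shared $d$-cofacet $\beta_i$ by the $(d-2)$-face $\gamma_i=\alpha_{i-1}\cap\alpha_i$, and extend at both ends to connect arbitrary $(d-2)$-simplices. Your explicit vertex-deletion justification that two distinct facets of a simplex meet in a codimension-two face, and your handling of degenerate repetitions, only make explicit what the paper's proof asserts more briefly.
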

\begin{proof} Suppose the $d$-interface of the Hasse graph is connected. If any two arbitrary $(d-2)$-simplices $\gamma_s$ and $\gamma_d$ can be shown to be connected, then the $(d-1)$-interface is connected.  To begin with let $\alpha_0$ be any $(d-1)$-simplex with $\gamma_s$ as its facet and $\alpha_k$ be any $(d-1)$-simplex with $\gamma_d$ as its facet. If $\alpha_0=\alpha_k$, there is nothing to prove. So for the remainder of the proof we shall assume that $\alpha_0\neq\alpha_k$. Since the $d$-interface is connected, using Lemma~\ref{lem:grpath}, there exists a gradient path  $\alpha_0,\beta_1,\alpha_1\dots,\alpha_{(k-1)},\beta_k,\alpha_k$ connecting $\alpha_0$ and $\alpha_k$ where all $\beta_i,i\in[1,k]$ are $d$-simplices and all $\alpha_i,i\in[0,k]$ are $(d-1)$-simplices. Now, since every $d$-simplex $\beta_i,i\in[1,k]$ is common to two $(d-1)$-simplices $\alpha_{(i-1)}$ and $\alpha_i$ belonging to the gradient path connecting $\alpha_0$ and $\alpha_k$, we know that $\alpha_{(i-1)}$ and $\alpha_i$ will share a facet which we denote by $\gamma_i^{(d-2)}$. In other words, we construct a new simplicial sequence $\mathcal{S}=\alpha_0,\gamma_1,\alpha_1\dots,\alpha_{(k-1)},\gamma_k,\alpha_k$ from gradient path $\alpha_0,\beta_1,\alpha_1\dots,\alpha_{(k-1)},\beta_k,\alpha_k$, where $\gamma_i=\alpha_{(i-1)}\cap\alpha_i$. However, note that in this case $\gamma_i$ may possibly be equal to $\gamma_j$ for some $i\neq j$. See Fig.\ref{fig:dminusminus} for an example. Without loss of generality assume $\gamma_s\neq\gamma_1$ and $\gamma_d\neq\gamma_k$. We prove connectivity of $\gamma_s$ and $\gamma_d$ by induction. For base case, we note that $\gamma_s$ is connected to $\gamma_1$ since $\gamma_1$ and $\gamma_s$ are facets of simplex $\alpha_0$. For induction step, suppose $\gamma_s$ is connected to $\gamma_i$. Now consider the next two elements in sequence $\mathcal{S}$ namely $\alpha_i$ and $\gamma_{(i+1)}$. If $\gamma_i=\gamma_{(i+1)}$, then $\alpha_i$ makes no contribution towards finding a path connecting $\gamma_s$ and $\gamma_d$ and hence we ignore it. Else if $\gamma_i\neq\gamma_{(i+1)}$, then both $\gamma_i$ and $\gamma_{(i+1)}$ are facets of $\alpha_i$ and hence $\gamma_i$ is connected to $\gamma_{(i+1)}$ in the $(d-1)$ interface. By transitivity, $\gamma_s$ is connected to $\gamma_{(i+1)}$, which completes the induction step. Finally, both $\gamma_k$ and $\gamma_d$ are facets of $\alpha_k$ and hence $\gamma_k$ is connected to $\gamma_d$. By transitivity $\gamma_s$ is connected to $\gamma_d$. This proves that there exists a subgraph of the $(d-1)$-interface that connects any two arbitrary $(d-2)$-simplices  $\gamma_s$ and $\gamma_d$. Hence proved.
\end{proof}

\begin{lem}\label{lem:manconnected} For a connected $D$-manifold without boundary, all $d$-interfaces are connected.
\end{lem}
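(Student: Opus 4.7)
The plan is to prove the claim by downward induction on $d$, running from $d = D$ down to $d = 1$. Lemma~\ref{lem:dminus} already says that connectedness of the $d$-interface implies connectedness of the $(d-1)$-interface, so the inductive step is immediate and the whole argument reduces to establishing the base case $d = D$.

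For the base case, I would first invoke the standard strong-connectedness fact that the dual graph $\Gamma(\KCC)$ of a connected closed $D$-manifold is connected. Granted this, I would then show that connectedness of $\Gamma(\KCC)$ lifts to connectedness of the $D$-interface as follows. Because $\KCC$ has no boundary, every $(D-1)$-simplex $\tau$ is a facet of exactly two $D$-simplices $\sigma, \sigma'$, and these two $D$-simplices are by definition adjacent in $\Gamma(\KCC)$. In particular, every $(D-1)$-simplex of the $D$-interface is attached (in the bipartite sense) to at least one $D$-simplex. Now for any two $D$-simplices $\rho_0, \rho_m$, pick a path $\rho_0, \rho_1, \dots, \rho_m$ in the dual graph; each consecutive pair $\rho_{i-1}, \rho_i$ shares a $(D-1)$-facet $\tau_i$, and the sequence $\rho_0, \tau_1, \rho_1, \tau_2, \dots, \tau_m, \rho_m$ is a path in the $D$-interface. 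Combining these two observations yields the connectedness of the $D$-interface.

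The main obstacle is justifying the strong-connectedness fact that a connected closed $D$-manifold has a connected dual graph. The cleanest approach I would take is topological: suppose the dual graph has components $C_1, \dots, C_r$ and let $X_j = \bigcup_{\sigma \in C_j} \sigma \subseteq |\KCC|$. Each $X_j$ is manifestly closed. To show $X_j$ is open, one argues that for any lower-dimensional face $\rho$, the $D$-simplices containing $\rho$ all lie in the same dual-graph component; this uses that the link of $\rho$ in a manifold is a sphere, and can be proved by induction on codimension using the no-boundary hypothesis (codimension-$2$ is immediate since the link is $S^1$, and the higher-codimension cases reduce to connectedness of the dual graph of a lower-dimensional manifold). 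Having $X_j$ both open and closed in the connected space $|\KCC|$ forces $r = 1$.

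Once the $D$-interface base case is settled, the rest of the proof is a single downward application of Lemma~\ref{lem:dminus}: $D$-interface connected $\Rightarrow$ $(D-1)$-interface connected $\Rightarrow \cdots \Rightarrow$ $1$-interface connected, which is exactly the conclusion of the lemma.
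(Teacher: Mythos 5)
Your proposal is correct, and its overall skeleton coincides with the paper's: establish connectedness of the $D$-interface as a base case, then propagate downward through the interfaces by repeated application of Lemma~\ref{lem:dminus}. The difference lies entirely in how the base case is handled. The paper disposes of it in one line with a homological argument: the number $K$ of connected components of the $D$-interface equals the top Betti number $\beta_D$ (each dual-graph component carries an independent mod-$2$ fundamental cycle, since every $(D-1)$-simplex lies in exactly two $D$-simplices), and a connected closed $D$-manifold has $\beta_D=1$. You instead prove connectedness of the dual graph $\Gamma(\KCC)$ by a point-set argument --- the union of the $D$-simplices in a dual-graph component is clopen in $|\KCC|$, using that the $D$-simplices around any lower-dimensional face form a single dual component because links in a closed manifold are (homology) spheres --- and then lift this to the $D$-interface by inserting the shared $(D-1)$-facets along a dual path. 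Your route is more self-contained in that it avoids invoking $\beta_D=1$, but it is longer and the link argument carries some delicacy (for non-PL triangulations links are only homology spheres, though these are still connected in dimension $\geq 1$, so the argument survives). The paper's route is shorter but outsources the work to a standard homological fact. Both are valid; the step from dual-graph connectivity to $D$-interface connectivity, which you spell out, is left implicit in the paper.
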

\begin{proof} Let $K$ be the number of connected components of the $D$-interface. Then clearly $\beta_D=K$. Since a connected manifold without boundary has $\beta_D=1$, we conclude that the $D$-interface is connected. Combining this fact with Lemma~\ref{lem:dminus} implies that all $d$-interfaces are connected.
\end{proof}

\begin{lem}\label{lem:firstsmall} Following the design of gradient vector field for the $(d-1)$-interface, if the deletion of regular sinks of $d$-interface graph disconnects the $d$-interface, then every connected component has at least one simplex with facet degree smaller than $d+1$.
\end{lem}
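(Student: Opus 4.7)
The plan is to argue by contradiction using Lemma~\ref{lem:manconnected} as the sole structural input: for a connected $D$-manifold without boundary the original $d$-interface is connected. So suppose some post-deletion connected component $C$ consists entirely of $d$-simplices of facet degree $d+1$; equivalently, no $(d-1)$-facet of any $d$-simplex in $C$ was among the regular sinks that were deleted.

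Since the deletion is assumed to disconnect the $d$-interface, there exists a $d$-simplex $\sigma' \notin C$. By Lemma~\ref{lem:manconnected} the original $d$-interface is connected, so we can choose an alternating path $\sigma_0, \tau_1, \sigma_1, \ldots, \tau_m, \sigma_m$ in the original graph with $\sigma_0 \in C$ and $\sigma_m = \sigma'$, where the $\sigma_i$ are $d$-simplices and the $\tau_i$ are $(d-1)$-simplices. Let $j$ be the least index with $\sigma_j \notin C$; then $\sigma_{j-1} \in C$ and $\tau_j$ is a facet of both $\sigma_{j-1}$ and $\sigma_j$.

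If $\tau_j$ had not been deleted, then it would still connect $\sigma_{j-1}$ and $\sigma_j$ in the remaining graph, forcing them into the same component and contradicting $\sigma_{j-1} \in C$, $\sigma_j \notin C$. Hence $\tau_j$ was deleted. But $\tau_j$ is a facet of $\sigma_{j-1} \in C$, so the facet degree of $\sigma_{j-1}$ in the remaining graph is at most $d$, contradicting our standing assumption. Therefore every component of the disconnected $d$-interface contains at least one $d$-simplex of facet degree less than $d+1$.

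The only care needed is to pick $j$ as the \emph{first} crossing index on the path, which automatically ensures that the deleted bridging $(d-1)$-simplex is a facet of a $d$-simplex belonging to $C$, rather than being adjacent to $C$ only from the outside. After that, the argument is a routine application of connectivity, so I do not anticipate any real obstacle.
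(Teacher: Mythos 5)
Your proof is correct, and it rests on the same structural input as the paper's: Lemma~\ref{lem:manconnected} guarantees the $d$-interface is connected before any regular sinks are removed, and the heart of both arguments is that any post-deletion component must abut a deleted $(d-1)$-simplex, which therefore lowers the facet degree of some $d$-simplex inside that component. The difference is organizational. The paper sequences the deletions, identifies the first sink $\gamma^{d-1}$ whose removal disconnects the graph, observes that each resulting piece must contain a cofacet of $\gamma$, and then says ``the same argument can be continued for subsequent deletions.'' You instead work directly with the final state: assuming some component $C$ has only full-degree $d$-simplices, you take a path in the original connected interface from $C$ to a vertex outside it, and the first crossing $\tau_j$ must have been deleted (else it would keep $\sigma_{j-1}$ and $\sigma_j$ in one component), yet it is a facet of $\sigma_{j-1}\in C$ — a contradiction. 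This one-shot first-crossing argument is arguably cleaner, since it replaces the paper's informal induction over the deletion sequence with a single explicit contradiction. The only points worth making explicit are (i) that a vertex outside $C$ exists because the deletion is assumed to disconnect the interface, and that one may take it to be a $d$-simplex since undeleted $(d-1)$-simplices retain all their cofacet edges, and (ii) your correct observation that $j$ must be chosen as the \emph{first} crossing so that the deleted bridge is a facet of a $d$-simplex of $C$ itself; you have both covered.
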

\begin{proof} From Lemma~\ref{lem:manconnected} we know that the $d$-interface is a single connected component to begin with. Suppose that the regular sinks of $d$-interface graph are deleted in some sequence. Suppose that $\gamma^{d-1}$ is the first simplex whose deletion disconnects the $d$-interface graph. Then every connected component (in the traditional graph theory sense) has at least one $d$-simplex which is incident on $\gamma^{d-1}$ and hence upon deletion of $\gamma^{d-1}$, every connected component will have at least one simplex with  facet degree smaller than $(d+1)$. The same argument can be continued for subsequent deletions and resulting disconnections.     
\end{proof}

To see that Lemma~\ref{lem:manconnected} is essential for Lemma~\ref{lem:firstsmall}
to work, we see an example in Fig.~\ref{fig:dplusminus} where lack
of connectivity in the $d$-interface leads to components (in the
$d$-interface) with minimum facet degree equal to $(d+1)$. 

\begin{lem}\label{lem:alwayssmall} For a $d$-interface, every min-facet component has facet degree bounded by $d$.
\end{lem}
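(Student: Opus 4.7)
The plan is to prove a slightly stronger inductive claim on the iterations of Algorithm~\ref{alg:minfacet}: at the start of every iteration, every connected component of the current $d$-interface subgraph that still contains a $d$-simplex with at least one remaining facet must contain a $d$-simplex of facet degree at most $d$. Since the facet degree of any min-facet component equals the global minimum facet degree over the subgraph, this stronger statement immediately implies the lemma.

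For the base case I would look at the $d$-interface $\mathcal{G}^d$ immediately after the regular sinks of $(d-1)$-interface processing have been removed. Lemma~\ref{lem:manconnected} shows that $\mathcal{G}^d$ starts out connected, and the argument behind Lemma~\ref{lem:firstsmall} then extends naturally: each surviving connected component is forced to contain a $d$-simplex adjacent to at least one deleted $(d-1)$-simplex, and such a simplex necessarily has facet degree strictly less than $d+1$. The only mildly subtle point is that at least one regular sink really got deleted, i.e., at least one $(d-1)$-simplex was matched during $(d-1)$-interface processing; this is automatic for manifolds without boundary, via the spanning-tree property of the optimal $1$-interface algorithm when $d=2$, and because the first min-facet component encountered during $(d-1)$-interface processing has a non-empty bipartite facet graph when $d\geq 3$, so the maximum-cardinality step of \textbf{intermediateApx} is guaranteed to produce a matched $(d-1)$-simplex.

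For the inductive step, suppose the claim holds at the start of some iteration and let $\mathcal{F_C}$ be the min-facet component extracted at that iteration; by the inductive hypothesis its facet degree is at most $d$. After \textbf{intermediateApx} and \textbf{deleteAndReorient} finish, every $(d-1)$-simplex of $\mathcal{F_C}$ is removed, each being either matched (hence regular) or unmatched with a matched cofacet (hence critical), along with every matched $d$-simplex of $\mathcal{F_C}$. I then need to verify the strong claim for every connected component $C$ of the post-deletion subgraph that contains a $d$-simplex with at least one facet. If $C$ was already a connected component disjoint from $\mathcal{F_C}$'s component in the pre-deletion subgraph, the inductive hypothesis applies verbatim. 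Otherwise $C$ is carved out of the pre-deletion component that contained $\mathcal{F_C}$, and I argue that any shortest path in the pre-deletion subgraph from $C$ to $\mathcal{F_C}$ must have length one, since a longer path would exit $C$ through an undeleted vertex and thereby re-merge $C$ with another component of the post-deletion subgraph, contradicting $C$'s maximality.

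The crucial combinatorial observation, and the step I expect to carry the most weight, is that the resulting direct edge between $C$ and $\mathcal{F_C}$ cannot emanate from a $(d-1)$-simplex in $C$: such a $(d-1)$-simplex would be a facet of a min-facet $d$-simplex in $\mathcal{F_C}$, but $\mathcal{F_C}$, being the subgraph induced by its min-facet $d$-simplices together with all their facets, already contains every $(d-1)$-facet of each of them, forcing the offending $(d-1)$-simplex to lie in $\mathcal{F_C}$, which is a contradiction. Hence the direct edge must run from a $d$-simplex $\sigma\in C$ to a now-deleted $(d-1)$-facet $\tau\in\mathcal{F_C}$ of $\sigma$, giving $\sigma$ facet degree at most $d$ in the post-deletion subgraph, which closes the induction and yields the lemma.
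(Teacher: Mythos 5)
Your proposal is correct and follows essentially the same route as the paper: induction over the sequence of min-facet component extractions, with the base case resting on Lemma~\ref{lem:manconnected} and Lemma~\ref{lem:firstsmall}, and the inductive step locating a $d$-simplex outside the deleted component that loses a facet. Your version is in fact slightly tighter than the paper's in two places -- you note explicitly that at least one regular sink must exist for the base case to apply, and your ``bridge edge'' observation (that the edge from a surviving component into $\mathcal{F_C}$ must run from a $d$-simplex to a deleted $(d-1)$-facet, since $\mathcal{F_C}$ already contains all facets of its min-facet simplices) handles the paper's Cases 3a and 3b uniformly rather than by separate appeal to the argument of Lemma~\ref{lem:firstsmall}.
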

\begin{proof} We prove the claim by induction. \\
Base Case: From Lemma~\ref{lem:firstsmall}, having deleted all the regular sinks of the $d$-interface, there exists at least one simplex with facet degree bounded by $d$ in every connected component. We arbitarily choose a min-facet simplex in one of the connected components of the $d$-interface and discover the min-facet component around it by exploring neighboring $d$-simplices iteratively. Such a min-facet component has facet degree bounded by $d$. We design vector field on this min-facet component and subsequently delete it from the $d$-interface.\\
Induction Hypothesis: Suppose that we have designed a vector field on $(i-1)$ min-facet components and subsequently deleted them. Each time there exists at least one simplex with facet degree bounded by $d$ in every connected component. 
Induction Step: Now we discover the $i^{th}$ min-facet component say $\mathcal{F}_i$. Suppose this min-facet component belongs to some connected component $\mathcal{C}_j$. \\
Case 1: If $\mathcal{F}_i$ consists of all vertices in $\mathcal{C}_j$ then after processing it and deleting its vertices the other connected components continue to satisfy the facet degree (bounded by $d$) condition. So there is nothing to prove. \\
Case 2: If $\mathcal{F}_i$ consists of all the $(d-1)$-simplices of $\mathcal{C}_j$. Then upon deletion of $\mathcal{F}_i$, all $d$-simplices of $\mathcal{C}_j\setminus\mathcal{F}_i$ will have zero facet degree and we will attempt to match the $d$-simplices of $\mathcal{C}_j\setminus\mathcal{F}_i$ for the $(d+1)$-interface.\\
Case 3: Suppose if $\mathcal{F}_i\subsetneq \mathcal{C}_j$ and $\mathcal{C}_j\setminus\mathcal{F}_i$ has one or more $(d-1)$-simplices. Clearly there exists at least one $d$-simplex say $\sigma$ in $\mathcal{C}_j\setminus\mathcal{F}_i$ with at least one edge incident on a $(d-1)$-simplex in $\mathcal{F}_i$ and at least one edge incident on  a $(d-1)$-simplex in $\mathcal{C}_j\setminus\mathcal{F}_i$. Having designed a gradient vector field on $\mathcal{F}_i$, we delete the regular simplices and the critical $(d-1)$ simplices belonging to $\mathcal{F}_i$. Now we consider two subcases that are illustrated in Fig.~\ref{fig:mincomp} \\
Case 3a: Consider the case when $\mathcal{C}_j$ stays connected after deleting the $i^{th}$ min-facet component. In this case the facet degree of $\sigma$ will reduce by at least 1 and hence the facet degree of $\sigma$ is bounded by $d$. There may be other simplices in $\mathcal{F}_i\subsetneq \mathcal{C}_j$ whose facet degree may also reduce. All other connected components are unaffected. So, every component will have min-facet degree bounded by $d$. \\
Case 3b: Now consider the case where upon deletion of $\mathcal{F}_i$, $\mathcal{C}_j$ splits into several components. Imagine that we are not deleting the simplices of $\mathcal{F}_i$ all at once, but sequentially. Making an argument along the lines of Lemma~\ref{lem:firstsmall}, we conclude that irrespective of which connected component the min-facet component is chosen from, it will have facet degree bounded by $d$.             
\end{proof} 

\begin{lem}\label{lem:minacyclic} An orientation of the min-facet component $\mathcal{F_{C}}$ based on the matchings computed by procedure interApxMinFacet() is acylic. 
\end{lem}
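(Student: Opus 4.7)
The plan is to reduce this lemma directly to Lemma~\ref{lem:interfaceacyclic} applied to $\mathcal{F_{C}}$ viewed in isolation. The procedure \textbf{interApxMinFacet}() produces the matching on $\mathcal{F_{C}}$ via a single invocation of \textbf{intermediateApx}($\mathcal{F_{C}},d$) from Algorithm~\ref{alg:interface}. Since $\mathcal{F_{C}}$ is a subgraph of the $d$-interface, it inherits the bipartite structure between $d$-simplices and $(d-1)$-simplices on which \textbf{intermediateApx}() is designed to operate. Hence the procedure is well-defined on $\mathcal{F_{C}}$ and one can check acyclicity separately for each of its two steps.

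The first step applies Algorithm~\ref{alg:frontier} to $\mathcal{F_{C}}$; by Lemma~\ref{lem:acyclic1}, this yields an acyclic matching-based orientation of $\mathcal{F_{C}}$. The second step introduces a matching $\langle \tau, \sigma_i \rangle$ for every unmatched $(d-1)$-simplex $\tau \in \mathcal{F_{C}}$ all of whose cofacets $\sigma_1^{d},\ldots,\sigma_K^{d}$ (restricted to $\mathcal{F_{C}}$) are also unmatched. Replicating the argument of Lemma~\ref{lem:interfaceacyclic}, the $\sigma_j$ are pairwise incomparable before the matching is introduced since they are all critical, and the new matching only creates ordering relations of the form $\sigma_j > \sigma_i$ for $j \neq i$, which cannot close a directed cycle.

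The only subtlety I anticipate is to justify that prior deletions of previously processed min-facet components (and of the critical $(d-1)$-simplices and regular sinks they produced) do not invalidate the reuse of Lemma~\ref{lem:interfaceacyclic}. This is immediate because the proof of that lemma is entirely local: it argues about directed paths and ordering relations within the bipartite graph currently passed to \textbf{intermediateApx}(), not about the full $d$-interface. The deletions merely shrink the input, and the local cycle-free guarantee carries over verbatim. Thus no substantial obstacle is expected; the lemma is in essence a corollary of Lemma~\ref{lem:interfaceacyclic}, packaged so as to serve as the per-component building block for the later global acyclicity argument about \textbf{interApxMinFacet}() on the entire $d$-interface.
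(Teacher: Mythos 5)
Your proposal is correct and matches the paper's approach: the paper simply states that the proof of this lemma is identical to that of Lemma~\ref{lem:interfaceacyclic}, which is precisely the reduction you carry out (step~1 acyclic by the Frontier Edges analysis, step~2 harmless because the newly matched cofacets were pairwise incomparable critical nodes). Your added remark that prior deletions only shrink the input and so do not affect the local argument is a reasonable explicit justification of what the paper leaves implicit.
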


The proof of the Lemma~\ref{lem:minacyclic} is identical to the proof
of Lemma~\ref{lem:interfaceacyclic}.

\begin{lem}\label{lem:allacyclic} An orientation of the $d$-interface of output graph $\mathcal{H_V}$ based on the matchings computed by procedure interApxMinFacet() is acylic. 
\end{lem}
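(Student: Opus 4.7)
I plan to prove this by induction on the number of min-facet components that have been processed within the $d$-interface, in direct analogy with the proof of Lemma~\ref{lem:acyclic1}. The base case, after a single min-facet component is processed, is exactly the statement of Lemma~\ref{lem:minacyclic}. For the inductive step, I will assume that after processing $\mathcal{F}_1,\dots,\mathcal{F}_i$ the partial orientation of the $d$-interface of $\mathcal{H_V}$ is acyclic, and then argue that processing $\mathcal{F}_{i+1}$ preserves acyclicity.

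The inductive step will be a contradiction argument. Suppose a directed cycle $\alpha_1 \to \beta_1 \to \alpha_2 \to \beta_2 \to \cdots \to \alpha_k \to \beta_k \to \alpha_1$ appears in $\mathcal{H_V}$ after processing $\mathcal{F}_{i+1}$, where each $\alpha_\ell \to \beta_\ell$ is a matched up-edge introduced while processing some component $\mathcal{F}_{j_\ell}$ and each $\beta_\ell \to \alpha_{\ell+1}$ is a default down-edge. The goal is to show that all $j_\ell$ coincide, so that the cycle lies entirely inside a single $\mathcal{F}_{j_\ell}$, contradicting Lemma~\ref{lem:minacyclic}. Let $p = \max_\ell j_\ell$ and fix an index $s$ with $j_s = p$; by the induction hypothesis $p = i+1$. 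The crux is the following observation: if $\beta_{s-1}$ were matched in some $\mathcal{F}_q$ with $q < p$, then since $\beta_{s-1}$ is a min-facet $d$-simplex of $\mathcal{F}_q$, the definition of the min-facet component forces every facet of $\beta_{s-1}$ still present in $\mathcal{H_K}$ at the moment $\mathcal{F}_q$ was extracted to be included in $\mathcal{F}_q$. Thus $\alpha_s$, being a facet of $\beta_{s-1}$, was either placed in $\mathcal{F}_q$ (and hence deleted from $\mathcal{H_K}$ by \textbf{deleteAndReorient}, either as an element of $\mathcal{R}_d$ or of $\mathcal{C}_d^{d-1}$) or was already deleted before $\mathcal{F}_q$ was processed. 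In both cases $\alpha_s$ would be absent from $\mathcal{H_K}$ at the later time when $\mathcal{F}_p$ is extracted, contradicting $\alpha_s \in \mathcal{F}_p$. Hence $j_{s-1} = p$, and iterating this ``chase'' around the cycle forces $j_\ell = p$ for all $\ell$.

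Once every matched edge of the cycle is known to come from the same $\mathcal{F}_p$, all the vertices $\alpha_\ell, \beta_\ell$ reside in $\mathcal{F}_p$, and each default down-edge $\beta_\ell \to \alpha_{\ell+1}$ was therefore part of the subgraph on which \textbf{intermediateApx}() operated. Consequently the cycle exists inside the orientation produced by \textbf{intermediateApx}() on $\mathcal{F}_p$, which directly contradicts Lemma~\ref{lem:minacyclic}.

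The main obstacle I anticipate is the subtlety that cycles can \emph{a priori} splice matched edges from different min-facet components via down-edges of $\mathcal{H_V}$ whose $(d-1)$-endpoints were deleted from $\mathcal{H_K}$ but whose corresponding edges persist in the output graph. Since \textbf{deleteAndReorient} removes vertices only from the auxiliary graph $\mathcal{H_K}$ used to drive extraction of subsequent min-facet components, an edge $\beta \to \alpha'$ with $\alpha'$ long-deleted still sits in $\mathcal{H_V}$ with its default orientation and continues to be available for participation in a putative cycle. The argument above resolves this tension precisely by exploiting the interaction between persistence of such edges in $\mathcal{H_V}$ and the purging of their endpoints from $\mathcal{H_K}$: any facet of a min-facet simplex that was alive when its component was extracted necessarily entered that component and was purged afterwards, so it cannot resurface as a matched endpoint of a later component.
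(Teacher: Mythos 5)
Your proof is correct, but it takes a different route from the paper's. The paper also inducts over the sequence of min-facet components and invokes Lemma~\ref{lem:minacyclic} for the single-component case, but its inductive step runs \emph{forward}: it maintains a ``vertex deletion criterion'' (every $(d-1)$-simplex deleted so far has the property that all directed paths through it terminate in a sink), and then argues that every path emanating from $\mathcal{F}_{i+1}$ either ends in a critical sink of $\mathcal{F}_{i+1}$ or flows through a deleted vertex into an earlier component and hence into a sink, so no cycle can close up. Your step instead argues by contradiction on a hypothetical cycle, chasing the provenance of its matched up-edges: since a matched $d$-simplex of an earlier component $\mathcal{F}_q$ has all of its still-live facets absorbed into $\mathcal{F}_q$ and subsequently purged from $\mathcal{H_K}$ by \textbf{deleteAndReorient}, no later component can attach an up-edge to any of those facets, which forces every matched edge of the cycle into the single component $\mathcal{F}_{i+1}$ and contradicts Lemma~\ref{lem:minacyclic}. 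The two arguments encode the same underlying fact --- once flow leaves a later component for an earlier one's territory it can never return --- but yours makes the mechanism explicit (the purging of facets of matched $d$-simplices), whereas the paper asserts it through the path-termination invariant; your version is arguably the more airtight of the two, and you correctly isolate and resolve the one genuine subtlety, namely that down-edges into long-deleted $(d-1)$-simplices persist in $\mathcal{H_V}$ and could in principle splice components together. The paper's formulation, in exchange, generalizes a little more gracefully to the claim that the whole of $\mathcal{H_V}$ (not just one interface) stays acyclic as interfaces accumulate.
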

\begin{proof} We prove this claim by induction. We use a condition namely the \emph{vertex deletion criterion} which says that: For the $d$-interface, a $(d-1)$-simplex satisfies the vertex deletion condition if and only if all paths that go through that simplex end up in a sink.\\
\textbf{Base Case}: Suppose that we are processing the first min-facet component for the $d$-interface. From Lemma~\ref{lem:minacyclic}, we know that an orientation of edges of a min-facet component is acyclic. For this orientation, a path from any vertex in the component ends up in a sink. Therefore if we were to delete all the $(d-1)$-simplices in the min-facet component, we obey the vertex deletion criterion. If graph $\mathcal{H_V}$ is oriented based on the matchings found in the first min-facet component, then it is acyclic.\\
\textbf{Induction Step}: Suppose that we have processed $i$ min-facet components and suppose that we have used these min-facet components to orient the $d$-interface of $\mathcal{H_V}$ and so far it is found to be acyclic. Also, the vertices deleted so far are those that have satisfied the vertex deletion condition. Now suppose we have extracted the $(i+1)^{th}$ min-facet component say $\mathcal{F}_{i+1}$. While the edges that lead to sinks maybe absent in min-facet component, $\mathcal{F}_{i+1}$, the corresponding $d$-simplices in output graph $\mathcal{H_V}$ will have these edges. If we restrict our attention to undeleted edges, then from Lemma~\ref{lem:minacyclic}, the orientation of edges of $(i+1)^{th}$ min-facet component itself is acyclic i.e. all paths will lead strictly to critical sinks of $\mathcal{F}_{i+1}$. But if we look at the corresponding orientation in $\mathcal{H_V}$, the paths emanating from a $(d-1)$ simplex of $\mathcal{F}_{i+1}$ will either end up in critical sinks of $\mathcal{F}_{i+1}$ (through undeleted edges) or in regular/critical sinks of $\mathcal{F}_{j}$ for $j<(i+1)$ (through deleted edges). In any case, all paths going from $(d-1)$-simplices of $\mathcal{F}_{i+1}$ go to sinks thereby satisfying the vertex deletion criterion. Also designing gradient field on $\mathcal{F}_{j}$ does not introduce cycles in $\mathcal{H_V}$. Morse matching on the $d$-interface is designed when all the min-facet components are processed and deleted. Since none of them introduce cycles, we say that output graph $\mathcal{H_V}$ is acyclic.  
\end{proof}

\begin{lem} \label{lem:mainratio} For the $d$-interface, the ratio $\frac{\#\textnormal{matched simplices}}{\#\textnormal{total simplices}}\geq\frac{2}{d+1}$.
\end{lem}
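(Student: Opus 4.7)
The plan is to mimic the canonical triplet / grouping argument of Lemma~\ref{lem:interfaceratio}, but applied one min-facet component at a time and exploiting the sharper facet-degree bound guaranteed by Lemma~\ref{lem:alwayssmall}.

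First I would set up the accounting. The $d$-interface is processed by repeatedly extracting a min-facet component $\mathcal{F_{C}}$ and running \textbf{intermediateApx}($\mathcal{F_{C}},d$); after processing, the regular simplices and critical $(d-1)$-simplices of $\mathcal{F_{C}}$ are deleted before the next component is extracted. Since every regular and critical simplex of the $d$-interface is produced inside exactly one such component, it suffices to prove the ratio $\frac{2}{d+1}$ within an arbitrary min-facet component $\mathcal{F_{C}}$; summing the individual inequalities then yields the global bound (standard $\frac{A+C}{B+D}\ge \min\{\frac{A}{B},\frac{C}{D}\}$ argument).

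Next I would rerun the canonical triplet construction from the proof of Lemma~\ref{lem:interfaceratio} locally inside $\mathcal{F_{C}}$. For every critical $(d-1)$-simplex $\tau$ of $\mathcal{F_{C}}$, line~\ref{lst:line:cantriplet} of \textbf{intermediateApx}() guarantees that some cofacet of $\tau$ in $\mathcal{F_{C}}$ is regular, so the canonical triplet $\langle\langle\alpha_i,\beta_i\rangle,\tau\rangle$ is well defined (choose $\beta_i$ as the regular cofacet of $\tau$ with least topological-sort index, and $\alpha_i$ its gradient partner). I then group each gradient pair $\langle\alpha_i,\beta_i\rangle$ together with all critical $(d-1)$-simplices that use it in their canonical triplet. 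The groups partition the critical $(d-1)$-simplices, and each group contains at least $2$ regular simplices, namely $\alpha_i$ and $\beta_i$.

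The main step, which is the substantive improvement over Lemma~\ref{lem:interfaceratio}, is bounding the number of critical simplices per group. By Lemma~\ref{lem:alwayssmall} every $d$-simplex inside the min-facet component $\mathcal{F_{C}}$ has facet degree at most $d$, so $\beta_i$ has at most $d$ facets in $\mathcal{F_{C}}$. One of these facets is $\alpha_i$ (regular), and the deleted regular facets of $\beta_i$ coming from previously processed components cannot be critical $(d-1)$-simplices of the current interface. Hence $\beta_i$ contributes at most $d-1$ critical $(d-1)$-simplices to its group, giving the group ratio
\begin{equation*}
\frac{\#\text{matched}}{\#\text{total}} \;\ge\; \frac{2}{2+(d-1)} \;=\; \frac{2}{d+1}.
\end{equation*}
Summing over groups inside $\mathcal{F_{C}}$, and then over all min-facet components extracted from the $d$-interface, yields the claim. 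The main obstacle I expect is making the group-accounting watertight when simplices have been deleted by earlier min-facet iterations; the key fact I will lean on is that Lemma~\ref{lem:alwayssmall} gives the facet-degree bound \emph{at the time the component is processed}, so the count of potentially critical siblings of $\beta_i$ is genuinely at most $d-1$ and not $d$.
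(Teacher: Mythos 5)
Your proposal is correct and follows essentially the same route as the paper's proof: localize the canonical-triplet grouping of Lemma~\ref{lem:interfaceratio} to each min-facet component, use the facet-degree bound of Lemma~\ref{lem:alwayssmall} to cut the number of critical $(d-1)$-simplices per group from $d$ to $d-1$, and sum over components. Your extra care about facets deleted by earlier iterations is a point the paper leaves implicit, but it does not change the argument.
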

\begin{proof} 
The proof is identical to that of Lemma~\ref{lem:interfaceratio} except for one important difference. In case of Algorithm 2, for a $d$-interface every $d$-simplex has $d+1$ facets. But according to Lemma~\ref{lem:alwayssmall}, for a min-facet component the facet degree is bounded by $d$. Using the notion of canonical triplets for a min-facet component, for every gradient pair we get at most $(d-1)$ critical simplices. So the ratio $\frac{\#\textnormal{matched simplices}}{\#\textnormal{total simplices}}\geq\frac{2}{d+1}$ for every min-facet component. 
For a $d$-interface, every $(d-1)$-simplex is part of some min-facet component and is classifed as a regular simplex or as a critical simplex and subsequently deleted from the $d$-interface. Therefore the bound of $\nicefrac{2}{(d+1)}$ carries over from min-facet components to $d$-interfaces.   
\end{proof}

If we take the minimum for the ratio $\nicefrac{2}{d+1}$ over all
$d$ such that $1<d<D$, we get $\nicefrac{2}{D}$. By Lemma~\ref{lem:manratio},
for a $D$-interface the ratio $\ensuremath{\frac{\textnormal{\#matched simplices}}{\#\textnormal{total simplices}}}$
is equal to $\frac{4}{D+3}$. Note that $\nicefrac{4}{(D+3)}\geq\nicefrac{2}{D}$
for all $D\geq3$. So the worst ratio of $\ensuremath{\frac{\#\textnormal{matched simplices}}{\#\textnormal{total simplices}}}$
over all $d$-interfaces where $1\leq d\leq D$ is $\frac{2}{D}$.
Since the optimal number of regular simplices $\leq$ total number
of simplices, we get the following theorem.

\begin{thm} For $D\geq3$, Algorithm~\ref{alg:minfacet} provides a $\nicefrac{2}{D}$-factor approximation for the Max Morse Matching problem.
\end{thm}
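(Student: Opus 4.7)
The plan is to combine the per-interface ratio bounds already established (Lemmas~\ref{lem:mainratio} and~\ref{lem:manratio}) with the acyclicity guarantee (Lemma~\ref{lem:allacyclic}), following the same compositional structure used in the proof of the $\nicefrac{2}{(D+1)}$-factor bound for Algorithm~\ref{alg:interface}. Since Algorithm~\ref{alg:minfacet} processes the Hasse graph interface-by-interface in exactly the same manner as Algorithm~\ref{alg:interface}, with the $1$-interface and $D$-interface handled by the optimal algorithms described in Appendix~\ref{sec:Optimal-Algorithms-for} and intermediate interfaces handled by \textbf{interApxMinFacet}(), the output is a matching-based acyclic orientation of $\HKCC$.

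The main step is to minimize the per-interface ratio over $d \in \{1, \dots, D\}$. For $1 < d < D$, Lemma~\ref{lem:mainratio} gives a ratio of $\nicefrac{2}{(d+1)}$, whose minimum over this range is attained at $d = D - 1$, giving $\nicefrac{2}{D}$. For the $D$-interface, Lemma~\ref{lem:manratio} yields $\nicefrac{4}{(D+3)}$, and a short inequality check shows $\nicefrac{4}{(D+3)} \geq \nicefrac{2}{D}$ exactly when $D \geq 3$, which matches the hypothesis. The $1$-interface is optimally processed and contributes only a constant number of critical $0$-simplices per connected component, which is dominated by the other bounds.

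Since the deletion step inside \textbf{deleteAndReorient}() ensures that every simplex is accounted for in exactly one interface (as a matched simplex of that interface or as a critical simplex carried forward until it is either matched at a higher interface or declared critical), the per-interface ratios compose into a global bound of $\nicefrac{2}{D}$ on $\#\text{matched}/\#\text{total simplices}$. Combining this with the trivial inequality $\mathrm{OPT} \leq \#\text{total simplices}$ yields the claimed $\nicefrac{2}{D}$-approximation.

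The only genuine subtlety will be the bookkeeping in the last step: to make sure that summing the interface-level ratios actually produces a ratio against the total simplex count (rather than against the restricted interface counts), one must verify that the disjoint-assignment of simplices to interfaces is valid here too. This is the same argument as in the Algorithm~\ref{alg:interface} proof, and no new obstacle arises beyond confirming that the min-facet refinement does not break that bookkeeping --- which it cannot, since \textbf{interApxMinFacet}() is built out of iterated invocations of \textbf{intermediateApx}() on disjoint subgraphs whose matched vertices are deleted immediately after processing.
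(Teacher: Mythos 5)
Your proposal is correct and follows essentially the same route as the paper: minimize the per-interface ratio $\nicefrac{2}{(d+1)}$ from Lemma~\ref{lem:mainratio} over $1<d<D$ to get $\nicefrac{2}{D}$, check $\nicefrac{4}{(D+3)}\geq\nicefrac{2}{D}$ for $D\geq3$ via Lemma~\ref{lem:manratio}, and conclude using $\mathrm{OPT}\leq\#\text{total simplices}$. Your explicit attention to the bookkeeping of how per-interface ratios compose into a global one is a point the paper leaves implicit, but it is the same argument in substance.
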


\subsubsection{Approximation bound for nonmanifold complexes\label{sub:apx-non-man-comp}}

\selectlanguage{english}%
Note that \foreignlanguage{british}{Algorithm~\ref{alg:minfacet}}
can be applied to non\selectlanguage{british}%
-\selectlanguage{english}%
manifold complexes as well \foreignlanguage{british}{if we apply the
optimal algorithm for the $1$-interface and procedure \textbf{intermediateApx}()
for the remaining interfaces. To prove a bound for non\selectlanguage{english}%
-\selectlanguage{british}%
manifold complexes, we need to do a slightly different kind of analysis.
We begin with a few definitions. Let $T$ denote the set of all $(D-1)$-simplices
of the Hasse graph. Let $B$ denote the $(D-1)$-simplices that have
been paired with $(D-2)$-simplices and let $|A|=|T|-|B|$. Let $\mathcal{R}_{D}$
denote the set of regular simplices found by Algorithm~\ref{alg:minfacet}
at the $D$-interface. We now establish a relation between $|\mathcal{R}_{D}|$
and $|A|$.}

\selectlanguage{british}%
\begin{lem}\label{lem:dcon} $|\mathcal{R}_{D}|\geq \frac{2}{r} |A|$ where $r=D$ if the $D$-interface is connected and $r=(D+1)$ if the $D$-interface is not connected.\footnote{Note that the $D$-interface of a general non-manifold simplicial complex may or may not be connected.} 
\end{lem}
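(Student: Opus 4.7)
The plan is to run essentially the same canonical-triplet counting argument used for Lemma~\ref{lem:interfaceratio} and Lemma~\ref{lem:mainratio}, but now booked against $|A|$ (the $(D-1)$-simplices that survived the $(D-1)$-interface processing), and then split the counting into two cases depending on the structure of the $D$-interface. Let $p$ denote the number of gradient pairs produced at the $D$-interface by \textbf{intermediateApx}(), so that $|\mathcal{R}_D|=2p$. Each simplex in $A$ is, at the end of the $D$-interface stage, either matched to a cofacet (there are exactly $p$ such simplices) or critical (there are $|A|-p$ of them). Because line~\ref{lst:line:cantriplet} of \textbf{intermediateApx}() guarantees that every unmatched $(D-1)$-simplex has at least one matched cofacet, every critical $\tau\in A$ admits a canonical triplet $\langle\langle\alpha,\beta\rangle,\tau\rangle$, assigning $\tau$ to a unique gradient pair. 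The whole proof then reduces to bounding, for each gradient pair $\langle\alpha,\beta\rangle$, how many critical $(D-1)$-simplices can have $\langle\alpha,\beta\rangle$ in their canonical triplet; this is at most (facet-degree of $\beta$)$-1$.

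In the case where the $D$-interface is disconnected, I would use only the trivial bound: a $D$-simplex has at most $D+1$ facets, so each pair is assigned at most $D$ critical $(D-1)$-simplices. This gives $|A|-p\le D\,p$, whence $|A|\le(D+1)p$ and $|\mathcal{R}_D|=2p\ge\tfrac{2}{D+1}|A|$, matching $r=D+1$. In the case where the $D$-interface is connected, the stronger min-facet analysis applies. The key observation is that the proofs of Lemma~\ref{lem:firstsmall} and Lemma~\ref{lem:alwayssmall} are purely graph-theoretic once the initial connectedness of the $d$-interface is in hand: Lemma~\ref{lem:manconnected} was only invoked to supply that initial connectedness. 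So if we simply assume the $D$-interface is connected (which is exactly our hypothesis in this case), the same inductive deletion argument shows that every min-facet component encountered while processing the $D$-interface has facet-degree bounded by $D$. Hence each gradient pair within a min-facet component absorbs at most $D-1$ critical $(D-1)$-simplices via canonical triplets, yielding $|A|-p\le(D-1)p$, i.e.\ $|\mathcal{R}_D|=2p\ge\tfrac{2}{D}|A|$, matching $r=D$.

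The main obstacle is the second case: we have to justify that the min-facet machinery developed for manifolds transfers to a non-manifold complex whose $D$-interface happens to be connected. I would spell out that Lemma~\ref{lem:firstsmall}'s inductive deletion argument never uses the manifold property of $\mathcal{K}$ itself, only that the interface being processed starts out connected, so the bound of $D$ on min-facet facet-degree carries over verbatim. A second, minor technical point is that the canonical triplet assignment must respect the min-facet decomposition: since Algorithm~\ref{alg:minfacet} processes one min-facet component at a time and each critical $(D-1)$-simplex has at least one regular cofacet inside its own min-facet component at the moment of classification, the triplets can be formed locally and the per-pair load of $D-1$ is never exceeded. Combining the two cases yields the claimed bound with $r=D$ or $r=D+1$, as appropriate.
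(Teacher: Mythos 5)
Your proposal is correct and follows essentially the same route as the paper: the canonical-triplet assignment of each critical $(D-1)$-simplex in $A$ to a unique gradient pair, with a per-pair load of at most $D$ in the disconnected case (a $D$-simplex has $D+1$ facets, one of which is regular) and at most $D-1$ in the connected case via the min-facet-component bound of Lemma~\ref{lem:alwayssmall}. Your explicit observation that Lemma~\ref{lem:firstsmall} and Lemma~\ref{lem:alwayssmall} only require initial connectedness of the interface (not the manifold hypothesis, which enters solely through Lemma~\ref{lem:manconnected}) is exactly the justification the paper leaves implicit when it asserts that those lemmas ``apply'' in the connected case.
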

\begin{proof} First we consider the case when the $D$-interface is not connected at the start of Algorithm~\ref{alg:minfacet}. At each stage of Algorithm~\ref{alg:minfacet}, the minimum facet-degree of a simplex is not more than $(D+1)$. Once again we use the idea of canonical triplets. Every critical $(D-1)$-simplex occurs in a unique canonical triplet. Also, every regular $(D-1)$-simplex occurs in at most $D$ canonical triplets. So every regular $(D-1)$-simplex corresponds to a set of at most $D$ critical $(D-1)$-simplices. Together they make up the entire set $A$. Hence we have $|\mathcal{R}_{D}|\geq \frac{2}{D+1} |A|$. \\
Now suppose the $D$-interface is connected at the start of the Algorithm. Then Lemma~\ref{lem:firstsmall} and Lemma~\ref{lem:alwayssmall} apply and the minimum facet-degree of a simplex (in a min-facet component) is not more than $D$. In this case, a regular $(D-1)$-simplex occurs in at most $(D-1)$ canonical triplets. Accordingly, $|\mathcal{R}_{D}|\geq \frac{2}{D} |A|$.
\end{proof}

Now, let $\mathcal{R}$ denote the set of all regular simplices found
by Algorithm~\ref{alg:minfacet} and and let $\mathcal{R}_{L}=\mathcal{R}-\mathcal{R}_{D}$.

Let $\mathcal{S}_{D-2}$ denote the set of vertices of the Hasse graph
that belong to one of the $d$-interfaces where $1\leq d\leq(D-2)$
and $\mathcal{S}_{D-1}$ denote the set of vertices of the Hasse graph
that belong to one of the $d$-interfaces where $1\leq d\leq(D-1)$.
Let $\mathcal{S=}\mathcal{S}_{D-2}\cup B$. Finally, let $|\mathcal{S}|$
denote the cardinality of vertex set $\mathcal{S}$ and $|\mathcal{S}_{D-1}|$
denote the cardinality of vertex set $\mathcal{S}_{D-1}$.

\begin{lem} \label{lem:obvious} $|\mathcal{R}_{L}| \geq \frac{2}{D}|\mathcal{S}|$
\end{lem}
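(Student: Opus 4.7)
The plan is to combine the per-interface ratio of Lemma~\ref{lem:mainratio} with a counting argument that tracks each simplex across the interfaces in which it appears. For $1 \leq d \leq D-1$ let $\mathcal{R}_d$ denote the set of regular simplices produced by Algorithm~\ref{alg:minfacet} at the $d$-interface, so that $\mathcal{R}_L$ decomposes as the disjoint union $\mathcal{R}_L = \bigcup_{d=1}^{D-1}\mathcal{R}_d$. Let $T_d$ be the number of vertices of $\mathcal{G}^d$ at the time it is processed, and let $p_d$ be the number of gradient pairs $\langle\alpha^{d-1},\beta^d\rangle$ formed there, so that $|\mathcal{R}_d|=2p_d$ and $|B|=p_{D-1}$. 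Lemma~\ref{lem:mainratio} yields $|\mathcal{R}_d| \geq \tfrac{2}{d+1}T_d$ for $2 \leq d \leq D-1$; for $d=1$ the optimal subroutine leaves at most one critical vertex per connected component of the $1$-skeleton, which dominates the same bound. Since $\tfrac{2}{d+1} \geq \tfrac{2}{D}$ for all $1 \leq d \leq D-1$, summing gives
\[
|\mathcal{R}_L| \;=\; \sum_{d=1}^{D-1}|\mathcal{R}_d| \;\geq\; \frac{2}{D}\sum_{d=1}^{D-1} T_d.
\]

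It then suffices to prove the structural inequality $\sum_{d=1}^{D-1} T_d \geq |\mathcal{S}|$. Let $n_d$ denote the total number of $d$-simplices of $\mathcal{K}$. Because \textbf{deleteAndReorient}() removes the matched and critical $(d-1)$-simplices together with the regular $d$-simplices immediately after the $d$-interface is processed, the only $(d-1)$-simplices surviving to the lower level of $\mathcal{G}^d$ are the $n_{d-1}-p_{d-1}$ simplices not paired downward at the previous interface. Hence $T_1 = n_0+n_1$ and $T_d = n_d + n_{d-1} - p_{d-1}$ for $d \geq 2$, and a direct telescoping computation (together with $|\mathcal{S}| = \sum_{d=0}^{D-2} n_d + p_{D-1}$) yields
\[
\sum_{d=1}^{D-1} T_d \;=\; n_0 + 2\sum_{d=1}^{D-2} n_d + n_{D-1} - \sum_{d=1}^{D-2} p_d \;=\; |\mathcal{S}| + \sum_{d=1}^{D-1}(n_d - p_d).
\]
Since one cannot form more pairs at the $d$-interface than there are $d$-simplices, $p_d \leq n_d$ for every $d$, so the last sum is nonnegative and $\sum_{d=1}^{D-1} T_d \geq |\mathcal{S}|$. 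Combining with the previous display gives $|\mathcal{R}_L| \geq \tfrac{2}{D}|\mathcal{S}|$.

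The only mild obstacle is the bookkeeping: I must make sure that the removals performed by \textbf{deleteAndReorient}() are exactly as described, that the $d=1$ case (where Lemma~\ref{lem:mainratio} does not literally apply) is handled by the optimal subroutine, and that every $(d-1)$-simplex in $B$ is counted once in $T_{D-1}$ as an upper-level vertex of $\mathcal{G}^{D-1}$, so that summing over interfaces genuinely covers all of $\mathcal{S}$. These are routine checks once the indexing is fixed.
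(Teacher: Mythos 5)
Your overall architecture (a per-interface ratio fed into a telescoping count that covers $\mathcal{S}$) is close in spirit to the paper's one-line proof, but the per-interface inequality you feed into it is false as stated, and this is a genuine gap rather than bookkeeping. Lemma~\ref{lem:mainratio} does \emph{not} assert $|\mathcal{R}_d|\geq\frac{2}{d+1}T_d$ with $T_d$ equal to the full vertex count of $\mathcal{G}^d$: its proof groups each gradient pair with the at most $d-1$ critical $(d-1)$-simplices whose canonical triplet contains it, and these groups cover only the $(d-1)$-simplices present at the $d$-interface together with the \emph{matched} $d$-simplices. The unmatched $d$-simplices, which are deferred to the $(d+1)$-interface, lie in no group, so the denominator that the lemma actually controls is $U_d=(n_{d-1}-p_{d-1})+p_d$, not your $T_d=(n_{d-1}-p_{d-1})+n_d$. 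With $T_d$ the inequality fails whenever $n_d\gg n_{d-1}$: for $D=3$ and a complex containing the full $2$-skeleton on $v$ vertices one has $|\mathcal{R}_2|=2p_2\leq 2n_1=\Theta(v^2)$ while $\frac{2}{3}T_2\geq\frac{2}{3}n_2=\Theta(v^3)$; the same phenomenon kills your $d=1$ case, since a dense $1$-skeleton has $n_1\gg n_0$ but at most $2n_0$ regular simplices can be produced at the $1$-interface. Because $\sum_d T_d$ exceeds $|\mathcal{S}|$ by exactly $\sum_d(n_d-p_d)$, the intermediate claim $|\mathcal{R}_L|\geq\frac{2}{D}\sum_d T_d$ is itself false in such examples, not merely unproven.

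The repair is small and lands you on the paper's actual argument. Replace $T_d$ by $U_d$: the canonical-triplet grouping honestly gives $|\mathcal{R}_d|=2p_d\geq\frac{2}{d+1}U_d\geq\frac{2}{D}U_d$, and your own telescoping now closes with equality,
\[
\sum_{d=1}^{D-1}U_d \;=\; n_0+\sum_{d=1}^{D-2}n_d+p_{D-1} \;=\; |\mathcal{S}_{D-2}|+|B| \;=\; |\mathcal{S}|,
\]
so no slack inequality $p_d\leq n_d$ is needed. This is precisely what the paper does when it applies Lemma~\ref{lem:mainratio} to the graph induced by $\mathcal{S}$ after observing that every simplex of $\mathcal{S}$ occurs in some canonical triplet; your version makes that covering statement explicit, which is a useful clarification, but only once the denominator is the set actually covered by the triplets.
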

\begin{proof} Let $\mathcal{G_S}$ be the graph induced by set $\mathcal{S}$. Note that every simplex belonging to graph $\mathcal{G_S}$ occurs in some canonical triplet. In particular this happens to be true since all $(D-1)$-simplices of $\mathcal{S}$ are matched by Algorithm~\ref{alg:minfacet}.  Using Lemma~\ref{lem:mainratio} for Algorithm~\ref{alg:minfacet} applied to $\mathcal{G_S}$, the ratio $\frac{\#\textnormal{matched simplices}}{\#\textnormal{total simplices}}\geq\frac{2}{D}$. In other words, we get,  $|\mathcal{R}_{L}| \geq \frac{2}{D}|\mathcal{S}|$.
\end{proof}

Let $\mathcal{O}$ denote the cardinality of regular nodes found by
optimal Morse Matching. 

\begin{lem}\label{lem:optlast} $\mathcal{O}\leq |\mathcal{S}_{D-1}|+|T|$
\end{lem}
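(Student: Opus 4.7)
The plan is to split the regular simplices found by any optimal Morse matching into two groups according to whether their dimension is at most $D-1$ or equal to $D$, and to bound each group separately. The two bounds together should give exactly the inequality $\mathcal{O} \le |\mathcal{S}_{D-1}| + |T|$.

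First I would observe the trivial bound: the number of regular simplices of dimension at most $D-1$ in the optimal matching cannot exceed the total number of simplices of dimension at most $D-1$, which is precisely $|\mathcal{S}_{D-1}| = \sum_{d=0}^{D-1} n_d$, where $n_d$ is the number of $d$-simplices. This accounts for the first term on the right-hand side.

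Next I would count the regular $D$-simplices. In any gradient vector field, every regular $D$-simplex $\beta$ must be paired with one of its facets $\alpha^{D-1}$, so the set of regular $D$-simplices is in bijection with the set of gradient pairs lying in the $D$-interface. Since each $(D-1)$-simplex is in at most one gradient pair, the number of such pairs is at most $|T| = n_{D-1}$. Hence the number of regular $D$-simplices is bounded by $|T|$, which supplies the second term. Summing the two bounds yields the required inequality.

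There is no significant obstacle here: the argument is essentially a bookkeeping exercise using the definition of a gradient pair and the fact that every pair across the top interface consumes exactly one $(D-1)$-simplex. The only point that needs care is to ensure that the two groups really partition the regular simplices (which they do, by dimension), so no simplex is counted twice. With that observation the lemma follows immediately.
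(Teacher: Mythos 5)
Your proof is correct and follows essentially the same route as the paper: bound the regular simplices of dimension at most $D-1$ trivially by $|\mathcal{S}_{D-1}|$, and bound the regular $D$-simplices by $|T|$ since each must pair with a distinct $(D-1)$-facet. The extra care you take in noting that the two groups partition the regular simplices by dimension is implicit in the paper's argument but harmless to make explicit.
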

\begin{proof} The maximum number of simplices of the $D$-level that can be matched by any algorithm is bounded by |T| i.e. the total number of simplices of the $(D-1)$-level. Also the set $\mathcal{S}_{D-1}$ consists of all simplices of the Hasse graph except those that belong to the $D$-level. So, the optimal algorithm can not possibly match more than $|\mathcal{S}_{D-1}|+|T|$ number of simplices of the Hasse graph.
\end{proof}

We now consider the expression $\ensuremath{D|\mathcal{R}_{L}|+r|\mathcal{R}_{D}|}$,

\begin{eqnarray*} 
D|\mathcal{R}_{L}|+r|\mathcal{R}_{D}| & \geq & 2|S|+2|A| \dots\dots\dots\dots\dots\dots\textnormal{using Lemma~\ref{lem:dcon} and Lemma~\ref{lem:obvious}}\\
  									& \geq & |S|+ |B| + 2|A| \dots\dots\dots\dots\textnormal{using the fact that }B \subseteq S\\
									  & \geq & (|S|+|A|) + (|B|+|A|)\\
									  &   =  & |\mathcal{S}_{D-1}|+|T|\dots\dots\dots\dots\dots\textnormal{by definition}\\
									  & \geq & O \dots\dots\dots\dots\dots\dots\dots\dots\dots\textnormal{using Lemma~\ref{lem:optlast}}
\end{eqnarray*} 

If the $D$-interface is connected we get, 

$\ensuremath{D|\mathcal{R}}|=\ensuremath{D|\mathcal{R}_{L}|+D|\mathcal{R}_{D}|}\geq\mathcal{O}$
i.e. $\ensuremath{|\mathcal{R}}|\geq\frac{1}{D}\mathcal{O}$

$\vphantom{}$

If the $D$-interface is not connected we get, 

$(D+1)\mathcal{R}=(\ensuremath{D+1)|\mathcal{R}_{L}|+(D+1)|\mathcal{R}_{D}|}\geq D|\mathcal{R}_{L}|+(D+1)|\mathcal{R}_{D}|\geq\mathcal{O}$
i.e. $\ensuremath{|\mathcal{R}}|\geq\frac{1}{(D+1)}\mathcal{O}$.

$\vphantom{}$

Therefore, for non-manifold complexes, Algorithm~\ref{alg:minfacet}
gives a $\nicefrac{1}{D}$ approximation if the $D$-interface is
connected and a $\nicefrac{1}{(D+1)}$ approximation if the $D$-interface
is not connected.

Likewise one can obtain $\nicefrac{1}{(D+1)}$ approximation bound
for Algorithm~\ref{alg:interface}, irrespective of whether or not
the complex is connected. 

\selectlanguage{american}%

\section{Experimental results \label{sec:experiments}}

\selectlanguage{english}%
All three approximation algorithms proposed in this paper are implemented
in Java. In this section, we describe results of experiments comparing
the algorithms proposed here with three algorithms for Morse matching,
namely reduction heuristic, coreduction heuristic a naïve approximation
algorithm that provides an approximation ratio of $1/(D+1)$. A prototype
implementation was used to observe the practical performance of these
algorithms on more than 800 complexes. We used both synthetic random
datasets and complexes generated by Hachimori~\cite{Hach01} (also
used in earlier work~\cite{BLPS13}) and Lutz~\cite{Lutzlib}, for
experiments. Random complexes were generated according to the method
described by Meshulam and Wallach~\cite{Mesh09} and a variant. In
the variant, we select a random number of valid $d$\selectlanguage{british}%
-\selectlanguage{english}%
simplices for all $1\leq d\leq D$ instead of selecting a random number
of $D$\selectlanguage{british}%
-\selectlanguage{english}%
simplices and inferring all faces. We refer to the the complexes generated
by this variant as Type~2 random complexes. For additional details
on these complexes see Section~\ref{sub:Type-2-random}. 

It is clear that the quantity 2$|\mathcal{M}|$ where $|\mathcal{M}|$
is the size of maximum cardinality matching as well as the quantity
$N-\Sigma\beta_{i}$ which is equal to the difference between number
of simplices and the sum of Betti numbers, provide conservative upper
bounds on the number of regular cells in the optimal Morse matching.
Let $\mathcal{R}$ be the set of regular simplices generated by a
Max Morse approximation algorithm. We \emph{estimate} the quality
of the approximation using the ratio $\frac{|\mathcal{R}|}{Min(2|\mathcal{M}|,N-\Sigma\beta_{i})}$.
Tables~\ref{tab:hachimoriLib},~\ref{tab:selManifold},~\ref{tab:randomComplexes}~and~\ref{tab:type2complexes}
list estimated approximation ratios on selected datasets. Algorithm~3
consistently provided the best ratios, always greater than 0.93 for
all 300 random complexes in our dataset. For more than 450 manifolds
from the Lutz dataset, Algorithm~3 reported worst estimated approximation
ratio of $0.969$. Algorithm~3 provided optimal estimated approximation
ratio for 56\% of manifolds from the Lutz dataset. These results suggest
that Algorithm~3 not only provides good theoretical bounds, but also
performs well practically. 

In sections that follow, we first discuss a na\"ive approximation
algorithm followed by experiments on datasets from four different
sources. 

\selectlanguage{american}%

\subsection{A $\nicefrac{1}{(D+1)}$-factor \foreignlanguage{english}{Na\"ive}
Approximation Algorithm}

Consider the following approximation algorithm: Given a simplicial
complex $\mathcal{K}$ compute its Hasse graph $\mathcal{H_{K}}$.
Perform cardinality matching on graph $\mathcal{H_{K}}$ and obtain
the matching based reorientation $\overline{\HCC_{\KCC}}$ . Include
all the down-edges of $\overline{\HCC_{\KCC}}$ in the output graph
$\mathcal{H_{O}}$.
\begin{enumerate}
\item Pick an arbitrary up-edge $e$ and include it in $\mathcal{H_{O}}$. 
\item Include the reversed orientations of all the leading up-edges of $e$
in $\mathcal{H_{O}}$.
\item Remove up-edge $e$ and the leading up-edges of $e$ from $\overline{\HCC_{\KCC}}$ 
\item Repeat steps 1-3 until all up-edges of $\overline{\HCC_{\KCC}}$ are
exhausted. 
\end{enumerate}
Clearly, $\mathcal{H_{O}}$ has no cycles because none of the up-edges
in $\mathcal{H_{O}}$ has leading up-edges. Also, for every up-edge
that we select, we reverse at most $D$ up-edges. Since cardinality
matching is an upper bound on optimal value of Max Morse Matching,
we get an approximation ratio of $\nicefrac{1}{(D+1)}$ for this algorithm.

At the outset, the ratio $\nicefrac{(D+1)}{(D^{2}+D+1)}$ obtained
by Algorithm~\ref{alg:frontier} does not seem to be a significant
improvement over $\nicefrac{1}{(D+1)}$. However, as we shall witness
in sections that follow, the \foreignlanguage{english}{estimated approximation
ratios observed for the na\"ive algorithm are significantly worse
in practice. In fact, in order to ensure that the approximation algorithms
designed for Max Morse Matching problem remains relevant for applications
like homology computation, scalar field topology etc., we need to
design algorithms that can be shown to have good theoretical approximation
and complexity bounds combined with competitive estimated approximation
ratios.}

\subsection{Coreduction and Reduction heuristics}

The coreduction heuristic for constructing Morse matchings was introduced
in \foreignlanguage{british}{~\cite{HMMNWJD10}.} In this section,
we briefly describe reduction and coreduction heuristics for constructing
Morse matchings on simplicial complexes for the sake of completeness.\foreignlanguage{british}{
Suppose we are given a simplicial complex $\mathcal{K}$. We first
describe the coreduction heuristic.}

Perfom the folowing steps until complex $\mathcal{K}$ is empty:
\begin{enumerate}
\item if there is a simplex $\alpha$ with a free coface $\beta$ available,
include the pair $\langle\alpha,\beta\rangle$ in the set of Morse
matchings and delete $\alpha$ and $\beta$ from $\mathcal{K}$.
\item if no such simplex with a free coface is available then we select
a simplex $\gamma$ of lowest available dimension and make it critical.
We then delete $\gamma$ from $\mathcal{K}$.
\end{enumerate}
We now describe the reduction heuristic.

Perfom the folowing steps until complex $\mathcal{K}$ is empty:
\begin{enumerate}
\item if there is a simplex $\beta$ with a free face $\alpha$ available,
include the pair $\langle\alpha,\beta\rangle$ in the set of Morse
matchings and delete $\alpha$ and $\beta$ from $\mathcal{K}$.
\item if no such simplex with a free coface is available then we select
a simplex $\gamma$ of highest available dimension and make it critical.
We then delete $\gamma$ from $\mathcal{K}$.
\end{enumerate}
\selectlanguage{english}%

\subsection{Experiments on the Hachimori Dataset}

This dataset consists of complexes downloaded from Hachimori's collection
of simplicial complexes\footnote{\href{http://infoshako.sk.tsukuba.ac.jp/~hachi/math/library/index_eng.html}{http://infoshako.sk.tsukuba.ac.jp/$\sim$hachi/math/library/index\_{}eng.html}}.
Table~\ref{tab:hachimoriLib} lists the observed approximation ratios
for all the algorithms. For complexes in Table~\ref{tab:hachimoriLib},
maximum size of $\Sigma\beta_{i}$ is 2. Clearly, coreduction heuristic
provided the best approximation ratios for this dataset. However,
Algorithm~3 reported ratios comparable to coreduction. Algorithm~3
reports optimal Morse matching for 7 of the 20 complexes in this dataset,
while coreduction gives optimal result for 10 complexes.

\begin{table}[th]
\centering

\begin{tabular}{cccccccc}
\toprule 
\multirow{2}{*}{Input} & \multirow{2}{*}{N} & \multicolumn{6}{c}{Estimated approximation ratios}\tabularnewline
\cmidrule{3-8} 
 &  & Na\"ive  & Algo 1 & Algo 2 & Algo 3 & Cored & Red\tabularnewline
\midrule
\midrule 
\multicolumn{8}{l}{\textbf{2D complexes}}\tabularnewline
\midrule
projective & 31 & 0.800 & \textbf{0.933} & \textbf{0.933} & \textbf{0.933} & \textbf{0.933} & \textbf{0.933}\tabularnewline
dunce\_hat & 49 & 0.667 & 0.917 & \textbf{0.958} & \textbf{0.958} & \textbf{0.958} & 0.917\tabularnewline
bjorner & 32 & 0.667 & 0.933 & \textbf{1.000} & \textbf{1.000} & \textbf{1.000} & 0.867\tabularnewline
nonextend & 39 & 0.632 & 0.895 & 0.947 & \textbf{1.000} & \textbf{1.000} & 0.895\tabularnewline
c-ns & 75 & 0.703 & 0.892 & \textbf{0.946} & \textbf{0.946} & \textbf{0.946} & 0.865\tabularnewline
c-ns2 & 79 & 0.615 & 0.897 & 0.974 & 0.974 & \textbf{1.000} & 0.846\tabularnewline
c-ns3 & 63 & 0.667 & 0.871 & \textbf{0.968} & \textbf{0.968} & \textbf{0.968} & 0.903\tabularnewline
simon & 41 & 0.750 & 0.950 & \textbf{0.950} & \textbf{0.950} & \textbf{0.950} & 0.850\tabularnewline
simon2 & 31 & 0.667 & 0.800 & \textbf{0.933} & \textbf{0.933} & \textbf{0.933} & 0.867\tabularnewline
\midrule 
\multicolumn{8}{l}{\textbf{3D complexes}}\tabularnewline
\midrule
poincare & 392 & 0.651 & 0.933 & 0.954 & 0.979 & \textbf{0.990} & 0.923\tabularnewline
knot & 6,203 & 0.628 & 0.942 & 0.940 & 0.997 & \textbf{1.000} & 0.927\tabularnewline
bing & 8,131 & 0.640 & 0.946 & 0.943 & 0.997 & \textbf{0.999} & 0.933\tabularnewline
nc\_sphere & 8,474 & 0.616 & 0.941 & 0.945 & 0.989 & \textbf{1.000} & 0.937\tabularnewline
rudin & 215 & 0.617 & 0.935 & 0.944 & \textbf{1.000} & \textbf{1.000} & 0.925\tabularnewline
gruenbaum & 167 & 0.663 & 0.928 & 0.928 & \textbf{1.000} & \textbf{1.000} & 0.904\tabularnewline
ziegler & 119 & 0.695 & 0.983 & 0.915 & \textbf{1.000} & \textbf{1.000} & 0.864\tabularnewline
lockeberg & 216 & 0.636 & 0.944 & 0.972 & \textbf{1.000} & \textbf{1.000} & 0.897\tabularnewline
mani-walkup-C & 464 & 0.645 & 0.944 & 0.922 & \textbf{1.000} & \textbf{1.000} & 0.922\tabularnewline
mani-walkup-D & 392 & 0.621 & 0.923 & 0.923 & \textbf{0.990} & \textbf{0.990} & 0.908\tabularnewline
\midrule 
\multicolumn{8}{l}{\textbf{5D complexes}}\tabularnewline
\midrule
nonpl\_sphere & 2,680 & 0.554 & 0.841 & 0.883 & 0.989 & \textbf{0.997} & 0.954\tabularnewline
\bottomrule
\end{tabular}

\caption{Observed approximation ratios for Hachimori's Simplicial Complex Library.
\foreignlanguage{american}{N indicates the number of simplices in
the complex. Cored refers to Coreduction Algorithm, Red refers to
Reduction Algorithm. }For a given input, the best estimated approximation
ratios across all algorithms tested are highlighted in bold.}

\label{tab:hachimoriLib}
\end{table}

\subsection{Experiments on the Lutz Manifold Dataset}

The second dataset consists of manifolds of dimensions ranging from
3 to 11. These manifolds were downloaded from a library of manifolds
created by Lutz\footnote{The Manifold page: \href{http://page.math.tu-berlin.de/~lutz/stellar/vertex-transitive-triangulations.html}{http://page.math.tu-berlin.de/$\sim$lutz/stellar/vertex-transitive-triangulations.html}}. 

Table~\ref{tab:selManifold} lists approximation ratios observed
for selected complexes within this dataset. For manifolds in Table~\ref{tab:selManifold},
maximum size of $\Sigma\beta_{i}$ is 14 whereas the average size
of $\Sigma\beta_{i}$ is 5.07. Coreduction heuristic provided the
best approximation ratios for this dataset. However, Algorithm~3
matched the performance of coreduction heuristic for many complexes
and in some cases outperformed coreduction. Also, Algorithm~3 was
consistently better than reduction heusritic.

\begin{table}[th]
\centering

\begin{tabular}{cccccccc}
\toprule 
\multirow{2}{*}{Input} & \multirow{2}{*}{$N$} & \multicolumn{6}{c}{Estimated approximation ratios}\tabularnewline
\cmidrule{3-8} 
 &  & Na\"ive  & Algo 1 & Algo 2 & Algo 3 & Cored & Red\tabularnewline
\midrule
\midrule 
3\_12\_13\_3 & 192 & 0.649 & 0.936 & 0.957 & \textbf{1.000} & \textbf{1.000} & 0.926\tabularnewline
3\_12\_1\_6 & 240 & 0.672 & 0.933 & 0.908 & \textbf{0.992} & \textbf{0.992} & 0.882\tabularnewline
3\_15\_11\_1 & 390 & 0.649 & 0.948 & 0.974 & 0.984 & \textbf{1.000} & 0.953\tabularnewline
4\_15\_2\_24 & 810 & 0.610 & 0.898 & 0.911 & \textbf{1.000} & \textbf{1.000} & 0.935\tabularnewline
4\_15\_4\_1 & 965 & 0.566 & 0.875 & 0.902 & 0.987 & \textbf{0.996} & 0.919\tabularnewline
5\_15\_2\_12 & 1,350 & 0.565 & 0.862 & 0.896 & 0.990 & \textbf{0.999} & 0.951\tabularnewline
5\_14\_3\_16 & 1,120 & 0.572 & 0.873 & 0.898 & 0.998 & \textbf{1.000} & 0.959\tabularnewline
6\_15\_2\_2 & 5,130 & 0.516 & 0.801 & 0.841 & 0.987 & \textbf{0.998} & 0.961\tabularnewline
6\_15\_2\_1 & 1,890 & 0.546 & 0.847 & 0.877 & 0.995 & \textbf{1.000} & 0.957\tabularnewline
7\_14\_3\_4 & 6,272 & 0.499 & 0.768 & 0.820 & \textbf{1.000} & 0.999 & 0.955\tabularnewline
8\_14\_2\_15 & 9,326 & 0.479 & 0.747 & 0.782 & \textbf{1.000} & \textbf{1.000} & 0.962\tabularnewline
9\_15\_4\_1 & 21,310 & 0.458 & 0.716 & 0.757 & 0.996 & \textbf{1.000} & 0.961\tabularnewline
10\_14\_38\_1 & 15,038 & 0.460 & 0.716 & 0.754 & \textbf{1.000} & \textbf{1.000} & 0.960\tabularnewline
11\_15\_2\_1 & 30,846 & 0.443 & 0.688 & 0.737 & \textbf{1.000} & \textbf{1.000} & 0.961\tabularnewline
\bottomrule
\end{tabular}

\caption{Observed approximation ratios for a few selected manifolds in Lutz's
manifold library. \foreignlanguage{american}{N indicates the number
of simplices in the complex. Cored refers to Coreduction Algorithm,
Red refers to Reduction Algorithm.} For a given input, the best estimated
approximation ratios across all algorithms tested are highlighted
in bold.}

\label{tab:selManifold}
\end{table}

\begin{table}[th]
\centering

\begin{tabular}{cccccc}
\toprule 
\multirow{2}{*}{$D$} & \multirow{2}{*}{No. of complexes of dimension D} & \multirow{2}{*}{Avg size} & \multirow{2}{*}{Worst ratio} & \multirow{2}{*}{Avg ratio} & \multirow{2}{*}{\% optimal}\tabularnewline
 &  &  &  &  & \tabularnewline
\midrule
3 & 166 & 265 & 0.969 & 0.992 & 42.17\tabularnewline
4 & 76 & 630 & 0.979 & 0.995 & 39.47\tabularnewline
5 & 114 & 1,445 & 0.982 & 0.998 & 75.43\tabularnewline
6 & 15 & 3,761 & 0.984 & 0.993 & 26.67\tabularnewline
7 & 33 & 5,988 & 0.996 & 0.999 & 87.88\tabularnewline
8 & 26 & 9,165 & 0.989 & 0.998 & 69.23\tabularnewline
9 & 9 & 14,385 & 0.993 & 0.999 & 66.67\tabularnewline
10 & 2 & 9,566 & 1.000 & 1.000 & 100.00\tabularnewline
11 & 5 & 23,096 & 1.000 & 1.000 & 100.00\tabularnewline
\midrule
All & 446 & 2,271 & 0.969 & 0.995 & 56.05\tabularnewline
\bottomrule
\end{tabular}

\caption{In this table, we summarize the results for all the 446 complexes
in Lutz manifold dataset. The results are grouped row-wise for each
dimension. Avg size refers to the average number of simplices for
complexes of dimension $D$. Worst ratio refers to worst estimated
approximation ratio for Algorithm~3 whereas Avg ratio refers to average
estimated approximation ratio for Algorithm~3, for complexes of dimension
$D$. \% optimal refers to percentage of complexes for which Algorithm~3
computes the optimal Morse matching.}

\label{tab:manifoldSummary}
\end{table}

Table~\ref{tab:manifoldSummary} summarizes the results obtained
using Algorithm~3 for manifolds of different dimensions. We observed
optimal results for~$56\%$ of the complexes. The worst approximation
ratio was observed to be~$0.969$. The descriptions of homology groups
of these complexes are also available in the library. For all the
complexes, we compared the homology computed by application of Morse
matching algorithm followed by boundary operator compuatation and
Smith Normal Form with the ground truth. Our algorithm computes correct
homolgy for all the complexes. The running time of homology computation
was of the order of milli-seconds for most of these complexes.

\subsection{Experiments on random complexes}

We followed the method described by Meshulam and Wallach~\cite{Mesh09}
to generate random complexes. These complexes contain all possible
$d$-simplices for the given number of vertices, for $0\leq d<D$.
However, $D$-simplices are randomly chosen from all possible $D$-simplices
based on probability $p(D)$. We generated two datasets of 100 complexes
each. For each set, we generated a subset of 20 complexes with fixed
$p(D)$, which varies from~$0.1$~to~$0.9$. The number of vertices
was chosen to be $20$~and~$16$ for the $4$~and~$6$ dimensional
datsets, respectively. In Table \ref{tab:randomComplexes}, we report
the results for a single complex selected from each subset. It should
be noted that Algorithm~3 performs well even for random complexes
with non-trivial homology. For Algorithm~3, the worst estimated approximation
ratio over 100 randomly generated 4-dimensional complexes was observed
to be $0.939$. For the 100 randomly generated 6-dimensional complexes
it was observed to be $0.953$. We observed that Algorithm~3 outperformed
reduction and coreduction heuristics for this dataset.

\begin{table}[th]
\centering

\begin{tabular}{ccccccccc}
\toprule 
\multirow{2}{*}{$p(D)$} & \multirow{2}{*}{$N$} & \multirow{2}{*}{$\Sigma\beta_{i}$} & \multicolumn{6}{c}{Estimated approximation ratios}\tabularnewline
\cmidrule{4-9} 
 &  &  & Na\"ive  & Algo 1 & Algo 2 & Algo 3 & Cored & Red\tabularnewline
\cmidrule{1-3} 
\multicolumn{9}{l}{\textbf{Random 6D}}\tabularnewline
\midrule 
0.1 & 16,036 & 3,862 & 0.515 & 0.746 & 0.784 & \textbf{1.000} & \textbf{1.000} & 0.958\tabularnewline
0.3 & 18,324 & 1,574 & 0.477 & 0.739 & 0.765 & \textbf{0.996} & 0.989 & 0.967\tabularnewline
0.5 & 20,612 & 740 & 0.437 & 0.672 & 0.704 & \textbf{0.964} & 0.940 & 0.936\tabularnewline
0.7 & 22,899 & 3,003 & 0.428 & 0.653 & 0.705 & \textbf{0.991} & 0.981 & 0.951\tabularnewline
0.9 & 25,188 & 5,292 & 0.425 & 0.634 & 0.710 & \textbf{1.000} & 0.998 & 0.953\tabularnewline
\midrule
\multicolumn{9}{l}{\textbf{Random 4D}}\tabularnewline
\midrule
0.1 & 7,745 & 2,327 & 0.598 & 0.900 & 0.914 & \textbf{0.999} & 0.996 & 0.979\tabularnewline
0.3 & 10,846 & 800 & 0.470 & 0.704 & 0.727 & \textbf{0.952} & 0.915 & 0.927\tabularnewline
0.5 & 13,947 & 3,877 & 0.465 & 0.692 & 0.732 & \textbf{0.992} & 0.978 & 0.956\tabularnewline
0.7 & 17,047 & 6,977 & 0.452 & 0.669 & 0.739 & \textbf{1.000} & 0.996 & 0.954\tabularnewline
0.9 & 20,148 & 10,078 & 0.463 & 0.671 & 0.752 & \textbf{1.000} & \textbf{1.000} & 0.957\tabularnewline
\bottomrule
\end{tabular}

\caption{This table lists estimated approximation ratios for a selected set
of random complexes of dimensions 6 and 4. Each row represents results
obtained by various algorithms for a single randomly generated instance.
\foreignlanguage{american}{$N$ indicates the number of simplices
in the complex. Cored refers to Coreduction Algorithm, Red refers
to Reduction Algorithm. }$p(D)$ denotes the probability with which
simplices of dimension $D$ are chosen. $\Sigma\beta_{i}$ denotes
the sum of Betti numbers. For a given input, the best estimated approximation
ratios across all algorithms tested are highlighted in bold.}

\label{tab:randomComplexes}
\end{table}

\subsection{Experiments on Type 2 random complexes\label{sub:Type-2-random}}

We also used a variant of the Meshulam and Wallach~\cite{Mesh09}
method for generation of random complexes, where we choose random
number of $d$-simplices for all $d$. The generation of these random
complexes proceed from lowest dimension to highest, and a random simplex
is added to the complex only if all its facets are part of the complex.
We generated a dataset containing 100 5-dimensional complexes with
following parameters: number of vertices was chosen as 40, the probability
of selecting a $d$-simplex is given by the vector $[1,1,0.7,0.9,1,0.9]$.
With these parameters we obtain complexes with non-trivial homology,
as evidenced by their Betti numbers which lie in the range $[1,0,0-1,2945-3658,51-106,0-3]$.
Table~\ref{tab:type2complexes} lists the results for five complexes
selected from this dataset. The worst estimated approximation ratio
over 100 randomly generated 5-dimensional complexes for Algorithm~3
was observed to be 0.989. We again observed that Algorithm~3 consistently
outperformed reduction and coreduction heuristics for all the complexes
in this dataset.

\begin{table}[h]
\centering

\begin{tabular}{cccccccc}
\toprule 
\multirow{2}{*}{$N$} & \multirow{2}{*}{$\Sigma\beta_{i}$} & \multicolumn{6}{c}{Estimated approximation ratios}\tabularnewline
\cmidrule{3-8} 
 &  & Na\"ive  & Algo 1 & Algo 2 & Algo 3 & Cored & Red\tabularnewline
\midrule
\midrule 
39,046 & 3,366  & 0.540 & 0.814 & 0.841 & \textbf{0.991} & 0.987 & 0.979\tabularnewline
39,233  & 3,247  & 0.538 & 0.809 & 0.844 & \textbf{0.993} & 0.982 & 0.977\tabularnewline
39,199  & 3,253  & 0.535 & 0.808 & 0.835 & \textbf{0.992} & 0.984 & 0.978\tabularnewline
39,128  & 3,314  & 0.538 & 0.815 & 0.838 & \textbf{0.994} & 0.986 & 0.979\tabularnewline
39,526  & 3,172  & 0.538 & 0.809 & 0.837 & \textbf{0.991} & 0.985 & 0.979\tabularnewline
\bottomrule
\end{tabular}

\caption{This table lists estimated approximation ratios for a selected set
of 5-dimensional Type 2 random complexes. Each row represents results
obtained by various algorithms for a single randomly generated instance.
\foreignlanguage{american}{N indicates the number of simplices in
the complex. Cored refers to Coreduction Algorithm, Red refers to
Reduction Algorithm.} $\Sigma\beta_{i}$ denotes the sum of Betti
numbers. For a given input, the best estimated approximation ratios
across all algorithms tested are highlighted in bold.}

\label{tab:type2complexes}
\end{table}

\selectlanguage{american}%

\subsection{Discussion on experimental results}

For all datasets we studied, Algorithm~\ref{alg:minfacet} and Coreduction
Algorithm outperform all other algorithms in terms of achieving best
estimated approximation ratios. For the Hachimori dataset and the
Lutz dataset, the coreduction algorithm fares slightly better, whereas
for random datasets, Algorithm~\ref{alg:minfacet} does better. In
general, Algorithm~\ref{alg:minfacet} outperforms all other algorithms
for large sized complexes or when the size of \foreignlanguage{english}{$\Sigma\beta_{i}$
is large. }

\section{Discussion on complexity}

\selectlanguage{english}%
Maximum cardinality bipartite matching is the primary bottleneck for
all the algorithms described in this paper. Graph matching can be
performed in $O(V^{1.5})$ time for Hasse graphs of simplicial complexes
using Hopcroft-Karp algorithm~\cite{HK73}. With appropriate choice
of data structures, all other procedures of all three Algorithms can
be made to run in linear time. 

In particular, for Algorithm ~\ref{alg:minfacet}, we maintain separate
queues for every facet-degree. Consider the graph $\mathcal{G}$ induced
by the min-facet degree simplices. To extract a min-facet component,
we simply find a single connected component within this graph $\mathcal{G}$.
Once the min-facet component is deleted from the $d$-interface, we
update the facet-degrees of all affected simplices within the $d$-interface.
Extraction and maintenance of min-facet components is therefore a
linear time operation. 

Also, for Algorithms~\ref{alg:interface} and~\ref{alg:minfacet},
the approximation ratios do not depend on the graph matching steps.
Graph matching step merely serves the purpose of heuristic improvement.
So, effectively by removing graph matching step from Algorithms~\ref{alg:interface}
and~\ref{alg:minfacet} become linear time approximation algorithms.
But this improvement in computational complexity is at the cost of
estimated approximation ratios observed in practice. 

\selectlanguage{american}%

\section{Conclusions and further work \label{sec:Conclusion}}

We believe that approximation algorithms is the definitive algorithmic
way to study Morse matchings. Our belief is validated by theoretical
results and additionally supported by experimental results where we
get close to optimal ratios even for random complexes. 

In future, we plan to further improve the approximation bounds, remove
dependency on graph matching (for improving estimated approximation
ratios) and develop efficient C++ implementations. In particular,
to obtain dimension independent bounds for Max Morse Matching Problem
remains a challenging open problem.

\section*{Acknowledgements}

The first author is grateful to Michael Lesnick for help and support.
The work is partially supported by the Department of Science and Technology,
India under grant SR/S3/EECE/0086/2012.

\begin{appendices} 
\renewcommand\thetable{\thesection\arabic{table}}   
\renewcommand\thefigure{\thesection\arabic{figure}}

\section{Optimal algorithms for $1$-interface and $D$-interface\label{sec:Optimal-Algorithms-for}}

The optimal algorithm for $D$-interface relies on the Lemma~\ref{lem:dconnected}
which has previously been proved using graph theoretic methods~\cite{JP06,BLPS13}.
See Appendix A of ~\cite{BLPS13}. 

\begin{lem}\label{lem:dconnected} Suppose that following the design of Morse matching for the $(D-1)$-interface, all the $(D-1)$-simplices that are matched are deleted. Then upon execution of this operation the $D$-interface stays connected. 
\end{lem}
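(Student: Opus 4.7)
The plan is to reformulate the statement in the dual cell complex of the $D$-manifold $\mathcal{K}$ and invoke Forman's discrete Morse theorem there. Since $\mathcal{K}$ is a closed simplicial $D$-manifold, its dual cell decomposition $\mathcal{K}^{*}$ is a regular CW complex homotopy equivalent to $\mathcal{K}$, with one dual $d$-cell for every original $(D-d)$-simplex. In particular $\mathcal{K}^{*}$ is connected, and its 1-skeleton is precisely the dual graph $\Gamma(\mathcal{K})$ of Section~\ref{sub:manifold-approx}: vertices are $D$-simplices and edges are $(D-1)$-simplices of $\mathcal{K}$.

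Under this duality, each matched pair $\langle \sigma^{D-2},\tau^{D-1}\rangle$ on the $(D-1)$-interface of the Hasse graph of $\mathcal{K}$ corresponds to a matched pair between a dual 2-cell and one of its boundary dual 1-cells in $\mathcal{K}^{*}$. Because the Hasse diagram of $\mathcal{K}^{*}$ is simply the Hasse diagram of $\mathcal{K}$ with the direction of facet inclusion reversed, the matching-induced reorientation on $\mathcal{K}$ is acyclic if and only if the corresponding reorientation on $\mathcal{K}^{*}$ is acyclic. We thus obtain an acyclic discrete Morse matching on the $(1,2)$-interface of $\mathcal{K}^{*}$.

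Applying Forman's theorem to $\mathcal{K}^{*}$ with this matching yields a CW complex $\mathcal{K}^{*}_{\mathrm{red}}$, homotopy equivalent to $\mathcal{K}^{*}$, whose cells are the unmatched (critical) cells of $\mathcal{K}^{*}$. Since the matching only acts in dimensions 1 and 2, $\mathcal{K}^{*}_{\mathrm{red}}$ retains every dual 0-cell (i.e. every $D$-simplex of $\mathcal{K}$) and exactly the unmatched dual 1-cells (i.e. exactly the unmatched $(D-1)$-simplices), with the 1-cell attaching maps inherited from $\mathcal{K}^{*}$. Consequently the 1-skeleton of $\mathcal{K}^{*}_{\mathrm{red}}$ is exactly the $D$-interface with the matched $(D-1)$-simplices removed.

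The conclusion then follows from two elementary facts: the homotopy equivalence $\mathcal{K}^{*}\simeq\mathcal{K}^{*}_{\mathrm{red}}$ preserves $\pi_{0}$, and a CW complex is connected if and only if its 1-skeleton is connected. The main bookkeeping to verify carefully is the translation of the matching from $\mathcal{K}$ to $\mathcal{K}^{*}$ — namely, that matched pairs of dimensions $(D-2,D-1)$ become matched pairs of dimensions $(2,1)$ with the dual 1-cell genuinely on the boundary of the dual 2-cell, and that acyclicity survives the reversal of the Hasse diagram. Both are straightforward consequences of Hasse-diagram duality, but they are the crux of making Forman's theorem applicable in the required way.
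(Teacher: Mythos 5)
Your argument is essentially correct, but it takes a genuinely different route from the paper: the paper offers no proof of Lemma~\ref{lem:dconnected} at all, deferring instead to the graph-theoretic arguments of~\cite{JP06} and Appendix~A of~\cite{BLPS13}. Those proofs work combinatorially in the dual graph: assuming the deletion of the matched $(D-1)$-simplices disconnects it, one takes an edge cut contained in the set of matched edges, uses the fact that the boundary of each dual $2$-cell (the circle link of a $(D-2)$-simplex) must meet the cut at least twice, and chains these crossings into a closed $V$-path, contradicting acyclicity. Your dualize-and-apply-Forman argument is conceptually cleaner and makes transparent exactly where acyclicity is used (it is the hypothesis of Forman's collapse theorem), at the price of heavier machinery. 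Two points need slightly more care than you give them. First, for a simplicial closed manifold that is not combinatorial, the higher-dimensional dual blocks need not be cells, so you should apply Forman's theorem only to the $2$-skeleton of the dual block decomposition; this is a genuine regular CW complex because links of $(D-1)$- and $(D-2)$-simplices in a closed manifold are $S^{0}$ and $S^{1}$, it is connected because its $1$-skeleton is the dual graph (Lemma~\ref{lem:manconnected}), and the $\pi_{0}$ argument needs nothing more. Second, the assertion that the critical $1$-cells of $\mathcal{K}^{*}_{\mathrm{red}}$ retain their original attaching maps is not automatic from Forman's theorem --- in general the collapsed complex's attaching maps are pushed down along gradient paths --- but it holds here because every dual $0$-cell is critical, so all such gradient paths are trivial; this deserves an explicit sentence. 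With those two repairs the proof is complete and stands as a valid, self-contained alternative to the cited graph-theoretic ones.
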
 

To design the vector field for $1$-interface one may use DFS in
the following way. Pick an arbitrary vertex $s$ as the start vertex
and mark it critical. Then invoke the procedure DFS($s$):

procedure \textbf{DFSoptimal}($v$,$\mathcal{G}$)
\begin{enumerate}
\item Mark $v$ as visited
\item If there exists an edge $\left\langle v,w\right\rangle $ such that
$w$ is not visited then 

\begin{enumerate}
\item match $\left\langle w,\left\langle v,w\right\rangle \right\rangle $ 
\end{enumerate}
\item DFS($w$)
\end{enumerate}
\begin{lem} There exist simple linear time algorithms to compute optimal Morse Matchings for $1$-interface and $D$-interface of $D$-dimensional manifolds.
\end{lem}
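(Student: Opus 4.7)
The plan is to show that the procedure \textbf{DFSoptimal} produces optimal acyclic matchings in linear time, when applied directly to the $1$-interface and when applied to the dual graph $\Gamma(\mathcal{K})$ for the $D$-interface. The linear running time is immediate in both cases: \textbf{DFSoptimal} is essentially depth-first search with a constant amount of bookkeeping per edge, and so runs in $O(|V|+|E|)$, which for the $1$-interface is linear in the number of $0$- and $1$-simplices of $\mathcal{K}$, and for the $D$-interface (via $\Gamma(\mathcal{K})$) is linear in the number of $D$- and $(D-1)$-simplices.

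For the $1$-interface, I would first characterize admissible matchings. A matching-based acyclic reorientation chooses a set $M$ of $1$-simplices, reverses each $\langle v,e\rangle\in M$ to an up-edge $v\to e$, and leaves every other facet-edge as a down-edge. A directed cycle in the resulting oriented $1$-interface must alternate up-edges and down-edges, and therefore projects to a closed walk in the $1$-skeleton whose consecutive edges all lie in $M$. The translation is reversible: any cycle in the $1$-skeleton composed entirely of matched edges induces a Hasse cycle. Consequently $M$ gives an acyclic matching if and only if its underlying edge set is a forest in the $1$-skeleton, and the maximum number of matched pairs is $|V|-c$, where $c$ is the number of connected components. \textbf{DFSoptimal} produces exactly a spanning forest, matching each non-root vertex with the tree-edge through which it was discovered, and leaves one critical $0$-simplex per component; hence it is optimal.

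For the $D$-interface, I would run an analogous argument on the dual graph $\Gamma(\mathcal{K})$, exploiting the manifold-without-boundary hypothesis that every $(D-1)$-simplex lies in exactly two $D$-simplices (so it is a genuine edge of $\Gamma$). A matching at the $D$-interface assigns each chosen $(D-1)$-simplex to one of its two incident $D$-simplices. The same alternating-cycle correspondence shows that acyclicity in $\overline{\mathcal{H_K}}$ is equivalent to the selected $(D-1)$-simplices forming a forest in $\Gamma(\mathcal{K})$. By Lemma~\ref{lem:dconnected}, $\Gamma(\mathcal{K})$ remains connected after the $(D-1)$-interface stage has removed the already-matched $(D-1)$-simplices, so the maximum acyclic matching at the $D$-interface contains $|V(\Gamma)|-1$ pairs. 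Running \textbf{DFSoptimal} on $\Gamma(\mathcal{K})$ yields a spanning tree that matches every non-root $D$-simplex with its parent edge, attaining this bound and leaving a single critical $D$-simplex.

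The main obstacle will be the cycle-translation step: verifying cleanly that directed cycles in $\overline{\mathcal{H_K}}$ restricted to a single interface correspond precisely to cycles of matched edges in the underlying graph (for the $1$-interface) or in the dual graph (for the $D$-interface). Once this correspondence and its converse are rigorously established, optimality reduces to the standard spanning-forest count, Lemma~\ref{lem:dconnected} supplies the connectivity needed for $\Gamma(\mathcal{K})$, and linearity follows from the cost of DFS.
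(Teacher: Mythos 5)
Your proposal follows essentially the same route as the paper: run \textbf{DFSoptimal} on the $1$-skeleton and on the dual graph $\Gamma(\mathcal{K})$ (whose connectivity is supplied by Lemma~\ref{lem:dconnected}), match each non-root vertex to the edge through which it is discovered, note that tree edges cannot close a directed cycle, and conclude that exactly one critical simplex per component remains. The only (harmless) divergence is in justifying optimality: you derive the lower bound on critical simplices combinatorially from your forest characterization of acyclic interface matchings, whereas the paper simply appeals to the fact that a connected manifold without boundary must have at least one critical $D$-simplex.
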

\begin{proof}
Since the graph is connected, every vertex will be visited. Also, except for the start node, every other node is matched before it is visited. The edges that are matched belong to the DFS search tree and hence do not form a cycle. Therefore, the only critical vertex is the start vertex. Therefore the simple procedure \textbf{DFSoptimal}() can be used to design optimal gradient vector field for the $1$-interface. Note that the direction of gradient flow for the 1-interface will be exactly opposite of the direction of DFS traversal. \\
We can associate a dual graph to the $D$-interface. Also from Lemma~\ref{lem:dconnected}, we know that following the design of Morse matching for the $(D-1)$-interface and deletion of matched $(D-1)$ simplices, the dual graph remains connected. So, once again, we can use the procedure  on the dual graph. Therefore, upon application of DFS algorithm for the $D$-interface, we will have exactly one critical $D$-simplex which is the start vertex for the DFS and all other $D$-simplices are regular. This algorithm is optimal since input complex $\mathcal{K}$ is a manifold without boundary and hence must have at least one critical $D$-simplex.
\end{proof}

\end{appendices}

\bibliographystyle{plain}

\end{document}